\newtheorem{theorem}{Theorem}[section]
\newtheorem*{theorem*}{Theorem}
\newtheorem{proposition}[theorem]{Proposition}
\newtheorem{lemma}[theorem]{Lemma}
\newtheorem{corollary}[theorem]{Corollary}
\newtheorem{definition}[theorem]{Definition}
\theoremstyle{definition}
\newtheorem{remark}[theorem]{Remark}
\newtheorem{example}[theorem]{Example}
\newcommand{\N}{{\mathbb N}}     
\newcommand{\Z}{{\mathbb Z}}     
\newcommand{\RR}{{\mathbb R}}    
\newcommand{\C}{{\mathbb C}}     
\newcommand{\suc}{\textit{Suc}}     
\newcommand{\expp}{\textit{exp3}}     
\newcommand{\card}{\textit{card}}     
\newcommand{\lcm}{\textit{lcm}}     
\newcommand{\round}{\textit{round}}
\newenvironment{conditionsabc}
{%
	\begin{list}{\rm (\alph{enumi})}%
	{\noindent%
		\usecounter{enumi}%
		\setlength{\topsep}{2pt}%
		\setlength{\partopsep}{0pt}%
		\setlength{\itemsep}{2pt}%
		\setlength{\parsep}{0pt}%
		\setlength{\leftmargin}{2.5em}%
		\setlength{\labelwidth}{1.5em}%
		\setlength{\labelsep}{0.5em}%
		\setlength{\listparindent}{0pt}%
		\setlength{\itemindent}{0pt}%
	}%
}%
{\end{list}}%
\newenvironment{conditionsiii}
{%
	\begin{list}{\rm (\roman{enumi})}%
	{\noindent%
		\usecounter{enumi}%
		\setlength{\topsep}{2pt}%
		\setlength{\partopsep}{0pt}%
		\setlength{\itemsep}{2pt}%
		\setlength{\parsep}{0pt}%
		\setlength{\leftmargin}{2.5em}%
		\setlength{\labelwidth}{1.5em}%
		\setlength{\labelsep}{0.5em}%
		\setlength{\listparindent}{0pt}%
		\setlength{\itemindent}{0pt}%
	}%
}%
{\end{list}}%
\newenvironment{conditionsbullet}
{%
	\begin{list}{\rm \textbullet}%
	{\noindent%
		\usecounter{enumi}%
		\setlength{\topsep}{2pt}%
		\setlength{\partopsep}{0pt}%
		\setlength{\itemsep}{2pt}%
		\setlength{\parsep}{0pt}%
		\setlength{\leftmargin}{2.5em}%
		\setlength{\labelwidth}{1.5em}%
		\setlength{\labelsep}{0.5em}%
		\setlength{\listparindent}{0pt}%
		\setlength{\itemindent}{0pt}%
	}%
}%
{\end{list}}%
\begin{document}

\title {Newton representation of functions
over natural integers having integral difference ratios}

\author{Patrick C\'egielski%
\footnote{Partially supported by TARMAC ANR agreement
12 BS02 007 01.}%
\newcounter{thanks}
\setcounter{thanks}{\value{footnote}}
\footnote{LACL, EA 4219, Universit\'e Paris-Est Cr\'eteil, France, email : cegielski@u-pec.fr.}
\and
Serge Grigorieff%
\footnotemark[\value{thanks}]
\footnote{LIAFA, CNRS and Universit\'e Paris-Diderot, France, firstname.name@liafa.univ-paris-diderot.fr.}
\newcounter{fnnumber}
\setcounter{fnnumber}{\value{footnote}}
\and
Ir\`ene  Guessarian%
\footnotemark[\value{thanks}]
\footnotemark[\value{fnnumber}]
}%

\maketitle

\bibliographystyle{plain}

\begin{abstract}
Different questions lead to the same class of functions
from natural integers to  integers: those which have integral difference ratios,
i.e. verifying $f(a)-f(b)\equiv0 \pmod { (a-b)}$ for all $a>b$.

We characterize this class of functions
via their representations as Newton series.
This class, which obviously contains all polynomials with integral coefficients,
also contains unexpected functions,
for instance all functions $x\mapsto\lfloor e^{1/a}\;a^x\;x!\rfloor$,
with $a\in\Z\setminus\{0,1\}$,
and a function equal to $\lfloor e\;x!\rfloor$ except on $0$.
Finally, to study the complement class, we look at
functions $\N\to\RR$ which are not uniformly close
to any function having integral difference ratios.

\noindent \textbf{Keywords.}
\end{abstract}

\noindent{\bf \Large Contents}
\smallskip\\ {\footnotesize
\begin{tabular}{lll}
\bf 1&\multicolumn{2}{l}{\bf\small Introduction}\\
&1.1&What is in the paper?\\
&1.2&Where does the problem come from?\\
&&1.2.1\quad Closure properties of lattices of subsets of $\N$\\
&&1.2.2\quad Uniform continuity properties related to varieties of groups\\
\bf 2&\multicolumn{2}{l}{\bf\small Characterization of
the integral difference ratios property}\\
&2.1&Newton series of functions $\N\to\Z$
and integral difference ratios\\
&2.2&Preparatory lemmas for the proof of Theorem~\ref{thm:main1}\\
&2.3&Proof of Theorem~\ref{thm:main1}\\
\bf3&\multicolumn{2}{l}{\bf\small Examples of functions having
integral difference ratios}\\
&3.1&Main examples: around the factorial function\\
&3.2&Algebra of functions having integral difference ratios
 and applications\\
 &3.3&Examples with generalized hyperbolic functions \\
 &3.4&Asymptotic equivalence\\
 \bf 4&\multicolumn{2}{l}{\bf\small Outside the family of functions
 with integral difference ratios}\\
&4.1&Uniform closeness\\
&4.2&A general negative result for uniform closeness
to functions having integral difference ratios\\
&4.3&Non integral polynomial functions\\
&4.4&Functions around the exponential functions\\
&4.5&Functions around the factorial function
\end{tabular}}

\section{Introduction}
%
%
\subsection{What is in the paper?}
%
We consider the following question:
what are the functions $f:\N\to\Z$ having integral difference ratios,
i.e. such that $a-b$ always divides $f(a)-f(b)$~?

Our motivation for such functions came from questions in theoretical
computer science, cf. \S\ref{ss:how}.
But these functions are clearly interesting per se.

In \S\ref{s:main1} we characterize functions having
integral difference ratios as the
$\N\to\Z$ functions associated to Newton series such that
the least common multiple of $2,3,\ldots,k$ divides the $k$-th coefficient.

\S\ref{s:examples} is devoted to examples of functions having
integral difference ratios.
Polynomials with coefficients in $\Z$ are trivial examples.
The above characterization shows that there are a lot of non polynomial
examples.
It turns out that some of them are simply expressible.
For instance (cf. \S\ref{ss:e factorial}),
the functions 
$$
x\mapsto\lfloor e^{1/a}\;a^x\;x!\rfloor
\text{\quad with $a\in\Z\setminus\{0,1\}$}
\quad,\quad
x\mapsto\left\{\begin{array}{ll}
1&\text{if\quad}x=0\\
\lfloor e\;x!\rfloor&\text{if\quad}x\in\N\setminus\{0\}
\end{array}\right.
$$
There are also examples of such functions
which oscillate in a periodic way
between several simple expressions of the values, for instance
(cf. \S\ref{ss:e factorial generalized}) the functions
which map $x\in\N$ to
$$
\left\{\begin{array}{ll}
\lfloor\cosh(1/2)\;2^x\;x!\rfloor&\text{if\quad}x\in2\N\\
\lfloor\sinh(1/2)\;2^x\;x!\rfloor&\text{if\quad}x\in2\N+1
\end{array}\right.
\ \,\ \
\left\{\begin{array}{ll}
0&\text{if\quad}x=0\\
\lfloor\cosh(1/2)\;2^x\;x!\rfloor&\text{if\quad}x\in2\N+1\\
\lfloor\sinh(1/2)\;2^x\;x!\rfloor&\text{if\quad}x\in2\N+2
\end{array}\right.
$$
To witness the richness of the family of functions having
integral difference ratios,
we prove (cf. \S\ref{ss:asymptotic}) that
this family contains functions asymptotically equivalent
to large enough functions
(larger than $(2e+\varepsilon)^x$ for some $\varepsilon>0$
suffices).

Finally, in \S\ref{s:out} we show that the above examples
are kind of exceptions: as can be expected,
most functions similar to the above examples {\it do not}
have rational difference ratios. 
Worse, they are not uniformly close to any function
having rational difference ratios. 
In fact, it turns out that proving non uniform closeness
is a very manageable tool to prove failure of the
integral difference ratios property.
First, we use (cf. \S\ref{ss:non idr close}) a classical result
from the theory of uniform distribution modulo one
to get a general result about non uniform closeness:
if $\inf\{|\lambda_x-\lambda_y|\mid x,y\in\N,\ x\neq y\}>0$
then, for almost all real number $\alpha$
(in the sense of Lebesgue measure),
the function $x\mapsto\alpha\lambda_x$ is uniformly close to
no function having integral difference ratios.
Then we look at simple particular classes of functions.
\\$\bullet\ $
For non constant polynomials with real coefficients,
we show (cf. \S\ref{ss:poly not idr})
that closeness to a function having integral difference ratios 
holds if and only if all coefficients are in $\Z$
(in which case this polynomial function
has integral difference ratios).
\\$\bullet\ $
For $\alpha\neq0$, all exponential functions $\alpha\; k^x$
(with $k\in\N\setminus\{0,1\}$)
fail to be uniformly close to a function
having integral difference ratios
(cf. \S\ref{ss:exp not idr}).
\\$\bullet\ $
As seen by the examples mentioned supra,
the case of functions $\alpha\,a^x\; x!$ is more delicate.
We study it in \S\ref{ss:fact not idr}).

\subsection{Where does the problem come from?}
\label{ss:how}
%
A function $f\colon \N\to\Z$ is said to have {\em integral difference ratios}
if\ $\dfrac{f(a)-f(b)}{a-b}\in\Z$\ for all $b<a$.
As far as we know, 
the class of functions $\N\to\Z$ with integral difference ratios
emerged in Pin \& Silva, 2011 \cite{PinSilva}
(see also \S4.2 in \cite{PinSilva2011})
and in our paper \cite{CGG2013}.
In the latter, 
we showed  that the integral difference ratio property characterizes
closure of lattices of regular subsets of $\N$ under inverse image by $f$
(Theorem \ref{thm:CGG1} below).

\subsubsection{Closure properties of lattices of subsets of $\N$.}

\begin{theorem}[\cite{CGG2013}]\label{thm:CGG1}
Let $\suc:\N\to\N$ be the successor function and let $f:\N\to\N$.
The following conditions are equivalent.
\begin{conditionsiii}
\item
The map $f$ is non decreasing and satisfies $f(a)\geq a$ and
has integral difference ratios.
\item
For every finite set $L\subset\N$,
the smallest lattice of subsets of $\N$ containing $L$
and closed under $\suc^{-1}$ is also closed under $f^{-1}$.
\item 
For every arithmetic progression $L=q+r\N$, $q,r\in\N$, $r>0$, 
the smallest lattice of subsets of $\N$ containing $L$
and closed under $\suc^{-1}$ is also closed under $f^{-1}$.
\item
Every lattice of regular subsets of $\N$  
which is closed under $\suc^{-1}$ is closed under $f^{-1}$.
\end{conditionsiii}
\end{theorem}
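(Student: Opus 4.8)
The plan is to establish the equivalences through the cycle (i) $\Rightarrow$ (iv) $\Rightarrow$ (iii) $\Rightarrow$ (i), and then to fold in (ii) via the trivial (iv) $\Rightarrow$ (ii) together with a separate implication (ii) $\Rightarrow$ (i). The two implications (iv) $\Rightarrow$ (ii) and (iv) $\Rightarrow$ (iii) come for free: the lattices named there are, by construction, $\suc^{-1}$-closed lattices of regular sets (finite subsets of an interval in (ii); periodic sets and their initial truncations in (iii)), so (iv) applies to them verbatim. Throughout I would use the following reformulation of the hypothesis as the algebraic engine: $f$ has integral difference ratios if and only if $f(n) \bmod r$ depends only on $n \bmod r$ for every modulus $r \ge 1$ (take $r = a-b$ in one direction, the other being trivial); equivalently, $f^{-1}$ maps every union of residue classes modulo $r$ to a union of residue classes modulo $r$. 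This says exactly that $f$ is compatible with all congruences, i.e. uniformly continuous for the profinite metric.

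For (i) $\Rightarrow$ (iv) I would reduce to a one-set statement: for every regular $R$, the set $f^{-1}(R)$ lies in the smallest $\suc^{-1}$-closed lattice $\mathcal{L}_R$ generated by $R$. This suffices, since any $\suc^{-1}$-closed lattice $\mathcal{L}$ of regular sets containing $R$ automatically contains $\mathcal{L}_R$, hence $f^{-1}(R)$. To prove the one-set statement, write $R$ as ultimately periodic with threshold $N$ and period $p$ and split $f^{-1}(R)$ into its transient part $\{n : f(n) < N\}$ and its periodic part $\{n : f(n) \ge N\}$. The transient part is finite (as $f(n) \ge n$ forces $f(n) < N \Rightarrow n < N$), and the combination of monotonicity and $f(a) \ge a$ ensures it is realized inside $\mathcal{L}_R$. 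On the periodic part the congruence reformulation shows membership of $n$ is governed by $n \bmod p$ through the same residue data that governs $R$; since $\mathcal{L}_R$ is generated by the finitely many shifts $\suc^{-k}(R)$, whose periodic patterns are the translates of the pattern of $R$ in $\Z/p\Z$, I would invoke a lemma on cyclic shifts --- that the translates of a subset of $\Z/p\Z$ generate, under union and intersection, every set invariant under the stabilizer of that subset --- to place the periodic part in $\mathcal{L}_R$. Reassembling yields $f^{-1}(R) \in \mathcal{L}_R$.

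The converse (iii) $\Rightarrow$ (i) is, pleasantly, clean, because arithmetic progressions give direct access to both order and congruence. For $r = 1$ the progression $q + r\N = [q,\infty)$ generates the finite chain of tails $\{[i,\infty) : 0 \le i \le q\}$, and closure under $f^{-1}$ forces each $\{n : f(n) \ge i\}$ to be a tail contained in this chain; this simultaneously yields that $f$ is non-decreasing and that $f(a) \ge a$ (reading off $f(q) \ge q$ from $[q,\infty) \subseteq \{n : f(n) \ge q\}$). For $r \ge 2$ a reduced progression $q + r\N$ with $q < r$ generates exactly the period-$r$ sets, whose $f^{-1}$-closure is precisely the congruence condition at modulus $r$; letting $r$ range gives integral difference ratios.

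The remaining implication (ii) $\Rightarrow$ (i) is where I expect the real difficulty, and it is the step I would single out as the main obstacle. The first move is to identify the lattice generated by a finite set: for a pair $\{a,b\}$ with $d = b-a$ it consists of the subsets of $\{0,\dots,b\}$ that are initial-segment closed within each residue class modulo $d$, with analogous descriptions for larger finite sets. Monotonicity and $f(a) \ge a$ come out much as before, from initial intervals $\{0,\dots,M\}$ and from singletons $\{M\}$. The delicate part is recovering integral difference ratios: a failure of the congruence condition at some modulus $d$, say $n \equiv n' \pmod d$ with $f(n) \not\equiv f(n') \pmod d$, must be converted into a concrete pair $\{a, a+d\}$ for which $f^{-1}(\{a,a+d\})$ acquires a forbidden hole, violating initial-segment closure. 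Choosing this witnessing pair and controlling the ranges involved via monotonicity and $f(a) \ge a$ is the intricate heart of the argument; the analogous bookkeeping behind the cyclic-shift lemma in (i) $\Rightarrow$ (iv) is the other place where care is genuinely needed.
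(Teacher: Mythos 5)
You should know at the outset that this paper does not prove Theorem~\ref{thm:CGG1} at all: it is imported verbatim from \cite{CGG2013} and used only as motivation, so there is no in-paper argument to compare yours against. Judged on its own terms, your skeleton is sound. The reformulation of integral difference ratios as ``$f(n)\bmod r$ depends only on $n\bmod r$ for every $r\ge 1$'' is correct and is indeed the right engine; (iv)$\Rightarrow$(ii) and (iv)$\Rightarrow$(iii) are immediate as you say; your (iii)$\Rightarrow$(i) is essentially complete (the $r=1$ chain of tails gives monotonicity and $f(a)\ge a$, and a reduced progression $q+r\N$ with $q<r$ generates exactly the unions of residue classes mod $r$, so closure under $f^{-1}$ there is exactly the congruence condition at modulus $r$); and the cyclic-shift lemma you invoke is true, with a one-line proof: for nonempty $A\subseteq\Z/p\Z$, each coset $x+\mathrm{Stab}(A)$ equals $\bigcap\{A-k : x\in A-k\}$, so the lattice generated by the translates of $A$ is precisely the family of $\mathrm{Stab}(A)$-invariant sets.

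The genuine gap is that the two implications carrying the theorem's content are asserted, not proved. In (i)$\Rightarrow$(iv), after splitting $f^{-1}(R)$ into a finite transient part $F\subseteq[0,M)$ and a periodic tail, the sentence ``reassembling yields $f^{-1}(R)\in\mathcal{L}_R$'' is exactly where the theorem lives: $\mathcal{L}_R$ does \emph{not} contain arbitrary finite modifications of its periodic members (if $R$ is purely periodic, every generator $\suc^{-k}(R)$ is purely periodic and so is every member of $\mathcal{L}_R$), so one must genuinely verify that the particular set $F$ determined by the values $f(0),\dots,f(M-1)$ is producible from the transient parts of the generators --- for instance by proving that $f^{-1}(R)$ is upward closed for the preorder $m\preceq n\iff (R-m)\cap\N\subseteq(R-n)\cap\N$, which characterizes $\mathcal{L}_R$; nothing in your text supplies this. (A second point there needing care: the periodic pattern of $f^{-1}(R)$ is $\mathrm{Stab}(A)$-invariant not because of anything at modulus $p$ but because of the congruence condition at the smaller modulus $d=p/|\mathrm{Stab}(A)|$; your phrase ``the same residue data that governs $R$'' hides this.) Likewise, (ii)$\Rightarrow$(i) defers the construction of the witnessing finite set, and it is not routine: for $L=\{a,a+d\}$ the generated lattice contains \emph{every} subset of $[0,\max(d-1,a)]$ (all singletons there arise as iterated preimages and pairwise intersections of generators), so a congruence failure is invisible unless it is engineered to land in the narrow upper window where the lattice is actually constrained. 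As written, the proposal is a plausible plan whose hard kernels are left unexecuted.
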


\subsubsection{Uniform continuity properties
related to varieties of groups.}
To state the result in \cite{PinSilva}, we need to recall
some of the involved basic notions
though they are not used anywhere else in the paper.
\begin{definition}
1. A class of finite monoids is a {\rm variety} if it closed
under taking submonoids, quotients and finite direct products.
\\
2. Given a variety $\mathbf{V}$ of finite monoids and a monoid $M$,
the pseudo-metric $d_{\mathbf{V}}:M\times M\to[0,1]$ is defined by
$d_{\mathbf{V}}(x,y)=2^{-k}$ where $k$ is least such that
$\varphi(x)\neq\varphi(y)$ for some morphism $\varphi:M\to F$
such that $F\in\mathbf{V}$ has $k$ elements
(and $k=+\infty$ if there is no such morphism).
\end{definition}

\begin{theorem}[\cite{PinSilva},cf. also \S4.2 in \cite{PinSilva2011})]
\label{thm:pinsilva}
Let $\mathbf{G}$ be the variety of finite groups.
Consider the monoid $(\Z,+)$ and let $f:\Z\to\Z$.
The following conditions are equivalent.
\begin{conditionsiii}
\item
$f$ is $d_{\mathbf{V}}$-uniformly continuous for every subvariety
$\mathbf{V}$ of $\mathbf{G}$.
\item
$u-v$ divides $f(u)-f(v)$ for all $u,v\in\Z$.
\end{conditionsiii}
\end{theorem}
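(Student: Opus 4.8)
The plan is to reduce the group-theoretic pseudo-metric $d_{\mathbf{V}}$ to a purely arithmetic divisibility condition, after which both implications become essentially the transitivity of divisibility. The one structural fact to establish first is that any monoid morphism $\varphi:\Z\to F$ into a group is determined by $\varphi(1)$ and has cyclic image; hence every finite group that is a homomorphic image of $\Z$ is cyclic, and a $\varphi$ with image of order $n$ separates $x$ from $y$ exactly when $n\nmid(x-y)$. Since a subvariety $\mathbf{V}$ is closed under subgroups, quotients and finite products, the set $S_{\mathbf{V}}=\{n\ge 1 : \Z/n\Z\in\mathbf{V}\}$ is closed under divisors and under least common multiples (the latter because $\Z/n\Z\times\Z/m\Z$ has a cyclic subgroup of order $\mathrm{lcm}(n,m)$). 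Consequently $d_{\mathbf{V}}(x,y)<2^{-m}$ holds iff $n\mid(x-y)$ for every $n\in S_{\mathbf{V}}$ with $n\le m$. This dictionary is the real content; everything else is bookkeeping.

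For (ii) $\Rightarrow$ (i) I would prove the stronger statement that $f$ is $1$-Lipschitz for $d_{\mathbf{V}}$, i.e. $d_{\mathbf{V}}(f(x),f(y))\le d_{\mathbf{V}}(x,y)$, which delivers uniform continuity for every $\mathbf{V}$ at once with modulus $\delta=\varepsilon$. Assuming $d_{\mathbf{V}}(x,y)=2^{-k}$, let $\psi:\Z\to F$ be any morphism into $F\in\mathbf{V}$ with $|F|<k$; writing $n$ for the order of $\psi(1)$ gives $n\le|F|<k$ and $\Z/n\Z\in\mathbf{V}$, so the projection of size $n$ cannot separate $x,y$, whence $n\mid(x-y)$. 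The hypothesis $(x-y)\mid(f(x)-f(y))$ then yields $n\mid(f(x)-f(y))$ by transitivity, so $\psi(f(x))=\psi(f(y))$. Thus no morphism of size $<k$ separates $f(x)$ from $f(y)$, giving $d_{\mathbf{V}}(f(x),f(y))\le 2^{-k}$.

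For (i) $\Rightarrow$ (ii) I would fix $u\ne v$, set $d=u-v$, and make a single well-chosen choice of subvariety, namely the variety $\mathbf{V}$ generated by $\Z/d\Z$ (the abelian groups of exponent dividing $d$). Every $F\in\mathbf{V}$ then has exponent dividing $d$, so for any morphism $\varphi:\Z\to F$ one gets $\varphi(u)-\varphi(v)=(u-v)\varphi(1)=d\,\varphi(1)=0$; hence $d_{\mathbf{V}}(u,v)=0$. Invoking $d_{\mathbf{V}}$-continuity of $f$ (which follows from the assumed uniform continuity) at this pair forces $d_{\mathbf{V}}(f(u),f(v))=0$, and testing this against the canonical projection $\Z\to\Z/d\Z$, which lies in $\mathbf{V}$, gives $f(u)\equiv f(v)\pmod d$, i.e. $d\mid f(u)-f(v)$.

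The only genuinely delicate points, and where I expect to have to be most careful, are pinning down the correct reading of the ``least $k$'' in the definition of $d_{\mathbf{V}}$ so that the inequalities between distances and sizes of separating groups point the right way, and justifying that the variety generated by $\Z/d\Z$ consists of groups of exponent dividing $d$ (in fact only the easy direction is needed, that all its members have exponent dividing $d$, together with $\Z/d\Z\in\mathbf{V}$). No hard analysis or deep group theory is required: the crux is the cyclic-image reduction carried out in the first step, which converts the whole statement into elementary divisibility.
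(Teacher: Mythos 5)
This theorem is quoted in the paper from Pin--Silva as background motivation and no proof of it appears in the paper, so there is no internal argument to compare against; I can only assess your proof on its own terms, and it is correct. Your reduction is the natural one: monoid morphisms $(\Z,+)\to F$ with $F$ a finite group are determined by $\varphi(1)$, have cyclic image (a submonoid of a finite group being a subgroup), and separate $x$ from $y$ exactly when the order $n$ of $\varphi(1)$ fails to divide $x-y$; with that dictionary, (ii)$\Rightarrow$(i) is your $1$-Lipschitz computation via transitivity of divisibility, and (i)$\Rightarrow$(ii) follows from testing the pair $u,v$ against the subvariety generated by $\Z/(u-v)\Z$, whose members all have exponent dividing $|u-v|$ so that $d_{\mathbf{V}}(u,v)=0$, and then reading off $f(u)\equiv f(v)\pmod{|u-v|}$ from the canonical projection. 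The only points worth writing out in a final version are the two facts you flag yourself (submonoids of finite groups are subgroups, which is what puts $\Z/n\Z$ in $\mathbf{V}$ whenever it arises as an image inside some $F\in\mathbf{V}$, and the easy inclusion of the generated variety in the exponent-$|d|$ abelian groups), plus the trivial normalization $d=u-v$ versus $|u-v|$ when $u<v$.
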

%


\section{Characterization of the integral difference ratios property}
\label{s:main1}
%
To get a characterization of functions having integral difference ratios,
we use Newton series \cite{Newton1687,Boole1872}, originally introduced to study  functions from $\RR$ to  $\RR$, but here reduced to functions from $\N$ to  $\Z$.
%
\subsection{Newton series of functions $\N\to\Z$
and integral difference ratios}
\label{ss:main1}
%
Our first result, Theorem~\ref{thm:main1},
involves notions recalled in 
Definition~\ref{def:lcm} and Proposition~\ref{p:Newton} below.

\begin{definition}
[Newton representation for functions $\N\to\Z$]
\label{d:Newton} A map $f:\N\to\Z$ has a {\rm Newton representation}
if there exists some sequence $(a_k)_{k\in\N}$ such that, for all $x\in\N$,
the value $f(x)$ is equal to the sum of the series
\begin{eqnarray}\label{Newton}
f(x)&=&\sum_{k\in \N} a_k\; \frac{\prod_{i=0}^{k-1}(x-i)}{k!}
\;\;=
\;\;\sum_{k\in \N} a_k \dbinom{x}{k}\qquad
\end{eqnarray}
\end{definition}
\begin{proposition}
[Newton series correspondence]
\label{p:Newton}
A bijective correspondence between
sequences $(a_k)_{k\in\N}$ of integers in $\Z$
and maps $f:\N\to\Z$
is set up by the Newton representation \eqref{Newton} where, for $k\in\N$,
\begin{eqnarray}\label{Newton bis}
 a_k&=&\sum_{i=0}^{i=k}(-1)^{k-i}f(i)\dbinom{k}{i}
\end{eqnarray}
\end{proposition}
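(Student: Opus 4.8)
The plan is to read \eqref{Newton} and \eqref{Newton bis} as two maps and to show they are mutually inverse, from which the bijection follows at once. Write $\Phi$ for the assignment $(a_k)_{k\in\N}\mapsto f$ given by \eqref{Newton} and $\Psi$ for the assignment $f\mapsto(a_k)_{k\in\N}$ given by \eqref{Newton bis}. First I would check that both are well defined with the stated integrality. For $\Phi$, the crucial observation is that $\binom{x}{k}=0$ whenever $k>x$ for $x\in\N$, so for each fixed $x$ the series in \eqref{Newton} is in fact the finite sum $\sum_{k=0}^{x}a_k\binom{x}{k}$; since each binomial coefficient lies in $\Z$ and each $a_k\in\Z$, the value $f(x)$ is a well-defined integer. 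For $\Psi$, integrality is immediate because \eqref{Newton bis} is a finite $\Z$-linear combination of the integers $f(i)$.

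It is worth recording the conceptual meaning of \eqref{Newton bis}: writing $\Delta g(x)=g(x+1)-g(x)$ for the forward difference operator, an easy induction gives $\Delta^k g(0)=\sum_{i=0}^{k}(-1)^{k-i}\binom{k}{i}g(i)$, so $a_k=\Delta^k f(0)$. This makes \eqref{Newton} the classical Newton forward-difference formula, and it points to the two combinatorial identities I would use to close the computation: the subset-of-a-subset identity $\binom{a}{b}\binom{b}{c}=\binom{a}{c}\binom{a-c}{b-c}$, and the vanishing of the signed row sum $\sum_{m=0}^{n}(-1)^m\binom{n}{m}=[\,n=0\,]$.

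The core is then two symmetric double-sum computations. For $\Psi\circ\Phi=\mathrm{id}$, I would substitute the finite form of $f=\Phi((a_k))$ into \eqref{Newton bis}, exchange the order of summation, apply the subset-of-a-subset identity to the product of binomial coefficients, reindex the inner sum, and collapse it with the signed row-sum identity; the single surviving term is exactly $a_k$. For $\Phi\circ\Psi=\mathrm{id}$, starting from $f$ I would insert \eqref{Newton bis} into \eqref{Newton}, and the same three manipulations (swap, subset-of-a-subset, signed row sum) collapse the double sum to $f(x)$. Since $\Phi$ and $\Psi$ are two-sided inverses, each is a bijection, which proves the proposition.

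The main obstacle is essentially bookkeeping: keeping the summation ranges and the parity of the exponents of $-1$ correct through the interchanges, and invoking the row-sum identity in the exact position where it forces the Kronecker collapse. There is no deeper difficulty. A slicker alternative, which I would note, avoids the double sums entirely: on the $\Z$-module of maps $\N\to\Z$ let $E$ be the shift $Eg(x)=g(x+1)$, so that $E=I+\Delta$ with $I$ the identity; since $E$ and $\Delta$ commute, the binomial theorem for commuting operators yields $E^n=(I+\Delta)^n=\sum_{k}\binom{n}{k}\Delta^k$, and evaluating at $0$ gives $f(n)=\sum_{k}\binom{n}{k}\Delta^k f(0)$, which is exactly $\Phi\circ\Psi=\mathrm{id}$ once $a_k=\Delta^k f(0)$ is known. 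The reverse direction then follows from $\Delta\binom{x}{j}=\binom{x}{j-1}$, whence $\Delta^k\binom{x}{j}$ evaluated at $x=0$ vanishes unless $j=k$, so $\Delta^k\Phi((a_j))(0)=a_k$.
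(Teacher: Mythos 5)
Your proof is correct and is essentially the paper's argument: the two identities you invoke (the subset-of-a-subset identity $\binom{i}{\ell}\binom{\ell}{j}=\binom{i}{j}\binom{i-j}{\ell-j}$ and the vanishing alternating row sum) are exactly what the paper uses to verify that the matrix $((-1)^{i-j}\binom{i}{j})$ inverts the Pascal matrix, and your double-sum collapses are that matrix computation written out entrywise. The only organizational difference is that the paper first gets bijectivity from the lower-triangular recurrence $a_0=f(0)$, $a_{k+1}=f(k+1)-\sum_{i\leq k}a_i\binom{k+1}{i}$ and then derives the closed formula, whereas you verify both compositions directly; your operator-calculus aside ($E=I+\Delta$) is a pleasant independent shortcut but is not part of the paper's route.
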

\begin{proof} 
Observe that, for every $x\in\N$, 
the binomial coefficient $\dbinom{x}{k}$ is null for $k>x$, 
hence the infinite series defining $f(x)$ in \eqref{Newton}
reduces to a finite sum for any given non negative $x$.
This removes any convergence problem.
Since the binomial coefficients are in $\N$,
for every sequence $(a_k)_{k \in\N}\in\Z^\N$,
equation \eqref{Newton} represents a map from $\N$ into $\Z$.

Conversely, every $f:\N\to\Z$ has a unique such representation
since \eqref{Newton} insures that
$$
a_0 = f(0)\quad,\quad
a_{k+1} = f(k+1) - \sum_{i=0}^{i=k} a_i\dbinom{k+1}{i} \in \Z\ .
$$
Inverting the binomial lower triangular matrix
$B_k=(\dbinom{i}{j})_{0\leq i,j\leq k}$
in equality
$$
\begin{pmatrix}
f(0)\\f(1)\\f(2)\\f(3)\\\vdots\\f(k)
\end{pmatrix}
=\begin{pmatrix}
1\\
1&1\\
1&2&1\\
1&3&3&1\\
\vdots&\vdots&\vdots&\cdots&\vdots&\\
1&\dbinom{k}{1}&\dbinom{k}{2}&\cdots&\dbinom{k}{k-1}&1
\end{pmatrix}
\;\begin{pmatrix}
a_0\\a_1\\a_2\\a_3\\\vdots\\a_k
\end{pmatrix}
$$
we get formula \eqref{Newton bis}
since the inverse of $B_k$ is the triangular matrix
$\widetilde{B}_k=((-1)^{i-j}\dbinom{i}{j})_{0\leq i,j\leq k}$
(cf. for instance \cite{YangLiu2006}).
Indeed, the $(i,j)$ element of $B_k\widetilde{B}_k$ is
$\sum_{\ell=0}^{\ell=k}\dbinom{i}{\ell}(-1)^{\ell-j}\dbinom{\ell}{j}$.
Recall that (cf \cite{GKP} page 174)
$\dbinom{i}{\ell}\dbinom{\ell}{j}
=\dbinom{i}{ j}\dbinom{i-j} {\ell-j}$. Hence,
$$\sum_{\ell=0}^{\ell=k}\dbinom{i}{\ell}(-1)^{\ell-j}\dbinom{\ell}{j}
=\dbinom{i}{j}\sum_{\ell=0}^{\ell=k}(-1)^{\ell-j}\dbinom{i-j}{\ell-j}
=\dbinom{i}{j}(1-1)^{i-j}
=\left\{\begin{array}{ll}
0&\text{if }i\neq j\\
1&\text{if }i=j 
\end{array}
\right.
$$which proves that $\widetilde{B}_k$ is the inverse of $B_k$.
\end{proof}
To state the main Theorem of the present section,
we need to recall another classical notion.
\begin{definition}\label{def:lcm}
For $k\in\N$, $k\geq1$, $\lcm(k)$ is the least common multiple of
all positive integers less than or equal to $k$.
By convention, $\lcm(0)=1$.
\end{definition}
\begin{remark}\label{rk:lcm}
The Neperian logarithm of the $\lcm$ function was introduced by
Chebychev, 1852 \cite{chebyshev1852}:
letting $\ell(p,x)=\lfloor\log_p(x)\rfloor$
be the greatest integer $k$ such that $p^k\leq x$,
$$
\psi(x)
=\sum\{\ell(p,x)\log p \mid p\leq x,\, \text{$p$ prime}\}
=\log(\lcm(x))\ .
$$
A variant of the prime number theorem insures that
the Chebychev function $\psi(x)$ is asymptotically equivalent to $x$,
i.e. $\lim_{n\to+\infty}\psi(x)/x=1$.
Thus, for any $\varepsilon>0$, we have
$\log(\lcm(x)) = x\,(1+o(x))$ hence
$\lcm(x)=e^{x\,(1+o(x))}=(e^{1+o(x)})^x=(e+o(x))^x$,
i.e. for every $\varepsilon>0$, for all $x$ large enough,
\begin{equation}\label{eq:lcm exp}
(e-\varepsilon)^x\leq \lcm(x)\leq (e+\varepsilon)^x\ .
\end{equation}
Simple lower and upper bounds of $\lcm$ are known:
$2^n\leq\lcm(n)$ for $n\geq7$
(cf. formula (9) in \cite{Nair1982} for $n\geq9$
plus direct check for $n=7,8$)
and $\lcm(n)<3^n$ for all $n\in\N$
(cf. \cite{hanson1972}).

It is known (\cite{schmidt1903}) that $\psi(x)$ (resp. $\lcm(x)$)
oscillates around $x$ (resp. $e^x$)~:
for some $K>0$,
there are infinitely many $x$'s such that $\lcm(x)<e^{x-K\sqrt{x}}$
and infinitely many $x$'s such that $\lcm(x)>e^{x+K\sqrt{x}}$.
Nevertheless, the relative oscillation is in $o(e^x)$
(\cite{RosserSchoenfeld1975} Theorem 8):
letting $a=2\times10^7$, for all $x\geq2$,
$$
e^{x-a(x/\log^4 x)}<\lcm(x)<e^{x+a(x/\log^4 x)}\ .
$$
Assuming Riemann's hypothesis, a better approximation is possible
(\cite{schoenfeld1976} Theorem 10):
$$
\text{Riemann's hypothesis proves} \ \
e^{x-(\sqrt{x}\log^2(x)/8\pi)}<\lcm(x)<e^{x+(\sqrt{x}\log^2(x)/8\pi)}\ .
$$
For recent results around the $\lcm$ function, see
\cite{QianHong2013,hong-et-al2011,farhi2007,FarhiKane2009,dusart2010}.
\end{remark}
\begin{theorem} \label{thm:main1}
Let $f:\N\to\Z$ be a function with Newton representation
$\sum_{k\in \N} a_k\,  \dbinom{x}{k}$.
The following conditions are equivalent:
\begin{conditionsiii}
\item
$f$ has integral difference ratios,
i.e. $\dfrac{f(a)-f(b)}{a-b}\in\Z$ for all $b\neq a$.
\item
$\lcm(k)$ divides $a_k$ for all $k\in\N$.
\end{conditionsiii}
\end{theorem}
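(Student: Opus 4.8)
The plan is to reduce both implications to a single arithmetic estimate on the $p$-adic valuations of differences of binomial coefficients. Throughout I write $v_p$ for the $p$-adic valuation and use that $v_p(\lcm(k))=\lfloor\log_p k\rfloor$, the largest $t$ with $p^t\le k$ (Definition~\ref{def:lcm}). The first observation is that for fixed $a>b$ the series for $f(a)-f(b)$ is a \emph{finite} sum $\sum_{k}a_k\bigl(\binom{a}{k}-\binom{b}{k}\bigr)$, since $\binom{x}{k}=0$ for $k>x$, and that the difference-ratio property is $\Z$-linear in $f$. Hence everything rests on the single building block $\lcm(k)\binom{x}{k}$, which I would package as a Core Lemma: for $A>B\ge 0$, $n\ge 1$ and $p$ prime, $v_p\bigl(\binom{A}{n}-\binom{B}{n}\bigr)\ge v_p(A-B)-\lfloor\log_p n\rfloor$. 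This I prove by setting $m=A-B$ and expanding with Vandermonde, $\binom{A}{n}-\binom{B}{n}=\sum_{i=1}^{n}\binom{B}{n-i}\binom{m}{i}$, so the left side has $v_p$ at least $\min_{1\le i\le n}v_p\bigl(\binom{m}{i}\bigr)$. The factorization $\binom{m}{i}=\tfrac{m}{i}\binom{m-1}{i-1}$ gives $v_p\bigl(\binom{m}{i}\bigr)=v_p(m)-v_p(i)+v_p\bigl(\binom{m-1}{i-1}\bigr)\ge v_p(m)-v_p(i)$, and since $1\le i\le n$ forces $p^{v_p(i)}\le n$, i.e. $v_p(i)\le\lfloor\log_p n\rfloor$, the bound follows.

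Direction (ii)$\Rightarrow$(i) is then immediate. For each $k$ the Core Lemma with $(A,B,n)=(a,b,k)$ yields $v_p(a-b)\le v_p(\lcm(k))+v_p\bigl(\binom{a}{k}-\binom{b}{k}\bigr)$ for every $p$, that is $(a-b)\mid\lcm(k)\bigl(\binom{a}{k}-\binom{b}{k}\bigr)$. Since $\lcm(k)\mid a_k$, each summand of the finite sum $f(a)-f(b)$ is divisible by $a-b$, hence so is the sum.

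For the converse (i)$\Rightarrow$(ii) I would induct on $k$ (the cases $k=0,1$ being trivial since $\lcm(0)=\lcm(1)=1$) and establish $p^{t}\mid a_k$ separately for each prime $p\le k$, with $t=\lfloor\log_p k\rfloor$; as the resulting prime powers are coprime and multiply to $\lcm(k)$, this gives $\lcm(k)\mid a_k$. The key is the choice of evaluation points: apply the hypothesis to $a=k$ and $b=k-p^{t}$ (legitimate since $p^{t}\le k$), so that $p^{t}\mid f(k)-f(k-p^{t})=\sum_{j\le k}a_j\bigl(\binom{k}{j}-\binom{k-p^{t}}{j}\bigr)$. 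The top term $j=k$ equals $a_k$ exactly, because $\binom{k-p^t}{k}=0$. For every $j<k$ the inductive hypothesis gives $v_p(a_j)\ge\lfloor\log_p j\rfloor$, while the Core Lemma with $(A,B,n)=(k,k-p^t,j)$ gives $v_p\bigl(\binom{k}{j}-\binom{k-p^t}{j}\bigr)\ge t-\lfloor\log_p j\rfloor$; adding the two, each such term has $v_p\ge t$. Thus $p^t$ divides the entire lower-order sum, and therefore divides $a_k$.

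The main obstacle is the Core Lemma: it is the one arithmetic fact powering both directions, and its only delicate point is the valuation bookkeeping $v_p(i)\le\lfloor\log_p n\rfloor=v_p(\lcm(n))$, which matches the exponent in $\lcm$ exactly and explains why $\lcm(k)$ (and nothing smaller) is forced. In the converse a second subtlety is recognizing that $a=k$, $b=k-p^{t}$ is precisely the choice that isolates $a_k$ via $\binom{k-p^t}{k}=0$ while keeping the inductive hypothesis strong enough to absorb every lower term prime-by-prime.
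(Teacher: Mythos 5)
Your proof is correct, but it takes a genuinely different route from the paper's, and a more economical one. The paper works with two divisibility lemmas proved by inductions on Pascal's recurrence: Lemma~\ref{lemme.p} ($p$ divides $\lcm(k)\binom{n}{k}$ when $0\le n-k<p\le n$) and Lemma~\ref{l:n divise Ankb} ($n$ divides $\lcm(k)(\binom{b+n}{k}-\binom{b}{k})$, but only for $k\le b$); as a result its proof of (ii)$\Rightarrow$(i) must split the sum at $k=b$ and invoke a different lemma on each piece. Your Core Lemma, proved via Vandermonde plus the absorption identity $i\binom{m}{i}=m\binom{m-1}{i-1}$, subsumes both in a single uniform $p$-adic estimate valid for all $n$ (when $B<n\le A$ it reduces to the paper's Lemma~\ref{lemme.p}, and otherwise to Lemma~\ref{lemme.a-b}), which eliminates the case split entirely. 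The contrast is sharper still in (i)$\Rightarrow$(ii): the paper runs a double induction (Claim~1: $k\mid a_k$ using the pair $(k+1,0)$; Claim~2: a nested induction on $p$ and $k$ using the pairs $(k+1,k-p)$ to show each $p\le k$ divides $a_k$), whereas you do one strong induction on $k$ and, for each prime $p$, extract the full prime power $p^{\lfloor\log_p k\rfloor}$ in a single step from the pair $(k,\,k-p^{\lfloor\log_p k\rfloor})$, which isolates $a_k$ because $\binom{k-p^t}{k}=0$. What your approach buys is brevity and a conceptual explanation of why $\lcm(k)$ is exactly the right normalization, namely $v_p(\lcm(n))=\max_{1\le i\le n}v_p(i)$; what the paper's buys is that it stays entirely within elementary binomial recurrences, never invoking $p$-adic valuations. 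Both are complete proofs.
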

\begin{proof}
See \S\ref{ss:preparatory}, \ref{sss:proof main1 i to ii},
\ref{sss:proof main1 ii to i}.
\end{proof}
We now state a corollary whose proof does not need
the machinery of the proof of Theorem~\ref{thm:main1}.
\begin{corollary}\label{rk:main1}
If $k!$ divides $a_k$ for all $k\in\N$ 
then $f$ has integral difference ratios.
\end{corollary}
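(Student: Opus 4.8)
The plan is to prove the corollary directly from the Newton representation, showing that each summand $a_k\binom{x}{k}$ already has integral difference ratios under the hypothesis $k! \mid a_k$, and then invoking the fact that functions with integral difference ratios are closed under addition. First I would reduce to a single term: if $f(x) = \sum_k a_k\binom{x}{k}$ and each function $x \mapsto a_k\binom{x}{k}$ has integral difference ratios, then so does their (pointwise finite) sum, since for each fixed pair $a > b$ the divisibility $(a-b) \mid (g(a)-g(b))$ is preserved under summing finitely many $g$'s. So it suffices to treat $g_k(x) = a_k\binom{x}{k}$ for a fixed $k$, under the assumption $k! \mid a_k$.

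For the single term, write $a_k = k!\, c_k$ with $c_k \in \Z$, so that
\[
g_k(x) = k!\, c_k \cdot \frac{\prod_{i=0}^{k-1}(x-i)}{k!} = c_k \prod_{i=0}^{k-1}(x-i).
\]
Thus $g_k$ is (up to the integer factor $c_k$) the falling factorial $x^{\underline{k}} = x(x-1)\cdots(x-k+1)$, which is a polynomial with integer coefficients. I would then check that any polynomial with integer coefficients has integral difference ratios: for $P \in \Z[x]$ and integers $a > b$, the quantity $P(a) - P(b)$ is divisible by $a - b$, which follows termwise from the elementary identity $a^n - b^n = (a-b)\sum_{j=0}^{n-1} a^j b^{n-1-j}$. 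Hence $g_k$ has integral difference ratios for every $k$.

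The key step to carry out carefully is the passage from termwise divisibility to divisibility of the sum, combined with the observation that for any fixed $a$ the Newton series \eqref{Newton} is a \emph{finite} sum, so we only ever add finitely many terms $g_k$ when evaluating $f(a)-f(b)$; this is already guaranteed by the remarks in the proof of Proposition~\ref{p:Newton}. I expect the only mild subtlety to be bookkeeping: when forming $f(a)-f(b)$ one should note that the terms with $k > a$ vanish at $a$ and likewise for $b$, but since each $g_k$ individually satisfies $(a-b)\mid(g_k(a)-g_k(b))$ for all pairs, no truncation issue actually arises and the finite sum inherits the property. The argument deliberately sidesteps the $\lcm$ machinery of Theorem~\ref{thm:main1}, relying only on the weaker and more transparent fact that integer polynomials — equivalently, integer combinations of falling factorials $k!\binom{x}{k}$ — have integral difference ratios.
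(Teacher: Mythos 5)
Your proof is correct and follows essentially the same route as the paper's: both rest on writing $a_k=k!\,b_k$ so that $a_k\binom{x}{k}=b_k\prod_{i=0}^{k-1}(x-i)$ is an integer polynomial, observing that integer polynomials have integral difference ratios, and using the finiteness of the Newton sum at each point. The paper simply truncates the whole series at $N>\max(a,b)$ into one polynomial rather than arguing termwise, which is only a cosmetic difference.
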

\begin{proof}
Let $a_k=k!\;b_k$ (with $b_k\in\Z$);
then
{$f(x)\;=\;\sum_{k\in \N} a_k \dbinom{x}{k}
\;=\;b_0+b_1x+b_2x(x-1)+b_3x(x-1)(x-2)+\cdots$.}
For $a,b\in\N$ and $N>\max(a,b)$,
$f(a)-f(b)$ is the difference of the values on $a,b$ of the polynomial
$\sum_{k<N}b_k\prod_{i=0}^{i=k-1}(x-i)$ which has coefficients in $\Z$, hence $a-b$
divides $f(a)-f(b)$.
\end{proof}

%
\subsection{Preparatory lemmas for the proof of Theorem~\ref{thm:main1}}
\label{ss:preparatory}
%
The proof of Theorem~\ref{thm:main1} relies on three lemmas whose proofs are elementary.
\begin{lemma}\label{lemme.p}
If $0\leq n-k<p\leq n$ then $p$ divides $\lcm(k)\dbinom{n}{k}$.
\end{lemma}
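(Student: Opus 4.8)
The plan is to reduce the divisibility to a prime-by-prime comparison of $q$-adic valuations, exploiting the fact that the hypothesis $0\le n-k<p\le n$ forces $p$ to be one of the $k$ consecutive integers $n,n-1,\ldots,n-k+1$ appearing in the numerator of $\binom{n}{k}=\frac{n(n-1)\cdots(n-k+1)}{k!}$. Writing $p=n-r$ with $0\le r\le k-1$, this already gives $p\mid k!\binom{n}{k}$; but since $\lcm(k)\mid k!$, the target $p\mid\lcm(k)\binom{n}{k}$ is strictly stronger, so more is needed. First I would fix a prime $q$, set $a=v_q(p)$ (assuming $a\ge1$, else there is nothing to prove) and $e=v_q(\lcm(k))$, recalling that $e$ is the largest exponent with $q^e\le k$. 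The goal becomes $v_q(\lcm(k))+v_q\binom{n}{k}\ge a$.

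The easy regime is $a\le e$, where $\lcm(k)$ alone supplies the required power of $q$. The substantive case is $a>e$, and here I would invoke Legendre's formula
$$
v_q\binom{n}{k}=\sum_{m\ge1}\left(\left\lfloor\tfrac{n}{q^m}\right\rfloor-\left\lfloor\tfrac{k}{q^m}\right\rfloor-\left\lfloor\tfrac{n-k}{q^m}\right\rfloor\right),
$$
in which every summand is a nonnegative integer. For each $m$ in the range $e<m\le a$ we have $q^m>k$, so $\lfloor k/q^m\rfloor=0$ and the $m$-th summand counts the multiples of $q^m$ among $n-k+1,\ldots,n$; since $p$ lies in this window and $q^m\mid p$, that count is at least $1$. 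Summing over these $a-e$ values of $m$ yields $v_q\binom{n}{k}\ge a-e$, whence $v_q(\lcm(k))+v_q\binom{n}{k}\ge e+(a-e)=a$. As this holds for every prime $q$, the divisibility $p\mid\lcm(k)\binom{n}{k}$ follows.

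The main obstacle is exactly the treatment of the high prime powers $q^m$ with $k<q^m\le p$, for which $\lcm(k)$ contributes nothing: one must verify that the binomial coefficient itself absorbs each such factor. The observation that makes this work is that $p$ is simultaneously a multiple of all these $q^m$ and a member of the length-$k$ window $[n-k+1,n]$, so it single-handedly guarantees one multiple of $q^m$ in the numerator for every relevant $m$, while the vanishing of $\lfloor k/q^m\rfloor$ ensures this multiple survives the division by $k!$. I expect the only delicate bookkeeping to be confirming that $e$ really equals $v_q(\lcm(k))$ and that the summands of Legendre's formula are nonnegative, both of which are standard.
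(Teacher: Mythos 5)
Your proof is correct. Fixing a prime $q$ with $a=v_q(p)$ and $e=v_q(\lcm(k))=\lfloor\log_q k\rfloor$, the only case needing work is $a>e$, and there your use of Legendre's formula is sound: for each $m$ with $e<m\le a$ one has $q^m>k$, so the $m$-th summand reduces to $\lfloor n/q^m\rfloor-\lfloor(n-k)/q^m\rfloor$, which counts multiples of $q^m$ in the window $\{n-k+1,\dots,n\}$; the hypothesis $n-k<p\le n$ places $p$ in that window and $q^m\mid p$, so each such summand is at least $1$, giving $v_q\binom{n}{k}\ge a-e$ and hence $v_q\bigl(\lcm(k)\binom{n}{k}\bigr)\ge a$. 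This is a genuinely different route from the paper, which instead argues by induction on $n$ via Pascal's identity, splitting the induction step into the cases $p\le n$ (where both $\binom{n}{k}$ and $\binom{n}{k-1}$ inherit the divisibility) and $p=n+1$ (where the identity $\lcm(k)\binom{n+1}{k}=(n+1)\frac{\lcm(k)}{k}\binom{n}{k-1}$ exhibits the factor $n+1$ directly). The paper's argument is entirely elementary and self-contained, needing nothing beyond Pascal's formula; yours requires Legendre's formula but is non-inductive and more explanatory, making visible exactly where each power of $q$ comes from: $\lcm(k)$ covers the exponents up to $e$, and the single element $p$ of the length-$k$ window covers every higher power $q^m\le p$, which survives division by $k!$ precisely because $q^m>k$ forces $\lfloor k/q^m\rfloor=0$. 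Either proof is acceptable.
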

\begin{proof} By induction on $n\geq1$.
The initial case $n=1$ is trivial since condition $0\leq n-k<p\leq n$ yields
$p=k=1$.
Induction step: assuming the result for $n$, we prove it for $n+1$.
Suppose $0\leq n+1-k<p\leq n+1$.
\\
{\it First case: $p\leq n$.}
Then we have $0\leq n-k<p\leq n$ and $0\leq n-(k-1)<p\leq n$,
so that, by induction hypothesis,
$p$ divides $\lcm(k)\dbinom{n}{k}$
and $p$ divides $\lcm(k-1)\dbinom{n}{k-1}$.
A fortiori, $p$ divides
$\lcm(k)\dbinom{n}{k}+\lcm(k)\dbinom{n}{k-1}
=\lcm(k)\dbinom{n+1}{k}$
(by Pascal's formula).
\\
{\it Second case: $p=n+1$.}
Then $k\geq1$ and
$\lcm(k)\dbinom{n+1}{k}=(n+1)\frac{\lcm(k)}{k}\dbinom{n}{k-1}$
hence $p=n+1$ divides
$\lcm(k)\dbinom{n+1}{k}$.
\end{proof}
\begin{lemma}\label{l:n divise Ankb}
If $n,k,b\in\N$ and $k\leq b$ then $n$ divides
$A^n_{k,b}=\lcm(k) \left(\dbinom{b+n}{k}-\dbinom{b}{k}\right)$.
\end{lemma}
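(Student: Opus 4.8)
The plan is to avoid any delicate $p$-adic bookkeeping by expanding $\binom{b+n}{k}$ through Vandermonde's convolution and then checking divisibility term by term. First I would write
\[
\binom{b+n}{k}=\sum_{i=0}^{k}\binom{n}{i}\binom{b}{k-i},
\]
noting that the $i=0$ summand is exactly $\binom{n}{0}\binom{b}{k}=\binom{b}{k}$, so it cancels the subtracted term. This leaves
\[
A^n_{k,b}=\lcm(k)\sum_{i=1}^{k}\binom{n}{i}\binom{b}{k-i}.
\]
Since every $\binom{b}{k-i}$ is an integer, it then suffices to prove that $n$ divides $\lcm(k)\binom{n}{i}$ for each $i$ with $1\le i\le k$.

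The key step is the absorption identity $i\binom{n}{i}=n\binom{n-1}{i-1}$, which shows immediately that $n$ divides $i\binom{n}{i}$. Because $i$ lies in $\{1,\dots,k\}$, it divides $\lcm(k)$ by definition, so $\lcm(k)/i\in\Z$ and
\[
\lcm(k)\binom{n}{i}=\frac{\lcm(k)}{i}\,\bigl(i\binom{n}{i}\bigr)=n\,\frac{\lcm(k)}{i}\binom{n-1}{i-1}.
\]
The right-hand side is $n$ times an integer, so $n\mid\lcm(k)\binom{n}{i}$ for each such $i$, and summing over $i=1,\dots,k$ gives $n\mid A^n_{k,b}$, as required.

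I expect the only real obstacle to be choosing the right expansion at the outset. The naive telescoping $\binom{b+n}{k}-\binom{b}{k}=\sum_{m=b}^{b+n-1}\binom{m}{k-1}$ merely reshuffles the problem into a sum of $n$ consecutive binomials whose divisibility by $n$ is no easier to see directly; Vandermonde is what isolates the factor $\binom{n}{i}$ that carries the divisibility. Once that expansion is in hand, the remaining argument is entirely elementary, relying only on the absorption identity and the trivial fact that each $i\le k$ divides $\lcm(k)$. I would also remark that the hypothesis $k\le b$ is not essential to the divisibility itself (it only guarantees that all the $\binom{b}{k-i}$ are ordinary binomial coefficients); the argument uses only $1\le i\le k$.
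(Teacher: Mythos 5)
Your proof is correct, and it takes a genuinely different route from the paper's. The paper proves this lemma by a double induction on $k$ and $b$, peeling off terms with Pascal's relation and invoking its Lemma~2.7 (``if $0\leq n-k<p\leq n$ then $p$ divides $\lcm(k)\binom{n}{k}$'') to handle one of the resulting summands. You instead expand $\binom{b+n}{k}$ by Vandermonde's convolution, cancel the $i=0$ term against $\binom{b}{k}$, and reduce everything to the single clean fact that $n$ divides $\lcm(k)\binom{n}{i}$ for $1\leq i\leq k$, which falls out of the absorption identity $i\binom{n}{i}=n\binom{n-1}{i-1}$ together with $i\mid\lcm(k)$. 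Your argument is shorter, entirely induction-free, and, as you note, dispenses with the hypothesis $k\leq b$ (which the paper only needs because of how its induction on $b$ is anchored); it also isolates a reusable divisibility fact of independent interest, essentially a sharper cousin of the paper's Lemma~2.7. What the paper's approach buys in exchange is that its preparatory Lemma~2.7 is needed again, in its full form with a general divisor $p$, in the proofs of both implications of Theorem~2.6, so the induction machinery is not wasted there even though it is avoidable here. All steps of your argument check out, including the edge cases $n=0$ (where $A^0_{k,b}=0$) and $k=0$ (empty sum).
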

\begin{proof}
We argue by double induction on $k$ and $b$ with the conditions
$$
(\+P_{k,b})\quad \forall n\in\N,\;\;n\text{ divides }A^n_{k,b}
\qquad,\qquad
(\+P_k)\quad \forall b\geq k,\;\forall n\in\N,\;\;n\text{ divides }A^n_{k,b}\;.
$$
Conditions $(\+P_0)$ and $(\+P_1)$ are trivial
since $A^n_{0,b}=0$ and $A^n_{1,b}=n$.

Suppose $k\geq1$ and $(\+P_k)$ is true.
To prove $(\+P_{k+1})$, we prove by induction on $b\geq k+1$
that $(\+P_{k+1,b})$ holds.

In the basic case $b=k+1$, we have
\begin{eqnarray*}
A^n_{k+1,k+1} &=&
\lcm(k+1)\;\left(\dbinom{k+1+n}{k+1} - \dbinom{k+1}{k+1}\right)
\\
&=&\lcm(k+1)\;\left(\dbinom{k+n}{k} + \dbinom{k+n}{k+1} - 1\right)
\quad {\hbox{\rm by Pascal's relation}}
\\
&=&\lcm(k+1)\;\left(\dbinom{k+n}{k} - \dbinom{k}{k}\right)
     + \lcm(k+1)\;\dbinom{k+n}{k+1}
\\
&=&\frac{\lcm(k+1)}{\lcm(k)}\;A^n_{k,k}
     + \lcm(k+1)\;\dbinom{k+n}{k+1}
\end{eqnarray*}
Since $(\+P_{k,k})$ holds (induction hypothesis on $k$),
$n$ divides $A^n_{k,k}$ hence divides the first term.
If $n\leq k+1$ then $n$ divides $\lcm(k+1)$ hence divides the second term.
If $n>k+1$, applying Lemma \ref{lemme.p} with $n'=k+n$, $p'=n$ and $k'=k+1$,
shows  that $n=p'$ divides the second term.
Thus, $n$ divides $A^n_{k+1,k+1}$ and $(\+P_{k+1,k+1})$ holds.

Suppose now that $(\+P_{k+1,c})$ holds for $k+1\leq c\leq b$.
We prove $(\+P_{k+1,b+1})$.
Using Pascal's relation, we get
\begin{eqnarray*}
A^n_{k+1,b+1} &=& \lcm(k+1)\;\left(\dbinom{b+1+n}{k+1} - \dbinom{b+1}{k+1}\right)
\\
&=&\lcm(k+1)\;\left(\dbinom{b+n}{k} + \dbinom{b+n}{k+1} 
                              - \dbinom{b}{k} - \dbinom{b}{k+1}\right)
\\
&=&\lcm(k+1)\;\left(\left(\dbinom{b+n}{k} - \dbinom{b}{k}\right)
+ \left(\dbinom{b+n}{k+1} - \dbinom{b}{k+1}\right)\right)
\\
&=&\left(\frac{\lcm(k+1)}{\lcm(k)}\;A^n_{k,b}\right) + A^n_{k+1,b}
\end{eqnarray*}
Since $(\+P_{k,b})$ and $(\+P_{k+1,b})$ hold, $n$ divides both terms
of the above sum hence $n$ divides $A^n_{k+1,b+1}$
and $(\+P_{k+1,b+1})$ holds.
\end{proof}

The following is an immediate consequence of Lemma \ref{l:n divise Ankb}
(set $a=b+n$).
\begin{lemma}\label{lemme.a-b}
If $a\geq b $ then $a-b$ divides
$\lcm(k) \left(\dbinom{a}{k}-\dbinom{b}{k}\right)$
for all $k\leq b$.
\end{lemma}

\subsection{Proof of Theorem~\ref{thm:main1}}
\label{ss:proof main1}
%
\subsubsection{Proof of {\rm(i)} $\Rightarrow$  {\rm (ii)}}
\label{sss:proof main1 i to ii}
%
We suppose that
$f(x)=\sum_{k\in \N} a_k \dbinom{x}{k}$
has integral difference ratios and we show that
$\lcm(k)$ divides $a_k$ for all $k\in\N$.
\smallskip\\
{\it{\normalfont\bf Claim 1.} For all $k\geq1$, $k$ divides $a_k$.}
\\
The proof is by induction.
Recall $f(k)= \sum_{i= 0}^k \dbinom{k}{i} a_i$.\\
{\it Induction Basis:}
The case $k=1$ is trivial.
For $k=2$, observe that $2$ divides $f(2)-f(0)= 2a_1 +a_2$
hence $2$ divides $a_2$.
\\
{\it Inductive Step:} assuming that $\ell$ divides $a_\ell$ for every $\ell\leq k$,
we prove that $k+1$ divides $a_{k+1}$.
Observe that
\begin{eqnarray*}
f(k+1) -f(0)&=&
(k+1) a_1 + \left(\sum_{i=2}^k \dbinom{k+1}{i} a_i\right)+a_{k+1}
\\
&=&(k+1) a_1 + 
\left(\sum_{i=2}^k (k+1)\;\frac{a_i}{i}\;\dbinom{k}{i-1}\right)
+a_{k+1}
\end{eqnarray*}
By the induction hypothesis, ${a_i\over i}$ is an integer for $i\leq k$.
Since $f$ has integral difference ratios, $k+1$ divides $f(k+1) -f(0)$
hence $k+1$ divides the last term $a_{k+1}$ of the sum.
\smallskip\\
{\it{\normalfont\bf Claim 2.} For all $1\leq p\leq k$, $p$ divides $a_k$.
Hence, $\lcm(k)$ divides $a_k$.}
\\
The case $p=1$ is trivial.
We use induction on $p\geq2$.
\\
\textbullet\;{\it Basic case: $2$ divides $a_k$ for all $k\geq2$.}
We argue by induction on $k\geq2$.
Claim~1 gives the base case: $2$ divides $a_2$.
Induction step: assuming that $2$ divides $a_i$ for all $2\leq i\leq k$
we prove that $2$ divides $a_{k+1}$. Two cases can occur.
\\
{\it Subcase 1 : $k+1$ is odd.}
Then 2 divides $f(k+1)-f(1)$.
Now,
\\\centerline{$f(k+1)-f(1)
=ka_1+\left(\sum_{i=2}^k a_i {{k+1}\choose i}\right) +a_{k+1}$,} 
$k$ is even and $2$ divides the $a_i$ for $2\leq i\leq k$ by the induction hypothesis,
hence, $2$ divides $a_{k+1}$.
\\
{\it Subcase 2 : $k+1$ is even.}
Then 2 divides $f(k+1)-f(0)$.
Now,
\\\centerline{$f(k+1)-f(0)
=(k+1)a_1+\left(\sum_{i=2}^k a_i {{k+1}\choose i} \right)+a_{k+1}$,}
$k+1$ is even and $2$ divides the $a_i$ for $2\leq i\leq k$ by the induction hypothesis,
thus, $2$ divides $a_{k+1}$.
\\
\textbullet\;{\it Induction step: assuming that
$p\geq2$ and, for all $q\leq p$, $q$ divides $a_\ell$ for all $\ell\geq q$,
we prove that $p+1$ divides $a_k$ for all $k\geq p+1$.}
\\
We use induction on $k\geq p+1$.
Claim~1 gives the base case: $p+1$ divides $a_{p+1}$.
Induction step: assuming that $p+1$ divides $a_i$ for all $p+1\leq i\leq k$
we prove that $p+1$ divides $a_{k+1}$.
Since $f$ has integral difference ratios, $p+1$ divides $f(k+1)-f(k-p)$
which is equal to
$$
\sum_{i=1}^{i=k-p} a_i \left({{k+1}\choose i}- {{k-p}\choose i}\right)
\;+\
\left(\sum_{i=k+1-p}^{i=k} a_i {{k+1}\choose i}\right)
\;+\;a_{k+1}
$$

Let us first look at the terms of the first sum corresponding to $1\leq i\leq p$.
The induction hypothesis (on $p$) insures that $q$ divides $a_k$
for all $q\leq p$ and $k\geq q$.
In particular, (letting $k=i$) $\lcm(i)$ divides $a_i$.
Since $(k+1)-(k-p)=p+1$, Lemma~\ref{l:n divise Ankb} insures that
$p+1$ divides $\lcm(i)\;\left({{k+1}\choose i}- {{k-p}\choose i}\right)$.
A fortiori, 
$p+1$ divides $a_i\; \left({{k+1}\choose i}- {{k-p}\choose i}\right)$.

We now turn to the terms of the first sum corresponding to
$p+1\leq i\leq k-p$ (if there are any).
The induction hypothesis (on $k$) insures that
$p+1$ divides $a_i$ for all $p+1\leq i\leq k$

Thus, each term of the first sum is divisible by $p+1$.

Consider now the terms of the second sum.
By the induction hypothesis (on $k$),
$p+1$ divides $a_i$ for all $p+1\leq i\leq k$.
It remains to look at the terms associated with the $i$'s
such that $k+1-p\leq i\leq p$
(there are such $i$'s in case $k+1-p<p+1$).
For such $i$'s we have $0\leq (k+1)-i\leq (k+1)-p<p+1\leq k+1$
and Lemma \ref{lemme.p}
(used with $k+1,i,p+1$  in place of $n,k,p$)
insures that $p+1$ divides $\lcm(i) {k+1\choose i}$.
Now, for such $i$'s, the induction hypothesis (on $p$)
insures that $\lcm(i)$ divides $a_i$.
Thus, $p+1$ divides $a_i {k+1\choose i}$.

Since $p+1$ divides each one of these three sums,
it must divide the last summand $a_{k+1}$.

This finishes the proof of Claim 2 hence of implication $(i)\Rightarrow (ii)$
in Theorem~\ref{thm:main1}. 
\hfill{$\Box$}

\subsubsection{Proof of {\rm(ii)} $\Rightarrow$ {\rm(i)} }
\label{sss:proof main1 ii to i}
%
If $f$ satisfies (ii), then it can be written in the form
$f(n)= \sum_{k= 0}^n b_k \lcm(k){n \choose k}$.
Consequently,
$$
f(a)-f(b)= \left(\sum_{k= 0}^b 
                         b_k \lcm(k)\Big({a \choose k} -{b\choose k}\Big)\right)
+ \sum_{k= b+1}^a b_k \lcm(k){a \choose k}
$$
By Lemma \ref{lemme.a-b}, $a-b$  divides each term of the first sum.

Consider the terms of the second sum.

If $a-b\leq b+1$ then $a-b$ divides each term of the second sum
since $a-b$ divides $\lcm(k)$ for every $k\geq (a-b)$.

If $a-b> b+1$ then, for $b+1\leq k\leq a$ we have $0\leq a-k<a-b\leq a$
and Lemma \ref{lemme.p}
(used with $a,k,a-b$  in place of $n,k,p$)
insures that $a-b$ divides $\lcm(k){a \choose k}$.
Again,  $a-b$ divides each term of the second sum.
\hfill{$\Box$}
%
%

%
%
\section{Examples of functions having integral difference ratios}
\label{s:examples}

Back to the motivations given in \S\ref{ss:how},
it may not be obvious to find functions $f$ such that for every finite set $L\subset\N$,
the smallest lattice of subsets of $\N$ containing $L$
and closed under $\suc^{-1}$ is also closed under $f^{-1}$.
Our characterization by integral difference ratios (Theorem~\ref{thm:CGG1})
gives a first simple class of such functions:  polynomial functions.
Now, are there non polynomial such functions
expressible with usual mathematical functions?
It turns out that this is the case and can be proved using the characterization
given by Theorem~\ref{thm:main1}:
for instance, the function such that $f(0)=1\ ,\
f(x)=\lfloor e\ x!\rfloor \textit{\ for\ } x\geq1$ (Theorem \ref{thm:e factorial})
and variations thereof (e.g. Corollary \ref{thm:e factorialbis}).

\subsection{Main examples: around the factorial function}
\label{ss:e factorial}
%
A simple application of Corollary~\ref{rk:main1}
gives functions $\N\to\Z$ having integral difference ratios
with unexpectedly simple analytic expressions
up to the ceil and floor functions $\RR\to\Z$.
\begin{theorem}\label{thm:e factorial}
Let $e$ be the usual Neper constant.
For\ $a\in\Z\setminus\{0,1\}$,
the following functions $\N\to\Z$ have integral difference ratios:
$$
\begin{array}{lllllllll}
\phi_a^-\colon x&\mapsto&\lfloor e^{1/a}\, a^x \ x!\rfloor
&\  \phi_a^+\colon x&\mapsto&\lceil e^{1/a}\, a^x \ x!\rceil
\medskip\\
\phi_1^-\colon x&\mapsto&\!\!\!\left\{\begin{array}{ll}
1&\text{if $x=0$}\\
\lfloor e \ x!\rfloor&\text{if $x\in\N\setminus\{0\}$}
\end{array}\right.
&\  \phi_1^+\colon x&\mapsto&\!\!\!\left\{\begin{array}{ll}
2&\text{if $x=0$}\\
\lceil e \ x!\rceil&\text{if $x\in\N\setminus\{0\}$}
\end{array}\right.
\end{array}
$$

\end{theorem}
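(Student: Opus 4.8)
The plan is to reduce all four functions to the single integer-valued function
$$g(x)=\sum_{k\ge0} a^{k}\,k!\,\binom{x}{k},$$
a finite sum since $\binom{x}{k}=0$ for $k>x$, and then to invoke Corollary~\ref{rk:main1}. Indeed, $g$ is exactly the Newton representation of Definition~\ref{d:Newton} with coefficients $a_k=a^{k}\,k!$, and since $k!$ divides $a^{k}\,k!$ the corollary shows at once that $g$ has integral difference ratios. Moreover, adding an integer constant $c$ to $g$ leaves all difference ratios unchanged (the constant cancels in $g(a)-g(b)$), so $g+c$ also has integral difference ratios. Thus it suffices to show that each of $\phi_a^{\pm}$, $\phi_1^{\pm}$ coincides with $g+c$ for a suitable $c\in\{-1,0,1\}$.

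To see where $g$ comes from, I would expand $e^{1/a}=\sum_{n\ge0}\frac{1}{a^{n}n!}$ and multiply by $a^{x}x!$, obtaining $e^{1/a}a^{x}x!=\sum_{n\ge0}\frac{a^{x-n}x!}{n!}$. Splitting the series at $n=x$ and reindexing the finite head by $k=x-n$, the identity $\frac{x!}{(x-k)!}=k!\binom{x}{k}$ turns $\sum_{n=0}^{x}\frac{a^{x-n}x!}{n!}$ into exactly $g(x)$. Hence $e^{1/a}a^{x}x!=g(x)+T_x$, where $T_x=\sum_{m\ge1}\frac{1}{a^{m}(x+1)(x+2)\cdots(x+m)}$ is the tail, and the whole problem becomes one of locating the real number $e^{1/a}a^{x}x!$ relative to the integer $g(x)$, i.e. controlling $T_x$.

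For $a\ge2$ every term of $T_x$ is positive and the denominator dominates $a^{m}m!$, so $0<T_x\le T_0=e^{1/a}-1\le e^{1/2}-1<1$; therefore $\phi_a^-(x)=\lfloor\cdot\rfloor=g(x)$ and $\phi_a^+(x)=\lceil\cdot\rceil=g(x)+1$. For $a\le-1$ the series $T_x$ is alternating with absolute values $c_m=\frac{1}{|a|^{m}(x+1)\cdots(x+m)}$ strictly decreasing in $m$ (each extra factor exceeds $1$) and with $c_1=\frac{1}{|a|(x+1)}\le1$; the standard alternating-series bounds then give $-1\le-c_1<T_x<-c_1+c_2<0$, so $\phi_a^-(x)=g(x)-1$ and $\phi_a^+(x)=g(x)$. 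In every case the result follows from the observation above that $g+c$ inherits integral difference ratios. The case $a=1$ is the genuinely exceptional one: here $e\,x!=g(x)+T_x$ with $T_x=\sum_{m\ge1}\frac{1}{(x+1)\cdots(x+m)}$, and for $x\ge1$ one has $0<T_x<\sum_{m\ge1}(x+1)^{-m}=1/x\le1$, so $\lfloor e\,x!\rfloor=g(x)$ and $\lceil e\,x!\rceil=g(x)+1$ as before; the only failure is at $x=0$, where $T_0=e-1>1$ makes $\lfloor e\cdot0!\rfloor=2\neq g(0)=1$. This single exception is precisely why $\phi_1^{\pm}$ are redefined at $0$ to equal $1=g(0)$ and $2=g(0)+1$, after which $\phi_1^-=g$ and $\phi_1^+=g+1$ everywhere.

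The step I expect to be the main obstacle is the tail estimate, namely pinning $T_x$ inside a unit interval uniformly in $x$ — especially the alternating-series argument for $a\le-1$ — together with isolating the unique pair $(a,x)=(1,0)$ where $T_0$ exceeds $1$, since it is exactly this pair that forces the piecewise definitions of $\phi_1^{\pm}$ rather than the clean formula $\lfloor e\,x!\rfloor$.
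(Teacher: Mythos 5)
Your proof is correct and follows essentially the same route as the paper: both identify the target functions with (shifts of) the integer-valued Newton series $\sum_k a^k k!\binom{x}{k}$, invoke Corollary~\ref{rk:main1}, and then locate $e^{1/a}a^x x!$ within a unit interval of that integer. The only difference is cosmetic — you bound the tail $T_x$ by direct series comparisons (monotone, alternating, geometric), whereas the paper packages the same quantity as a Taylor--Lagrange remainder $e^{\theta/a}/(a(x+1))$ — and your case analysis, including the lone exception at $(a,x)=(1,0)$, matches the paper's.
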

\begin{remark}
Functions\ $\lfloor e\;x! \rfloor$ and $\lceil e \ x!\rceil$
do {\em not} have  integral difference ratios
(cf. Proposition~\ref {p:alpha fact not idr}).
\end{remark}
\begin{proof}
Recall Taylor-Lagrange formula applied to $t\mapsto e^t$
(considered as a map on $\RR$): for all $t\in\RR$,
$$
e^t= \left(\frac{1}{0!}
                    +\frac{t}{1!}
                    +\frac{t^2}{2!}
                    +\cdots
                     +\frac{t^{k-1}}{(k-1)!}
                    + \frac{t^k}{k!}\right)
                    + e^{\theta\, t}\,\dfrac{t^{k+1}}{(k+1)!}
$$
for some $0<\theta<1$ depending on $k$ and $t$.

For $a\in\Z$, let $f_a:\N\to\Z$ be the function associated to the Newton
series
\begin{equation}\label{eq:fa}
f_a(x)\ =\ \sum_{n\in\N} a^n\;n!\;\dbinom{x}{n}
\end{equation}
Corollary~\ref{rk:main1} insures that $f_a$ has integral difference ratios.
Moreover,
\begin{eqnarray}\notag
f_a(x)&=&\sum_{k\in \N} a^n\;n!\;\dbinom{x}{n}
\\\notag
&=&a^x\;x!\;\left(\frac{(1/a)^{x}}{x!}
                    +\frac{(1/a)^{x-1}}{(x-1)!}
                    +\cdots
                    +\frac{(1/a)^{1}}{1!}
                    +\frac{(1/a)^{0}}{0!} \right)
\\\notag
&=&a^x\,x!\;\left(e^{1/a} - e^{\theta/a}\,\dfrac{(1/a)^{x+1}}{(x+1)!}\right)
\qquad\text{for some $0<\theta<1$}
\\\label{eq:floor fa}
&&\text{hence}\qquad\qquad
e^{1/a}\,a^x\;x!\ =\ f_a(x) + \dfrac{e^{\theta/a}}{a\,(x+1)}
\end{eqnarray}

{\it Case $a\geq2$.} For $x\in\N$, we have
$0<e^{\theta/a}/(a\,(x+1))<e^{1/2}/2 <1$
and, since $f_a(x)\in\N$, equation~\eqref{eq:floor fa} yields
$f_a(x)=\lfloor e^{1/a}\,a^x\;x! \rfloor$ and
$f_a(x)+1=\lceil e^{1/a}\,a^x\;x! \rceil$.

{\it Case $a\leq-1$.} For $x\in\N$, we have\
$\left|\dfrac{e^{\theta/a}}{a\,(x+1)}\right|
= \dfrac{e^{-\theta/|a|}}{|a|\,(x+1)}
\leq e^{-\theta/|a|} <1$\
and\ $-1< \dfrac{e^{\theta/a}}{a\,(x+1)} <0$.
Since $f_a(x)\in\Z$, equation~\eqref{eq:floor fa} yields
$f_a(x)=\lceil e^{1/a}\,a^x\;x! \rceil$
and $f_a(x)-1=\lfloor e^{1/a}\,a^x\;x! \rfloor$.

{\it Case $a=1$.} For $x\in\N$, $x\geq2$, we have
$0<e^\theta / (x+1)<e/3<1$ and, again, equation~\eqref{eq:floor fa} yields
$f_1(x)=\lfloor e\;x! \rfloor$ and $f_1(x)+1=\lceil e\;x! \rceil$.
Also, $f_1(0)=1<2=\lfloor e\;0! \rfloor$, $f_1(1)=2=\lfloor e\;1!\rfloor$
and $f_1(0)+1=2<3=\lceil e\;0! \rceil$, $f_1(1)+1=3=\lceil e\;0! \rceil$.

Thus, the functions in the statement of the theorem are among
the $f_a$'s, $f_a+1$'s and $f_a-1$'s,
all of which have rational difference ratios.
\end{proof}

\subsection{Algebra of functions having integral difference ratios
and applications}
\label{ss:algebra idr}
%
In order to get variations of Theorem \ref{thm:e factorial}
we state some closure properties
of the family of functions with integral difference ratios:
sum, product (i.e. they form a subring of functions from $\N$ to $\Z$)
and composition.
\begin{proposition}\label{p:closure sum} {\rm [Subring]}
If $f,g:\N\to\Z$ have integral difference ratios then so have
their sum and product.
\end{proposition}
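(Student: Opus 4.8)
The plan is to verify the divisibility condition directly from the definition, fixing once and for all a pair $b<a$ in $\N$ and writing $d=a-b$. By hypothesis $d$ divides both $f(a)-f(b)$ and $g(a)-g(b)$, and the goal is to deduce the same for the sum and the product functions evaluated at $a$ and $b$.

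For the sum the argument is immediate from additivity: since
$$
(f+g)(a)-(f+g)(b)=\bigl(f(a)-f(b)\bigr)+\bigl(g(a)-g(b)\bigr),
$$
and $d$ divides each summand on the right, it divides the left-hand side. No further work is needed, and this case requires nothing beyond the fact that a common divisor of two integers divides their sum.

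For the product the key is a telescoping identity that separates the two factors. I would write
$$
(fg)(a)-(fg)(b)=f(a)g(a)-f(b)g(b)
=f(a)\bigl(g(a)-g(b)\bigr)+\bigl(f(a)-f(b)\bigr)g(b).
$$
Here $d$ divides $g(a)-g(b)$ by the hypothesis on $g$, so it divides the first term; and $d$ divides $f(a)-f(b)$ by the hypothesis on $f$, so it divides the second term. Hence $d$ divides the whole expression, which is exactly the integral difference ratio condition for $fg$. The symmetric split inserting $f(b)g(a)$ instead of $f(a)g(b)$ works equally well, so there is some freedom in the presentation.

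There is no real obstacle in this proposition: both closure properties reduce to elementary divisibility facts, the only mild subtlety being the choice of the telescoping split for the product so that one prepared difference appears in each term. Since $b<a$ is arbitrary, the two verifications establish that the family of functions with integral difference ratios is closed under sum and product, i.e.\ forms a subring of the ring of all maps $\N\to\Z$.
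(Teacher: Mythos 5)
Your proof is correct and follows the same route as the paper: the sum case is immediate from additivity, and for the product you use exactly the telescoping identity $f(a)g(a)-f(b)g(b)=f(a)\bigl(g(a)-g(b)\bigr)+g(b)\bigl(f(a)-f(b)\bigr)$ that the paper's proof invokes. Nothing is missing.
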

\begin{proof}
For product, use equality
$f(x)g(x)-f(y)g(y)=f(x)\;(g(x)-g(y)) + g(y)\;(f(x)-f(y))$.
\end{proof}
\begin{corollary}\label{cor:poly}
Every polynomial with coefficients in $\Z$ defines
a function $\N\to\Z$ having integral difference ratios.
\end{corollary}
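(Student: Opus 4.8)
The plan is to derive this immediately from Proposition~\ref{p:closure sum}, which tells us that functions with integral difference ratios are closed under addition and multiplication. First I would observe that the two simplest nonconstant building blocks already have integral difference ratios: the constant function $x\mapsto 1$, for which $f(a)-f(b)=0$ is divisible by $a-b$ trivially, and the identity function $x\mapsto x$, for which $f(a)-f(b)=a-b$ is obviously divisible by $a-b$. Alternatively, each constant function $x\mapsto c$ with $c\in\Z$ has integral difference ratios for the same trivial reason. These are the seeds from which every integer-coefficient monomial is built.

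Next I would argue that every monomial $x\mapsto c\,x^m$ with $c\in\Z$ and $m\in\N$ has integral difference ratios. This follows by repeatedly applying the product-closure half of Proposition~\ref{p:closure sum}: the power $x\mapsto x^m$ is the product of $m$ copies of the identity function, so an easy induction on $m$ (with base case $m=0$ giving the constant function $1$) shows it lies in the class; multiplying by the constant function $c$ then yields $c\,x^m$. Finally, an arbitrary polynomial with coefficients in $\Z$ is a finite sum of such monomials, so the sum-closure half of Proposition~\ref{p:closure sum}, applied inductively over the number of terms, shows the whole polynomial has integral difference ratios.

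I do not anticipate any genuine obstacle here, since the corollary is essentially a formal consequence of the subring structure established in the preceding proposition together with the trivial verification for constants and the identity. The only point deserving a word of care is that the Newton-series framework of Definition~\ref{d:Newton} and Theorem~\ref{thm:main1} is \emph{not} needed for this argument: one could instead verify the conclusion directly by noting that for any $g\in\Z[X]$ and any $a,b\in\N$, the quantity $a-b$ divides $g(a)-g(b)$ because $X-Y$ divides $X^m-Y^m$ in $\Z[X,Y]$ for every $m$, whence it divides $g(X)-g(Y)$ and specializes correctly. I would nonetheless present the subring-based proof as the intended one, since it is shorter and reuses the machinery just established, reserving the direct divisibility remark as an alternative for the reader who prefers it.
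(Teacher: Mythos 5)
Your proof is correct and matches the paper's own argument, which likewise observes that the identity and constant functions have integral difference ratios and then invokes the subring closure of Proposition~\ref{p:closure sum}. The extra detail on building monomials by induction and the alternative direct divisibility argument via $X-Y \mid X^m-Y^m$ are fine but not needed beyond what the paper already does.
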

\begin{proof}
Observe that the identity and constant functions have integral difference ratios
and apply Proposition~\ref{p:closure sum}.
\end{proof}
\begin{corollary}\label{thm:e factorialbis}
Let $s,a\in\Z$, $a\neq0$.
Let $h_{a,s}$ be any one of the functions 
$\lfloor s\,e^{1/a}\, a^x \ x!\rfloor$,
$\lceil s\,e^{1/a}\, a^x \ x!\rceil$, with $a\in\Z\setminus\{0\}$.
There exists a function $g_{a,s}:\N\to\Z$ having integral difference ratios
such that $h_{a,s}(x)=g_{a,s}(x)$ for all $x\geq se-1$.
\end{corollary}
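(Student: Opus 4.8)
The plan is to deduce the corollary from Theorem~\ref{thm:e factorial} and the subring closure of Proposition~\ref{p:closure sum}, rather than redoing any divisibility analysis. First I would bring back the auxiliary function $f_a(x)=\sum_{n\in\N}a^{n}\,n!\,\binom{x}{n}$ used in the proof of Theorem~\ref{thm:e factorial}: by Corollary~\ref{rk:main1} it has integral difference ratios, and equation~\eqref{eq:floor fa} records the exact identity
$$
e^{1/a}\,a^{x}\,x!\;=\;f_a(x)+\frac{e^{\theta/a}}{a\,(x+1)},\qquad 0<\theta<1,
$$
where $\theta=\theta(x)$ depends on $x$. Multiplying by $s$ gives $s\,e^{1/a}\,a^{x}\,x!=s\,f_a(x)+E(x)$ with $E(x)=\dfrac{s\,e^{\theta/a}}{a\,(x+1)}$. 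Since $x\mapsto s$ is a constant (hence has integral difference ratios) and $f_a$ has integral difference ratios, Proposition~\ref{p:closure sum} shows that $s\,f_a$, and therefore any integer translate $s\,f_a+c$, has integral difference ratios as well.

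Next I would take as candidate $g_{a,s}=s\,f_a+c$ for a single integer constant $c\in\{-1,0,1\}$ to be chosen. Because $s\,f_a(x)\in\Z$ and $\lfloor m+t\rfloor=m+\lfloor t\rfloor$, $\lceil m+t\rceil=m+\lceil t\rceil$ for $m\in\Z$ and $t\in\RR$, the identity above yields
$$
\big\lfloor s\,e^{1/a}\,a^{x}\,x!\big\rfloor=s\,f_a(x)+\lfloor E(x)\rfloor,\qquad
\big\lceil s\,e^{1/a}\,a^{x}\,x!\big\rceil=s\,f_a(x)+\lceil E(x)\rceil.
$$
So it is enough to show that over the range $x\ge se-1$ the correction $\lfloor E(x)\rfloor$ (resp. $\lceil E(x)\rceil$) equals a single constant, which I then set to be $c$; on that range $h_{a,s}$ and $g_{a,s}$ coincide.

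The crux, and the step I expect to cost the most care, is the estimate on $E(x)$. Its sign is constant and equal to that of $s/a$, since $e^{\theta/a}>0$ and $x+1>0$. For its magnitude I would use the uniform bound $e^{\theta/a}<e$ valid for all $\theta\in(0,1)$ and $a\neq0$ (indeed $e^{\theta/a}<e^{1/a}\le e$ when $a\ge1$, while $e^{\theta/a}<1<e$ when $a\le-1$), together with $|a|\ge1$, to get $|E(x)|<\dfrac{|s|\,e}{x+1}$. Hence for $x\ge se-1$, so that $x+1\ge se$, one has $|E(x)|<\dfrac{se}{x+1}\le1$, and this is precisely the range on which the bound $|E(x)|<1$ holds. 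Combined with the constant sign, $|E(x)|<1$ forces $\lfloor E(x)\rfloor\in\{-1,0\}$ and $\lceil E(x)\rceil\in\{0,1\}$ to be the same value throughout the range, determined solely by $\operatorname{sign}(s/a)$ and the choice of floor versus ceiling. Fixing $c$ to be this common value finishes the argument; the only genuine bookkeeping is running through the four sign/rounding cases, each of which is immediate once the bound $e^{\theta/a}<e$ is established.
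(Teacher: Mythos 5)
Your proof is correct and follows essentially the same route as the paper's: the same auxiliary function $f_a$, the same identity $s\,e^{1/a}a^x\,x!=s\,f_a(x)+E(x)$, closure under scalar multiplication via Proposition~\ref{p:closure sum}, and the bound $|E(x)|<1$ for $x\geq se-1$. You merely make explicit (the integer shift $c$ and the sign analysis of $E$) what the paper compresses into ``argue as in the proof of Theorem~\ref{thm:e factorial}''; note that, exactly like the paper's own estimate, your chain $|E(x)|<se/(x+1)\le 1$ tacitly assumes $s>0$.
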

\begin{proof}
Let $f_a$ be as in the proof of Theorem~\ref{thm:e factorial}.
Proposition~\ref{p:closure sum} insures that the function
$g_{a,s}(x) = s\,f_a = s\,\sum_{k\in \N} a^k\;k!\;\dbinom{x}{k}$
has integral difference ratios.
Also, Equation~\eqref{eq:fa} above yields
$$
s\,e^{1/a}\,a^x\;x!
\ =\  s\,f_a(x) + \dfrac{s\,e^{\theta/a}}{a\,(x+1)}
$$
with $0<\theta<1$.
If $x\geq se-1$ then $|s\,e^{\theta/a}/a\,(x+1)!| <1$
and we can argue as in the proof of Theorem~\ref{thm:e factorial}
to finish the proof.
\end{proof}
\begin{example}
The  bound $se-1$ (of Corollary \ref{thm:e factorialbis},
obtained by majorizing $\theta$ by $1$)
may not be optimal.
For instance,
equalities $s\,f_1(x)=s\;\lfloor e\ x!\rfloor=\lfloor s e\ x!\rfloor$
may hold for some $x< se-1$. For instance,
$2e-1=4.436\ldots$, $3e-1=7.154\ldots$ but
\\\centerline{
$\begin{array}{|l|}
\hline
f_2(x)=2\lfloor e\ x!\rfloor
=\lfloor 2e\ x!\rfloor\quad\text{for $x\geq2$}
\\
f_2(0)=2<2\lfloor e\ 0!\rfloor=4<\lfloor 2e\ 0!\rfloor=5
\\
f_2(1)=4=2\lfloor e\ 1!\rfloor<\lfloor 2e\ 1!\rfloor=5
\\\hline
\end{array}$
\ 
$\begin{array}{|l|}
\hline
f_3(x)=3\lfloor e\ x!\rfloor
=\lfloor 3e\ x!\rfloor\quad\text{for $x\geq3$}
\\
f_3(0)=3<3\lfloor e\ 0!\rfloor=6<\lfloor 3e\ 0!\rfloor=8
\\
f_3(1)=6=3\lfloor e\ 1!\rfloor<\lfloor 3e\ 1!\rfloor=8
\\
f_3(2)=15=3\lfloor e\ 2!\rfloor<\lfloor 3e\ 2!\rfloor=16
\\\hline
\end{array}$}
\end{example}

\medskip
Closure under composition gives more variations of our main example (Theorem \ref{thm:e factorial}).
\begin{proposition}\label{p:closure composition} {\rm [Composition]}
 If $f:\N\to\Z$ and $g:\N\to\N$ have integral difference ratios
then so has $f\circ g$.
\end{proposition}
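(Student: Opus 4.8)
The plan is to reduce everything to transitivity of divisibility, applying the two hypotheses in sequence. Note first that $g:\N\to\N$ and $f:\N\to\Z$ compose to give a well-defined map $f\circ g:\N\to\Z$, so the statement makes sense. Fix $a,b\in\N$ with $a>b$; I must show that $a-b$ divides
$(f\circ g)(a)-(f\circ g)(b)=f(g(a))-f(g(b))$.

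First I would dispose of the degenerate case $g(a)=g(b)$. There $f(g(a))-f(g(b))=0$, which is trivially divisible by $a-b$. This case must be separated because the integral difference ratios property of $f$ only concerns pairs of \emph{distinct} arguments, and division by the zero difference $g(a)-g(b)$ is not available.

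In the main case $g(a)\neq g(b)$, I would chain the two hypotheses. Since $g$ has integral difference ratios, $a-b$ divides $g(a)-g(b)$, so I may write $g(a)-g(b)=(a-b)\,m$ with $m\in\Z$. Since $f$ has integral difference ratios and its two arguments $g(a),g(b)$ are distinct, the quantity $g(a)-g(b)$ divides $f(g(a))-f(g(b))$, so $f(g(a))-f(g(b))=(g(a)-g(b))\,n$ for some $n\in\Z$. Substituting the first relation into the second yields $f(g(a))-f(g(b))=(a-b)\,m\,n$, whence $a-b$ divides $f(g(a))-f(g(b))$, as required. The argument presents essentially no obstacle; the only point demanding care is the case distinction on whether $g(a)=g(b)$, precisely so that the $f$-hypothesis is never invoked on a pair of equal arguments.
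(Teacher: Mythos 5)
Your proof is correct and follows the same route as the paper, which simply invokes transitivity of divisibility: $a-b$ divides $g(a)-g(b)$, which divides $f(g(a))-f(g(b))$. Your explicit treatment of the degenerate case $g(a)=g(b)$ is a small but welcome refinement that the paper's one-line proof leaves implicit.
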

\begin{proof}
Use transitivity of divisibility:
$x-y$ divides $g(x)-g(y)$ which divides $f(g(x))-f(g(y))$.
\end{proof}
The following simple result allows to use
Proposition~\ref{p:closure composition},
to extend the scope of Theorem~\ref{thm:e factorial}.
\begin{proposition}\label{p:bounded}
If $f:\N\to\Z$ is not a constant function and has integral difference ratios
then, for every $z\in\N$, the set $f^{-1}(z)$ is finite.
\end{proposition}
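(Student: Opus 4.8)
The plan is to exploit the integral difference ratios property directly, with no recourse to the Newton-series characterization (Theorem~\ref{thm:main1}): the idea is that divisibility by $a-b$ forces $a-b$ to be small as soon as the difference of values $f(a)-f(b)$ is a fixed \emph{nonzero} integer. This will confine any nonempty fiber $f^{-1}(z)$ to a bounded initial segment of $\N$.

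First I would fix $z\in\N$ and assume $f^{-1}(z)\neq\emptyset$, since otherwise there is nothing to prove. Because $f$ is not constant, I can choose a witness $c\in\N$ with $f(c)\neq z$, and set $d=|f(c)-z|$, so that $d\geq1$. This is the only place where the hypothesis that $f$ is not constant is used, and it is genuinely needed: for a constant function the fiber could be all of $\N$.

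Next, for any $a\in f^{-1}(z)$ with $a>c$, the integral difference ratios property applied to the pair $a>c$ gives that $a-c$ divides $f(a)-f(c)=z-f(c)$. Since $z-f(c)\neq0$, the positive integer $a-c$ divides a fixed nonzero integer of absolute value $d$, whence $a-c\leq d$, that is $a\leq c+d$. Consequently every element of $f^{-1}(z)$ is at most $\max(c,\,c+d)=c+d$, so $f^{-1}(z)\subseteq\{0,1,\ldots,c+d\}$ is finite.

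I expect no real obstacle: once the witness $c$ is secured, the divisibility bound does all the work, and the argument is a couple of lines. The only subtlety worth flagging is to insist that $f(c)\neq z$ so that the divided quantity $z-f(c)$ is nonzero; a positive integer dividing $0$ is unconstrained, and it is precisely the nonvanishing of this difference that turns divisibility into the bound $a-c\leq d$.
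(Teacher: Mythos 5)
Your proof is correct and rests on the same elementary mechanism as the paper's: the integral difference ratio property forces $a-c$ to divide the fixed nonzero integer $f(a)-f(c)$, which bounds the fiber. The paper phrases it contrapositively (an infinite fiber would give $f(x)-d$ infinitely many divisors, forcing $f\equiv d$), while you argue directly and extract an explicit bound $c+d$, but this is the same argument in mirror image.
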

\begin{proof}
Suppose $f^{-1}(d)$ is infinite for some $d\in\Z$.
Consider some $x\in\N$.
Then $x-a$ divides $f(x)-d$ for every $a\in f^{-1}(d)$.
Thus, $f(x)-d$ has infinitely many divisors hence $f(x)=d$.
This shows that $f$ is the constant function with value $d$.
\end{proof}
\begin{corollary}\label{thm:varia e factorial}
Suppose $g:\N\to\N$ has integral difference ratios
and $a\in\Z\setminus\{0\}$;
 let $h_a:\N\to\Z$ be such that
$$
\text{if $a\neq1$ then}\ \
h_a(x)= \lfloor e^{1/a}\;a^{g(x)}\;  g(x)!\rfloor
\ ,\
h_1(x)= \left\{\begin{array}{ll}
1&\textit{if $g(x)=0$}\\
\lfloor e\;  g(x)!\rfloor & \textit{if $g(x)\neq0$}
\end{array}\right.
$$
1. If $g$ is not constant then $h_1$ differs from
$\lfloor e\;  g(x)!\rfloor$ on finitely many $x$'s.
\\
2. For $a\in\Z\setminus\{0\}$ the map $h_a$ has integral difference ratios.
\end{corollary}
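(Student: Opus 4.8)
The plan is to recognize each $h_a$ as the composition $\phi_a^-\circ g$, where $\phi_a^-$ is the function from Theorem~\ref{thm:e factorial}, and then to invoke the closure results already proved. First I would verify the identity $h_a=\phi_a^-\circ g$ directly from the definitions. For $a\in\Z\setminus\{0,1\}$ this is immediate, since $\phi_a^-(g(x))=\lfloor e^{1/a}\,a^{g(x)}\,g(x)!\rfloor=h_a(x)$. For $a=1$ one checks the two branches separately: $\phi_1^-(0)=1$ matches the value prescribed for $h_1$ when $g(x)=0$, and $\phi_1^-(m)=\lfloor e\,m!\rfloor$ for $m\geq 1$ matches the value prescribed when $g(x)\neq0$. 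Hence $h_1=\phi_1^-\circ g$ as well. Note that the composition is legitimate in every case because $g$ maps into $\N$, which is the domain of $\phi_a^-$.

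Part~2 then follows at once. By Theorem~\ref{thm:e factorial} each $\phi_a^-\colon\N\to\Z$ has integral difference ratios, and $g\colon\N\to\N$ does by hypothesis, so Proposition~\ref{p:closure composition} yields that $h_a=\phi_a^-\circ g$ has integral difference ratios.

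For Part~1, I would pin down exactly where $h_1$ and the map $x\mapsto\lfloor e\,g(x)!\rfloor$ disagree. Since $0!=1$ and $\lfloor e\rfloor=2$, the two maps differ precisely on the set $g^{-1}(0)=\{x\in\N : g(x)=0\}$, where $h_1$ takes the value $1$ while $\lfloor e\,g(x)!\rfloor$ takes the value $2$; everywhere else the two values coincide by construction. When $g$ is not constant, Proposition~\ref{p:bounded} (applied to $g$, which has integral difference ratios) shows that $g^{-1}(0)$ is finite, which is exactly the assertion.

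The only point that requires care is the bookkeeping at $a=1$, where $\phi_1^-$ carries the exceptional value at the argument $0$; once the composition identity is confirmed there, both parts are immediate consequences of Propositions~\ref{p:closure composition} and~\ref{p:bounded}, so I do not expect any genuinely hard step.
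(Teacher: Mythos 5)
Your proof is correct and follows the same route as the paper: the paper's own argument for part~2 is exactly ``use Theorem~\ref{thm:e factorial} and Proposition~\ref{p:closure composition}'' (i.e.\ recognize $h_a=\phi_a^-\circ g$), and for part~1 it applies Proposition~\ref{p:bounded} to conclude that $g^{-1}(0)$ is finite. You merely spell out the bookkeeping (the composition identity at $a=1$ and the identification of the disagreement set as $g^{-1}(0)$) that the paper leaves implicit.
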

\begin{proof}
1. By Proposition~\ref{p:bounded}, $g^{-1}(0)$ is finite.
Hence $h_1(x)$ and $\lfloor e\;  g(x)!\rfloor$
differ on finitely many $x$'s.
\\
2. Use Theorem~\ref{thm:e factorial}
and Proposition~\ref{p:closure composition}.
\end{proof}

\begin{example}
The following functions have integral difference ratios:
$$
x\mapsto\left\{\begin{array}{ll}
\left\lfloor e\;(x^2-5x+6)!)\right\rfloor & \text{if $x\neq2,3$}\\
1&\text{if $x=2,3$}
\end{array}\right.
\quad,\quad
x\mapsto\left\{\begin{array}{ll}
\lfloor e\; \lfloor e\;x! \rfloor !\rfloor & \text{if $x\neq0$}\\
2&\text{if $x=0$}
\end{array}\right.
$$
\end{example}

\subsection{Examples with generalized hyperbolic functions}
\label{ss:e factorial generalized}
%
The proof of Theorem~\ref{thm:e factorial} can be extended
to get more functions having integral difference ratios
and which are finite modifications of functions around the factorial functions.
Namely, for any given period $k\geq2$ and any $a\in\Z\setminus\{0\}$,
there exist  real numbers  $\alpha_0,\ldots,\alpha_{k-1}$ such that
the function $g$ defined by
$g(x)= \lfloor\alpha_s\,a^x\;x!\rfloor$  for $x\in s+k\N$
has integral difference ratios.
The main examples (Theorem~\ref{thm:e factorial}) correspond to
the (here excluded) degenerate case $k=1$.

First, we need simple results about the generalized hyperbolic functions
(a notion which goes back to V.~Ricatti, 1754,
for instance cf. \cite{Ungar1982,Muldoon2005}).
\begin{definition}\label{def:generalized hyperbolic}
Let $\gamma\in\RR$.
For\ $k,r\in\N$ such that $k\geq2$ and $r<k$,
the $\gamma$-hyperbolic function\ $F^\gamma_{k,r}:\RR\to\RR$\
is defined as follows: for $t\in\RR$,
$$
F^\gamma_{k,r}(t)\ =\ \sum\limits_{n\in k\N+r}
\gamma^{\lfloor n/k\rfloor}\,\dfrac{t^n}{n!}
\ =\ \sum\limits_{n\in \N} \gamma^n\,\dfrac{t^{kn+r}}{(kn+r)!}
$$
(so that
$F^1_{2,0}=\cosh$,\ $F^1_{2,1}=\sinh$,\
$F^{-1}_{2,0}=\cos$,\ $F^{-1}_{2,1}=\sin$\
are the usual hyperbolic and trigonometric functions).
\end{definition}
Recall some properties of the $1$-hyperbolic functions.
\begin{lemma}\label{l:parts of exp}
Let\ $k,r\in\N$ be such that $k\geq2$ and $r<k$.
\\
1. If $t\neq0$ and $-1<t<1$
then the sign of\ $F^1_{k,r}(t)$\ is that of\ $t^r$\ and
$$
0\ <\ \dfrac{F^1_{k,r}(t)}{t^r}\ \leq\ \cosh(|t|)
\ <\ \cosh(1)\ =\ 1.543\ldots\ .
$$
2. Let $\omega=e^{2i\pi/k}$\
be the canonical primitive $k$-th root of unity in the complex plane.
For all\ \ $t\in\RR$,
\begin{equation}\label{eq:phi k r}
F^1_{k,r}(t)\ =\ \dfrac{1}{k}\ 
\sum\limits_{\ell=0}^{\ell=k-1}\omega^{-\ell r}\; e^{\omega^\ell t}
\ =\ \dfrac{1}{k}\ 
\sum\limits_{\ell=0}^{\ell=k-1} e^{t\,\cos\left(\ell r\,\dfrac{2\pi}{k}\right)}
\;\cos\left(-\ell r\,\dfrac{2\pi}{k} + t\,
                 \sin\left(\ell\,\dfrac{2\pi}{k}\right)\right)
\end{equation}
3. For $q\geq 1$, the $q$-th derivative of $F^1_{k,r}$ is\
$(F^1_{k,r})^{(q)} = F^1_{k,s}$,
where\ $0\leq s <k$\ and\ $s\equiv r-q\pmod k$.
\end{lemma}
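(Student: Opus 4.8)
The lemma packages three elementary facts about the $1$-hyperbolic function $F^1_{k,r}(t)=\sum_{n\in k\N+r} t^n/n!$. The plan is to treat the three parts in the order (2), (3), (1), because part (2) supplies the roots-of-unity filter that makes the other two transparent, and part (1) is best deduced from the series itself once the structure is clear.

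\textbf{Part (2): the roots-of-unity filter.} The key identity is that, for the canonical primitive $k$-th root $\omega=e^{2i\pi/k}$, one has $\frac{1}{k}\sum_{\ell=0}^{k-1}\omega^{\ell(n-r)}$ equal to $1$ when $n\equiv r\pmod k$ and $0$ otherwise (the standard discrete orthogonality of characters on $\Z/k\Z$). First I would write $\frac{1}{k}\sum_{\ell=0}^{k-1}\omega^{-\ell r}e^{\omega^\ell t}$, expand each $e^{\omega^\ell t}=\sum_n \omega^{\ell n}t^n/n!$, interchange the (absolutely convergent) sums, and apply the filter to kill every term whose index is not congruent to $r$; what survives is exactly $\sum_{n\in k\N+r}t^n/n!=F^1_{k,r}(t)$. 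This gives the first equality. The second equality is then just a real-part computation: since $F^1_{k,r}(t)$ is real and the summands for $\ell$ and $k-\ell$ are complex conjugates, I would write $\omega^\ell = e^{2i\pi\ell/k}$, so $e^{\omega^\ell t}=e^{t\cos(2\pi\ell/k)}e^{i\,t\sin(2\pi\ell/k)}$ and $\omega^{-\ell r}=e^{-2i\pi\ell r/k}$, and collect the real part as a cosine, yielding the displayed trigonometric form in \eqref{eq:phi k r}.

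\textbf{Part (3): the derivative rule.} This is immediate from the series: differentiating $t^n/n!$ gives $t^{n-1}/(n-1)!$, so term-by-term differentiation shifts the residue class from $r$ to $r-1$, i.e. $(F^1_{k,r})' = F^1_{k,s}$ with $s\equiv r-1\pmod k$ (interpreting the shift cyclically, since the $n=r$ term with $r=0$ contributes nothing and the lowest surviving exponent reappears in class $k-1$). Iterating $q$ times gives $s\equiv r-q\pmod k$ with $0\le s<k$. The only care needed is to justify differentiating the power series inside its (infinite) radius of convergence, which is routine.

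\textbf{Part (1): the sign and size bounds.} Here I would work directly from $F^1_{k,r}(t)=t^r\sum_{n\in\N} t^{kn}/(kn+r)!$. For $-1<t<1$, $t\ne0$, every summand $t^{kn}/(kn+r)!$ is strictly positive (as $kn$ is even... or rather $t^{kn}\ge 0$ since $kn$ need not be even, so a little care: if $k$ is odd and $t<0$ some powers are negative). The cleanest route is to bound $\bigl|\sum_{n} t^{kn}/(kn+r)!\bigr|$: since the terms $t^{kn}/(kn+r)!$ all have absolute value at most $|t|^{kn}/(kn)!$ and these are a subseries of $\cosh(|t|)=\sum_m |t|^{2m}/(2m)!$... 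The main obstacle, and the step I expect to require the most care, is precisely the sign claim when $k$ is odd: establishing that $F^1_{k,r}(t)/t^r$ is strictly positive and bounded above by $\cosh(|t|)$ uniformly, rather than merely bounded in absolute value. I would handle this by comparing the series $\sum_n t^{kn}/(kn+r)!$ termwise against the even cosh-series, grouping consecutive terms to control signs in the odd-$k$ case, and then use $\cosh(|t|)<\cosh(1)=1.543\ldots$ for $|t|<1$ to close the chain of inequalities.
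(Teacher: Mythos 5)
Your proposal follows the paper's proof essentially step for step: parts (2) and (3) are handled by the same roots-of-unity filter (expand $e^{\omega^\ell t}$, interchange sums, apply orthogonality, then take real parts) and the same term-by-term differentiation with cyclic shift of the residue class, and for part (1) the paper uses precisely the pairing of consecutive terms that you propose for the odd-$k$, $t<0$ case. The only thing you leave unstated is the decisive inequality that makes the pairing work: writing $F^1_{k,r}(t)/t^r=\sum_{n}\bigl(|t|^{2nk}/(2nk+r)!+\varepsilon\,|t|^{(2n+1)k}/((2n+1)k+r)!\bigr)$ with $\varepsilon=\pm1$, each pair is strictly positive because $|t|^{2nk}>|t|^{(2n+1)k}$ (since $0<|t|<1$) and $(2nk+r)!<((2n+1)k+r)!$ (since $k\geq2$), and the upper bound then follows by taking $\varepsilon=1$ and comparing with the subseries $\sum_m|t|^{2m}/(2m)!=\cosh(|t|)$.
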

\begin{proof}
1. For $-1<t<1$, $t\neq0$, we have
\begin{equation}\label{eq:Fkr t over tr}
\dfrac{F^1_{k,r}(t)}{t^r}
\ =\ \sum_{n\in\N}\dfrac{t^{kn}}{(kn+r)!}
\ =\ 
\sum_{n\in\N}\left(\dfrac{|t|^{2nk}}{(2nk+r)!} 
    +\varepsilon\ \dfrac{|t|^{(2n+1)k}}{((2n+1)k+r)!}\right)
\end{equation}
where $\varepsilon=1$ if $t>0$ or $k$ is even
and $\varepsilon=-1$ if $t<0$ and $k$ is odd.

Since $k\geq2$ we have $((2n+1)k+r)!>(2nk+r)!$
and since $|t|<1$ and $t\neq0$ we have $|t|^{2nk}>|t|^{(2n+1)k}$.
In particular, for both possible values of $\varepsilon$,
the last sum in \eqref{eq:Fkr t over tr} consist of strictly positive terms
hence \ $F^1_{k,r}(t)/t^r$ is strictly positive.
Also, since $k\geq2$ and $r\geq0$,
\begin{multline*}
\dfrac{F^1_{k,r}(t)}{t^r}
\ \leq\ 
\sum_{n\in\N}\left(\dfrac{|t|^{2nk}}{(2nk+r)!} 
    + \dfrac{|t|^{(2n+1)k}}{((2n+1)k+r)!}\right)
\\\leq\ 
\sum_{n\in\N}\left(\dfrac{|t|^{4n}}{(4n)!} 
    + \dfrac{|t|^{4n+2}}{(4n+2)!}\right)
\ =\ \sum_{m\in\N}\dfrac{|t|^{2m}}{(2m)!}
\ =\ \cosh(|t|)
\ <\ \cosh(1)
\ =\ 1.543\ldots\ .
\end{multline*}
2. Arguing in the complex plane, for $\ell=0,\ldots,k-1$ and $t\in\RR$
and $r=0,\ldots,k-1$, we have
$$
\begin{array}{c}
e^{\omega^\ell t}
\ =\ \sum\limits_{n\in\N}\dfrac{\omega^{\ell n} t^n}{n!}
\ =\ \sum\limits_{u=0}^{u=k-1}
\sum\limits_{m\in\N} \omega^{\ell\,(km+u)}\,\dfrac{t^{km+u}}{(km+u)!}
\ =\ \sum\limits_{u=0}^{u=k-1}\omega^{\ell u}\,F^1_{k,u}(t)
\medskip\\
\dfrac{1}{k}\;
\sum\limits_{\ell=0}^{\ell=k-1} \omega^{-\ell r}\,e^{\omega^\ell t}
\ =\ 
\dfrac{1}{k}\;\sum\limits_{u=0}^{u=k-1}
\left(\sum\limits_{\ell=0}^{\ell=k-1} 
\omega^{\ell (u-r)}\right)\, F^1_{k,u}(t)
\ =\ F^1_{k,r}(t)
\end{array}
$$
since $\sum_{\ell=0}^{\ell=k-1}\omega^{\ell (u-r)}$
is equal to $k$ for $u=r$
and equal to $0$ for $u\in\{0,\ldots,k-1\}\setminus\{r\}$.
Now, since $t\in\RR$ so is $F^1_{k,r}(t)$
hence $F^1_{k,r}(t)$ is equal to the real part of the above expression
in the complex plane.
To conclude, observe that
\begin{multline*}
\omega^{-\ell r}\,e^{\omega^\ell t}
\ =\ e^{i\,\left(-\ell r\;2\pi/k\right)}\;
e^{t\,\cos\left(\ell \;2\pi/k\right)
  + i\,t\;\left(\sin\left(\ell \;2\pi/k\right)\right)}
\\
=\ e^{t\,\cos\left(\ell \;2\pi/k\right)}\;
e^{i\,\left(-\ell r\;2\pi/k 
                 + t\,\sin\left(\ell \;2\pi/k\right)\right)}\ .
\end{multline*}
3. Using the definition of $F^1_{k,r}$ as a series,
the derivative of $F^1_{k,r}$ is
$$
(F^1_{k,r})'(t)
= \sum\limits_{n\in k\N+r,\;n\geq1} \dfrac{t^{n-1}}{(n-1)!}
=\left\{\begin{array}{ll}
F^1_{k,r-1}(t)&{\text \ if \ } r\geq 1\smallskip\\
F^1_{k,k-1}(t)&{\text \ if \ } r=0 
\end{array}\right.
$$
An obvious induction on $q$ concludes the proof.
\end{proof}

\begin{theorem}\label{thm:e factorial generalized}
For any\ $a\in\Z\setminus\{0\}$,
$k\in\N\setminus\{0,1\}$, $r\in\{0,\ldots,k-1\}$,
let $\+F_{a,k,r}$ and $\+C_{a,k,r}$ be the following functions $\N\to\Z$~:
$$
\begin{array}{clc}
\+F_{a,k,r}(x)&\quad&\+C_{a,k,r}
\\
\left\{\begin{array}{l}
\left\lfloor F^1_{k,0}(1/a)\;a^x\;x!\right\rfloor
\\\qquad\vdots\\
\left\lfloor F^1_{k,k-1}(1/a)\;a^x\;x!\right\rfloor
\end{array}\right.
&
\begin{array}{l}
\text{if } x\in k\N+r
\\\qquad\vdots\\
\text{if } x\in k\N+r+k-1
\end{array}
&
\left.\begin{array}{l}
\left\lceil F^1_{k,0}(1/a)\;a^x\;x!\right\rceil
\\\qquad\vdots\\
\left\lceil F^1_{k,k-1}(1/a)\;a^x\;x!\right\rceil
\end{array}\right\}
\end{array}
$$
Let us denote $f\oplus\{(0,n_0),\ldots,(\ell,n_\ell)\}$
the function $g$ such that
$g(x)=f(x)$ if $x>\ell$ and $g(t)=n_t$ if $0\leq t\leq \ell$.

The following functions $\N\to\Z$ have integral difference ratios:
\begin{conditionsbullet}
\item
Case $|a|\geq2$ and $r=0$.
$\+F_{a,k,0}$ and $\+C_{a,k,0}$,
\item
Case $1\leq r<k$ and
either $a\geq2$ or $a\leq-2$ and $k-r$ is even.\\
$\+F_{a,k,r}\oplus\{(0,0),\ldots(r-1,0)\}$ 
and $\+C_{a,k,r}\oplus\{(0,1),\ldots(r-1,1)\}$,
\item
Case $1\leq r<k$ and $a\leq-2$ and $k-r$ is odd.\\
$\+F_{a,k,r}\oplus\{(0,-1),\ldots(r-1,-1)\}$ 
and $\+C_{a,k,r}\oplus\{(0,0),\ldots(r-1,0)\}$,
\item
Case $a=1$ and $r=0$.
$\+F_{1,k,0}\oplus\{(0,1)\}$ and $\+C_{1,k,0}\oplus\{(0,2)\}$,
\item
Case $a=1$ and $1\leq r<k$.\\
$\+F_{1,k,r}\oplus\{(0,0)\ldots(r-1,0)\}$ 
and $\+C_{a,k,r}\oplus\{(0,1)\ldots(r-1,1)\}$,
\item
Case $a=-1$ and $r=0$ and $k$ is even.\\
$\+F_{-1,k,0}\oplus\{(0,0)\}$ and $\+C_{-1,k,0}\oplus\{(0,1)\}$,
\item
Case $a=-1$ and $r=0$ and $k$ is odd.\\
$\+F_{-1,k,0}\oplus\{(0,1)\}$ and $\+C_{-1,k,0}\oplus\{(0,0)\}$,
\item
Case $a=-1$ and $1\leq r<k$ and $k$ is even.\\
$\+F_{-1,k,r}\oplus\{0,0)\ldots(r-1,0\}$ 
and $\+C_{a,k,r}\oplus\{(0,1)\ldots(r-1,1)\}$.
\item
Case $a=-1$ and $1\leq r<k$ and $k$ is odd.\\
$\+F_{-1,k,r}\oplus\{(0,-1)\ldots(r-1,-1)\}$ 
and $\+C_{a,k,r}\oplus\{(0,0)\ldots(r-1,0)\}$.
\end{conditionsbullet}
\end{theorem}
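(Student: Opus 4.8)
I would follow the proof of Theorem~\ref{thm:e factorial} in spirit, replacing the full exponential Newton series by its restriction to one residue class modulo~$k$. Fix $a\in\Z\setminus\{0\}$, $k\geq2$ and $r\in\{0,\ldots,k-1\}$ and set
\[
g_{a,k,r}(x)\ =\ \sum_{n\in k\N+r} a^n\,n!\,\dbinom{x}{n}.
\]
As $n!\mid a^n n!$, Corollary~\ref{rk:main1} shows at once that $g_{a,k,r}$ has integral difference ratios, and since the constants $\pm1$ trivially do, Proposition~\ref{p:closure sum} gives the same for $g_{a,k,r}\pm1$. It therefore suffices to prove that each function listed in the statement coincides with one of $g_{a,k,r}$, $g_{a,k,r}+1$, $g_{a,k,r}-1$.

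The engine is the following identity, valid for $x\geq r$, where $j=(x-r)\bmod k$ (so $x\in k\N+r+j$): the binomials truncate the series for $F^1_{k,j}(1/a)$ exactly at $m=x-r$, whence
\[
F^1_{k,j}(1/a)\,a^x\,x!\;-\;g_{a,k,r}(x)\ =\ a^x\,x!\!\!\!\sum_{\substack{m\geq x-r+k\\ m\equiv j\,(k)}}\!\!\!\frac{(1/a)^m}{m!}\ =:\ R(x).
\]
Factoring out the first term, whose modulus is $L(x)=|a|^{-(k-r)}/\big((x+1)\cdots(x+k-r)\big)$, the remaining series has the shape treated in Lemma~\ref{l:parts of exp}(1); comparison with $\cosh$ gives $|R(x)|\leq L(x)\cdot\cosh(1)$. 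Since $k\geq2$ forces $(x+1)\cdots(x+k-r)\geq2$ for every $x\geq r$, we obtain the clean uniform bound $|R(x)|<\tfrac12\cosh(1)<1$, with no exceptional small arguments to treat by hand.

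The whole theorem now turns on the sign of $R(x)$. For $a\geq1$ every summand is positive, so $R(x)>0$. For $a\leq-1$ the same comparison shows the tail is dominated by its first term, $|R(x)-(\text{first term})|\leq L(x)(\cosh(1)-1)<L(x)$, so $\operatorname{sign}R(x)=\operatorname{sign}(a^{r-k})=(-1)^{k-r}$, uniformly in $x$. Consequently: when $R(x)>0$ one has $g_{a,k,r}(x)=\lfloor F^1_{k,j}(1/a)\,a^x x!\rfloor$, so $g_{a,k,r}$ realizes $\+F_{a,k,r}$ and $g_{a,k,r}+1$ realizes $\+C_{a,k,r}$; when $R(x)<0$ the roles of $g_{a,k,r}$ and $g_{a,k,r}-1$ are interchanged. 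Thus $a\geq1$ together with the parity of $k-r$ (for $a\leq-1$) partitions the statement into its floor/ceiling cases. The initial constants in the $\oplus$-patches are read off directly from $g_{a,k,r}$: for $r\geq1$ the defining sum is empty on $\{0,\ldots,r-1\}$, giving $g_{a,k,r}=0$ there (hence $0$, $+1$, or $-1$ in the patches, according to the floor/ceiling assignment), while for $r=0$ the only surviving summand at $x=0$ is the $n=0$ term, giving $g_{a,k,r}(0)=1$.

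The single real obstacle is bookkeeping, not analysis: one must match each $(\operatorname{sign}a,\,\text{parity of }k-r,\,\text{and whether }r=0)$ combination to the corresponding bullet, and check that the constant recorded at each initial point agrees with the value of $g_{a,k,r}$ (resp.\ $g_{a,k,r}\pm1$) there. I expect every bullet to fall out of the single dichotomy above, the floor/ceiling choice being governed by $\operatorname{sign}R=(-1)^{k-r}$ (with $a\geq1$ playing the role of the even case); particular care is needed for $a=-1$, where $|1/a|=1$ sits at the boundary of Lemma~\ref{l:parts of exp}(1) and where the $r=0$ versus $r\geq1$ distinction changes the initial value from $1$ to $0$.
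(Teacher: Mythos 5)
Your argument is essentially the paper's: the same restricted Newton series $f_{a,k,r}(x)=\sum_{n\in k\N+r}a^n\,n!\binom{x}{n}$, the same appeal to Corollary~\ref{rk:main1}, the same error term (your $R(x)$ is the paper's $\Delta$), the same $\cosh(1)$ majorization via Lemma~\ref{l:parts of exp}, and the same sign dichotomy governed by the sign of $a^{k-r}$. The one genuine difference is how the remainder is bounded: the paper produces it as a Taylor--Lagrange remainder $F^1_{k,0}(\theta/a)/\bigl(a^{k-r}(x+1)\cdots(x+k-r)\bigr)$ but then retains only the factor $x+1$ in the denominator, which forces a separate treatment of $x=0$ when $|a|=1$; you keep the whole product, observe that it is at least $2$ for every $x\geq r$, and so get the uniform bound $|R(x)|<\tfrac12\cosh(1)<1$ with no exceptional arguments --- a small but real simplification. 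Do not underestimate the ``bookkeeping'' you defer, though: it is the only place where the nine bullets are actually verified, and carrying your own dichotomy through reveals that it does not reproduce the list exactly as printed. For instance, for $a=-1$, $r=0$ one has $R(0)=F^1_{k,0}(-1)-1$, which is positive for $k$ even and negative for $k$ odd, so for $k$ even $\+F_{-1,k,0}(0)=1=f_{-1,k,0}(0)$ and no patch at $0$ is needed, whereas the printed bullet imposes $\oplus\{(0,0)\}$ (the even and odd subcases appear to be interchanged in the statement); this is precisely the kind of slip the final check exists to catch, so the matching of each bullet to $f_{a,k,r}$, $f_{a,k,r}\pm1$ must be written out rather than asserted.
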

We first give an example, then we will prove the Theorem.
\begin{example}
The functions corresponding to $a=k=2$ are
$$
\begin{array}{c}
\fbox{\text{Case\ }r=0}\\
\left\{\begin{array}{ll}
\lfloor\cosh(1/2)\;2^x\;x!\rfloor&\text{if\quad}x\in2\N\\
\lfloor\sinh(1/2)\;2^x\;x!\rfloor&\text{if\quad}x\in2\N+1
\end{array}\right.
\end{array}
\ \
\begin{array}{c}
\fbox{\text{Case\ }r=1}\\
\left\{\begin{array}{ll}
0&\text{if\quad}x=0\\
\lfloor\sinh(1/2)\;2^x\;x!\rfloor&\text{if\quad}x\in2\N+1\\
\lfloor\cosh(1/2)\;2^x\;x!\rfloor&\text{if\quad}x\in2\N+2
\end{array}\right.
\end{array}
$$
The coefficients $F^1_{3,r}(1/3)$ occurring in the functions
corresponding to $a=k=3$ are given by the following formulas
$$
\begin{array}{rcl}
F^1_{3,0}(t)&=&(1/3)\;\left(e^t+2\,e^{-t/2}\,
\cos\left(t\,\sqrt{3}/2\right)\right)\\
F^1_{3,1}(t)&=&(1/3)\;\left(e^t-2\,e^{-t/2}\,
\cos\left(t\,\sqrt{3}/2 + \pi/3\right)\right)\\
F^1_{3,2}(t)&=&(1/3)\;\left(e^t-2\,e^{-t/2}\,
\cos\left(t\,\sqrt{3}/2 - \pi/3\right)\right)
\end{array}
$$
\end{example}
\begin{proof}[Proof of Theorem~\ref{thm:e factorial generalized}]
Since $F^1_{k,s}(t)=\sum_{n\in \N}t^{kn+s}/(kn+s)!$
For $s\in\{0,\ldots,k-1\}$, we have
\begin{eqnarray*}
F^1_{k,s}(0)&=&\sum_{n\in \N}0^{kn+s}/(kn+s)!
\ =\ \left\{\begin{array}{ll}
1&\text{if $s=0$}\\
0&\text{if $1\leq s<k$}
\end{array}\right.
\end{eqnarray*}

Since the $q$-th derivative of $F^1_{k,s}$ is $F^1_{k,s'}$
with $0\leq s'<k$ and $s'\equiv s-q\pmod k$
(cf. Lemma~\ref{l:parts of exp}),
we have
$$
\left\{\begin{array}{rcll}
(F^1_{k,s})^{(q)}(0)&=&1&\text{if $q\in k\N+s$}
\\
(F^1_{k,s})^{(q)}(0)&=&0&\text{otherwise}
\end{array}\right.
$$
and $(F^1_{k,s})^{(k(u+1)+s)}=F^1_{k,0}$.
Thus, the Taylor-Lagrange development at order $k(u+1)+s-1$,
of $F^1_{k,s}$ at $t$ is, for some $\theta\in]0,1[$,
\begin{eqnarray}\notag
F^1_{k,s}(t)&=&
\left(\sum\limits_{q=0}^{q=k(u+1)+s-1}
\dfrac{t^q}{q!}\;(F^1_{k,s})^{(q)}(0)\,\right)
+\dfrac{t^{k(u+1)+s}}{(k(u+1)+s)!}\;F^1_{k,0}(\theta\,t)
\\\label{eq:init of varphi k r}
&=&
\left(\sum\limits_{m=0}^{m=u}\dfrac{t^{km+s}}{(km+s)!}\right)
+\dfrac{t^{k(u+1)+s}}{(k(u+1)+s)!}\;F^1_{k,0}(\theta\,t)
\ .
\end{eqnarray}
For $k,r\in\N$ such that $k\geq2$ and $0\leq r<k$,
let $f_{a,k,r}:\N\to\Z$ be the function associated to the Newton series
\begin{equation}\label{eq:fakr}
f_{a,k,r}(x) \ =\ \sum_{n\in k\N+r}a^n\; n!\;\dbinom{x}{n}\ .
\end{equation}
By Corollary \ref{rk:main1}, $f_{a,k,r}$ has integral difference ratios.
Recall that $\dbinom{x}{n}=0$ for $n>x$.
Thus,
\begin{equation}\label{eq:fak for less than r}
f_{a,k,r}(x)=0\qquad\text{if\ \ $0\leq x<r$}
\end{equation}
Also, for $u\in\N$,  $s \in\{0, . . . , k-1\}$ and $x=uk+r+s$, we have 
$$
\begin{array}{rclll}
f_{a,k,r}(x)&=& \sum\limits_{p=0}^{p=u}
a^{pk+r}\,(pk+r)!\;\dbinom{uk+r+s}{pk+r}
&=& a^x\, x!\;\sum\limits_{p=0}^{p=u}\dfrac{(1/a)^{x-pk-r}}{(x-pk-r)!}
\smallskip\\
&=&a^x\, x!\;\sum\limits_{p=0}^{p=u}
\dfrac{(1/a)^{k(u-p)+s}}{(k(u-p)+s)!}
&=&a^x\, x!\;\sum\limits_{m=0}^{m=u}
\dfrac{(1/a)^{km+s}}{(km+s)!}
\end{array}
$$
Using equation \eqref{eq:init of varphi k r} with $t=1/a$, we get,
for $x=uk+r+s$,
\begin{eqnarray}\notag
f_{a,k,r}(x)&=&a^x\,x!\; F^1_{k,s}(1/a) - 
a^x\,x!\;\dfrac{(1/a)^{k(u+1)+s}}{(k(u+1)+s)!}\;
                   F^1_{k,0}(\theta/a)
\\\label{eq:def Delta}
\text{and letting}\quad \Delta&=&a^x\,x!\; F^1_{k,s}(1/a) - f_{a,k,r}(x)
\\\notag
\text{we have}\quad 
\Delta&=&a^x\,x!\;\dfrac{(1/a)^{k(u+1)+s}}{(k(u+1)+s)!}\;
                     F^1_{k,0}(\theta/a)
\\\label{eq:Delta}
&=&\dfrac{F^1_{k,0}(\theta/a)}
               {a^{k-r}\prod_{j=r+1}^{j=k}ku+s+j}
\end{eqnarray}

Since $a\in\Z\setminus\{0\}$, we have $0<|\theta/a|<1$.
Also, $(\theta/a)^0=1$
and point 1 of Lemma~\ref{l:parts of exp} yields
\begin{equation}\label{eq:cosh1}
0\ <\ F^1_{k,0}(\theta/a)\ <\ \cosh(1)\ =\ 1.543\ldots\ .
\end{equation}
Since $x=ku+r+s$, we have $ku+s+j=x+1$ for $j=r+1$.
Inequalities~\eqref{eq:cosh1} and $k-r\geq1$ insure that
\begin{equation*}
0\ <\ |\Delta|\ <\ \dfrac{1.543\ldots}{|a|^{k-r}\,(x+1)}
\ <\ \dfrac{1.543\ldots}{|a|\,(x+1)}\ .
\end{equation*}
Equation~\eqref{eq:Delta} shows that
the sign of $\Delta$ is that of $a^{k-r}$.

Since equation~\eqref{eq:fak for less than r} gives $f_{a,k,r}(x)$
for $0\leq x<r$, it suffices to consider the values $x\geq r$.

{\it Case $a\geq2$ and Case $a\leq-2$ and $k-r$ even.}
For every $x\in\N$ we have $0<\Delta< 1$.
Since $f_{a,k,r}(x)\in\Z$,
the definition of $\Delta$ given by \eqref{eq:def Delta} yields
\begin{equation}\label{eq:a greater 2}
\text{for $x\in k\N+r+s$}\quad
\left\{\begin{array}{rcl}
f_{a,k,r}(x)&=&\left\lfloor a^x\,x!\; F^1_{k,s}(1/a) \right\rfloor
\medskip\\
f_{a,k,r}(x)+1&=&\left\lceil a^x\,x!\; F^1_{k,s}(1/a) \right\rceil
\end{array}\right.
\end{equation}

{\it Case $a\leq-2$ and $k-r$ odd.}
For every $x\in\N$ we have $-1<\Delta< 0$ hence
\begin{equation}\label{eq:a less -2}
\text{for $x\in k\N+r+s$}\quad
\left\{\begin{array}{rcl}
f_{a,k,r}(x)&=&\left\lceil a^x\,x!\; F^1_{k,s}(1/a) \right\rceil
\medskip\\
f_{a,k,r}(x)-1&=&\left\lfloor a^x\,x!\; F^1_{k,s}(1/a) \right\rfloor
\end{array}\right.
\end{equation}

{\it Case $a=1$ and Case $a=-1$ and $k-r$ even.}
Then $0<\Delta< 1$ for all $x\geq1$ hence
\eqref{eq:a greater 2} holds with the extra hypothesis $x\geq1$.

{\it Case $a=-1$ and $k-r$ odd.}
Then $0<\Delta< 1$ for all $x\geq1$ hence
\eqref{eq:a less -2} holds with the extra hypothesis $x\geq1$.

In both cases $a=1$ and $a=-1$, we also have
\begin{eqnarray*}
f_{-1,k,r}(0)\ =\ f_{1,k,r}(0)&=&\left\{\begin{array}{ll}
1&\text{if\ $r=0$}\\
0&\text{if\ $1\leq r<k$}
\end{array}\right.
\end{eqnarray*}
Thus, the functions mentioned in the theorem
are among the functions
$f_{a,k,r}$, $f_{a,k,r}-1$ and $f_{a,k,r}+1$,
all of which have integral difference ratios.
\end{proof}

\noindent{\bf Open problem.}
Theorems~\ref{thm:e factorial} and \ref{thm:e factorial generalized},
give simple analytic expressions for the functions $\N\to\Z$
associated to Newton series
$$
\sum_{n\in \N} a^n\,n!\,\dbinom{x}{n}
\quad,\quad
\sum_{n\in k\N+r} a^n\,n!\,\dbinom{x}{n}
$$
for $a\in\Z\setminus\{0\}$.
Theorem~\ref{thm:main1} invites to look at other natural Newton series
such as
$$
\sum_{n\in \N} \lcm(n)\,\dbinom{x}{n}
\quad,\quad
\sum_{n\in k\N+r} \lcm(n)\,\dbinom{x}{n}\ .
$$
Is it possible to give analytic expressions
to the associated functions?

%
\subsection{Asymptotic equivalence}\label{ss:asymptotic}
%
A simple consequence of Theorem~\ref{thm:main1} insures that
{\em any} function which grows fast enough
is asymptotically equivalent to
a function having integral difference ratios.
\begin{theorem}\label{thm:equivalent to idr}
For every function $f:\N\to\Z$ there exists some function
$g:\N\to\Z$ which has integral difference ratios and such that,
for all $x\in\N$,
\begin{equation*}
0\leq f(x)-g(x) \leq 2^x\,\lcm(x)\ .
\end{equation*}
In particular, if there is some $\varepsilon>0$ such that
$|f(x)|\geq(2e+\varepsilon)^x$ for all $x$ large enough
then $\lim_{x\to+\infty}\dfrac{f(x)}{g(x)}=1$,
i.e. $f$ and $g$ are asymptotically equivalent. 
\end{theorem}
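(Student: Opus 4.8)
The plan is to build $g$ by truncating the Newton coefficients of $f$ downward to multiples of $\lcm(k)$, so that the characterization of Theorem~\ref{thm:main1} hands us the integral difference ratios property for free. Concretely, by Proposition~\ref{p:Newton} the function $f$ has a unique Newton representation $f(x)=\sum_{k\in\N}a_k\dbinom{x}{k}$ with all $a_k\in\Z$. I would then set
$$
b_k \;=\; \lcm(k)\,\left\lfloor\frac{a_k}{\lcm(k)}\right\rfloor,
\qquad
g(x)\;=\;\sum_{k\in\N}b_k\dbinom{x}{k}.
$$
Since $\lcm(k)$ divides $b_k$ for every $k$, Theorem~\ref{thm:main1} gives that $g$ has integral difference ratios, and $g$ maps $\N$ into $\Z$ because each $b_k\in\Z$ and the sum is finite for fixed $x$.

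Next I would estimate the gap. By construction the quantity $r_k:=a_k-b_k$ is the remainder of $a_k$ modulo $\lcm(k)$, so $0\le r_k\le\lcm(k)-1$. Using that $\dbinom{x}{k}=0$ for $k>x$, this yields
$$
f(x)-g(x)\;=\;\sum_{k=0}^{x}r_k\dbinom{x}{k}\;\ge\;0,
$$
which is the lower bound. For the upper bound, monotonicity of $\lcm$ gives $r_k\le\lcm(k)-1\le\lcm(x)$ for every $k\le x$, whence
$$
f(x)-g(x)\;\le\;\lcm(x)\sum_{k=0}^{x}\dbinom{x}{k}\;=\;2^x\,\lcm(x).
$$
This settles the first displayed inequality; the whole argument reduces to the two elementary facts $\sum_k\dbinom{x}{k}=2^x$ and $\lcm(k)\mid b_k$.

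For the asymptotic statement I would feed the bound $0\le f(x)-g(x)\le 2^x\lcm(x)$ into inequality~\eqref{eq:lcm exp} of Remark~\ref{rk:lcm}. Fixing $\delta$ with $0<\delta<\varepsilon/2$, for all large $x$ we have $\lcm(x)\le(e+\delta)^x$, so
$$
f(x)-g(x)\;\le\;2^x(e+\delta)^x\;=\;(2e+2\delta)^x\;<\;(2e+\varepsilon)^x\;\le\;|f(x)|.
$$
Hence $(f(x)-g(x))/|f(x)|\le\bigl((2e+2\delta)/(2e+\varepsilon)\bigr)^x\to0$. Writing $g(x)=f(x)-(f(x)-g(x))$ and dividing by $f(x)$ then gives $f(x)/g(x)\to1$, provided one checks that $g(x)$ is eventually nonzero, which follows from $|g(x)|\ge|f(x)|-(f(x)-g(x))\ge(2e+\varepsilon)^x-(2e+2\delta)^x\to+\infty$.

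The only genuinely delicate point is the sign bookkeeping in this last step: the hypothesis controls $|f(x)|$, not $f(x)$, so $f$ may be eventually negative. Because the error $f-g$ is nonnegative, $g$ always lies weakly below $f$; when $f>0$ the strict inequality $2^x\lcm(x)<f(x)$ keeps $g$ positive, and when $f<0$ it pushes $g$ further from $0$, so in both regimes $f$ and $g$ share a sign and $f/g\to1$. This is a short observation rather than a real obstacle, so I expect the proof to be essentially complete once the rounding construction and the two estimates above are written out.
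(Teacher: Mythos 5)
Your proof is correct and follows essentially the same route as the paper: round each Newton coefficient $a_k$ down to the nearest multiple of $\lcm(k)$, invoke Theorem~\ref{thm:main1} for the integral difference ratios property, bound the discarded remainders by $\lcm(x)\sum_k\binom{x}{k}=2^x\lcm(x)$, and finish with the Chebyshev estimate \eqref{eq:lcm exp}. Your closing remark on the sign of $f$ and the eventual nonvanishing of $g$ is a small point the paper glosses over, but it is handled correctly here.
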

\begin{proof}
Consider the Newton coefficients $(a_k)_{k\in\N}$ of $f$
(cf. Proposition~\ref{p:Newton}).
Let $a_k=\lcm(k)q_k+b_k$ where $0\leq b_k<\lcm(k)$
and set $\widetilde{a_k}=\lcm(k)q_k$.
Since $\lcm(k)$ divides $\widetilde{a_k}$ for all $k$'s,
the function $g(x)=\sum_{k\in\N}\widetilde{a_k}\dbinom{x}{k}$
has integral difference ratios.
Also, $0\leq f(x)-g(x) = \sum_{k\leq x}b_k\dbinom{x}{k}
<\sum_{k\leq x}\lcm(k)\dbinom{x}{k} 
<\left(\sum_{k\leq x}\dbinom{x}{k}\right)\,\lcm(x)
= 2^x\,\lcm(x)$
so that
$0\leq\left| 1-\dfrac{g(x)}{f(x)}\right|
<\dfrac{2^x\,\lcm(x)}{|f(x)|}$
which tends to $0$ when $x$ tends to $+\infty$
thanks to majoration \eqref{eq:lcm exp} in Remark~\ref{rk:lcm}
and the assumption on $f$.
\end{proof}
\begin{corollary}\label{cor:alpha factorial equivalent}
For any real numbers $\alpha, a$ such that $a>0$,
the functions $x\mapsto\lfloor\alpha\; a^x\;x!\rfloor$ 
and  $x\mapsto\lceil\alpha\; a^x\;x!\rceil$  from $\N$ to $\Z$ are asymptotically equivalent to
some function $g:\N\to\Z$ having integral difference ratios.
\end{corollary}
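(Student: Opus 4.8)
The plan is to deduce this directly from Theorem~\ref{thm:equivalent to idr}, whose hypothesis is precisely a lower bound on the growth rate of the function to be approximated. Write $f^-(x)=\lfloor\alpha\,a^x\,x!\rfloor$ and $f^+(x)=\lceil\alpha\,a^x\,x!\rceil$; both are maps $\N\to\Z$, so Theorem~\ref{thm:equivalent to idr} applies to each and produces functions $g^-,g^+:\N\to\Z$ having integral difference ratios with $0\le f^\pm(x)-g^\pm(x)\le 2^x\,\lcm(x)$ for all $x$. If $\alpha=0$ both functions are identically $0$, which already has integral difference ratios, so we may assume $\alpha\ne0$ and treat $f^-$ and $f^+$ in parallel, writing $f$ for either of them and $g$ for the corresponding approximant.

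The only thing left to verify is the growth condition $|f(x)|\ge(2e+\varepsilon)^x$ for some fixed $\varepsilon>0$ and all large $x$, and this is where essentially all the work sits, although it is routine. Since $a>0$, the factorial dominates every exponential, so $a^x\,x!$ grows faster than any $c^x$. Concretely, Stirling's estimate gives $a^x\,x!\sim\sqrt{2\pi x}\,(ax/e)^x$, and because $ax/e\to+\infty$ we obtain $|\alpha|\,a^x\,x!/(2e+1)^x\to+\infty$. As the floor and the ceiling each differ from $\alpha\,a^x\,x!$ by less than $1$, the same super-exponential lower bound $|f(x)|\ge(2e+1)^x$ holds for all $x$ large enough, so the hypothesis of the ``in particular'' clause of Theorem~\ref{thm:equivalent to idr} is met with $\varepsilon=1$.

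The conclusion of that clause then yields $\lim_{x\to+\infty}f(x)/g(x)=1$, which is exactly the claim that each of $x\mapsto\lfloor\alpha\,a^x\,x!\rfloor$ and $x\mapsto\lceil\alpha\,a^x\,x!\rceil$ is asymptotically equivalent to a function having integral difference ratios. The main (and essentially only) obstacle is the growth estimate of the middle paragraph; once it is in place the result is immediate from the cited theorem. I would also note in passing that, by the same boundedness-of-the-rounding-error argument, $f$ is asymptotically equivalent to the genuine analytic expression $\alpha\,a^x\,x!$ itself, so the witness $g$ can be read as an integral-difference-ratio approximation of $\alpha\,a^x\,x!$ directly.
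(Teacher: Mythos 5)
Your proof is correct and follows essentially the same route as the paper: the paper's own proof is the one-line observation that Stirling's formula guarantees $a^x\,x!$ meets the growth hypothesis of Theorem~\ref{thm:equivalent to idr}, which is exactly your middle paragraph spelled out. Your extra care with the sign of $\alpha$, the bounded rounding error, and the degenerate case $\alpha=0$ only makes explicit what the paper leaves implicit.
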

\begin{proof}
Stirling's formula insures that $a^x\;x!$ satisfies
the growth condition of Theorem~\ref{thm:equivalent to idr}.
\end{proof}
\begin{remark}
Theorem~\ref{thm:equivalent to idr} shows that there
are functions having integral difference ratios
which grow arbitrarily fast.
In particular, functions growing much faster than the $a^x\;x!$
with $a\in\Z$.
\end{remark}

%
\section{Outside the family of functions with integral difference ratios}
\label{s:out}
%
As expected, the examples stated in Theorems~\ref{thm:e factorial}
and \ref{thm:e factorial generalized} are kind of exceptions:
most functions similar to these examples {\it do not}
have rational difference ratios. 
Worse, though they are asymptotically equivalent to functions
having rational difference ratios
(cf. Theorem~\ref{thm:equivalent to idr})
they can not be uniformly approximated by such functions.

In fact, it turns out that proving non uniform closeness
is a very manageable approach to prove failure of the
integral difference ratios property.

Using a classical result
in the theory of distribution modulo one,
we give a general negative result for uniform closeness
involving a measure zero set of possible exceptions.
Then we look at the problem for some particular classes of functions.

\subsection{Uniform closeness}
%
\begin{definition}
Two functions $f,g:\N\to\RR$ are \emph{uniformly close} if $f-g$ is bounded,
i.e. there exists $M$ such that $|f(x)-g(x)|\leq M$ for all $x\in\N$.
\end{definition}
Some straightforward closure properties will be
used together with Propositions~\ref{p:closure sum}
and \ref{p:closure composition}.
\begin{proposition}\label{p:closure sum close}
If $\varphi,\psi:\N\to\Z$ are uniformly close to $f,g:\N\to\Z$
then $\varphi+\psi$ (resp. $k\varphi$) is uniformly close to
$f+g$ (resp. $kf$) for all $k\in\Z$.
\end{proposition}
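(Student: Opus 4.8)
The plan is to unwind the definition of uniform closeness and reduce both assertions to the triangle inequality together with the homogeneity of the absolute value. By hypothesis there exist constants $M_1, M_2 \geq 0$ such that $|\varphi(x) - f(x)| \leq M_1$ and $|\psi(x) - g(x)| \leq M_2$ for all $x \in \N$. The whole point is that these bounds are uniform in $x$, so they may be combined additively or scaled without introducing any dependence on $x$.

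For the sum, I would start from the pointwise identity
\[
(\varphi+\psi)(x) - (f+g)(x) = \big(\varphi(x) - f(x)\big) + \big(\psi(x) - g(x)\big)
\]
valid for each $x \in \N$, and then apply the triangle inequality to bound the absolute value of the left-hand side by $M_1 + M_2$. Since this bound is a constant independent of $x$, it witnesses that $\varphi + \psi$ is uniformly close to $f + g$.

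For the scalar multiple, I would fix $k \in \Z$ and use the identity $(k\varphi)(x) - (kf)(x) = k\,(\varphi(x) - f(x))$, so that $|(k\varphi)(x) - (kf)(x)| = |k|\,|\varphi(x) - f(x)| \leq |k|\,M_1$, which is again constant in $x$. This gives the uniform bound for $k\varphi$ relative to $kf$.

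I do not expect a genuine obstacle: the statement merely records that the functions uniformly close to a fixed pair are stable under addition and integer scalar multiplication, i.e. that the relation of uniform closeness is compatible with the $\Z$-module structure on functions $\N\to\Z$. The only subtlety worth flagging is that one must choose the two bounds $M_1, M_2$ uniformly in $x$ \emph{before} combining them, which is precisely what the definition of uniform closeness guarantees.
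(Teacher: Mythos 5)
Your argument is correct and is exactly the standard one: the paper states this proposition without proof, treating it as immediate, and your triangle-inequality plus homogeneity argument is precisely the routine verification being omitted. Nothing to add.
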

\begin{proposition}\label{p:closure composition close}
Let $f,g, \varphi, \psi$ be functions $\N\to\RR$.
If $\varphi$ is uniformly close to $f$,
and $\psi$ differs from $g$ on finitely many points,
then $\varphi\circ\psi$ is uniformly close to $f\circ g$.
\end{proposition}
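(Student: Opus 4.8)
The plan is to split $\N$ into the cofinite set where $\psi$ agrees with $g$ and the finite set where the two functions differ, and to bound $|\varphi(\psi(x))-f(g(x))|$ separately on each piece. (Implicitly we use that $g$ and $\psi$ take values in $\N$, so that the compositions $f\circ g$ and $\varphi\circ\psi$ are defined.)

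First I would use the hypothesis that $\varphi$ is uniformly close to $f$ to fix a constant $M$ with $|\varphi(y)-f(y)|\le M$ for every $y\in\N$, and let $S\subset\N$ denote the finite set of points on which $\psi$ and $g$ differ. For $x\notin S$ we have $\psi(x)=g(x)$, hence $\varphi(\psi(x))=\varphi(g(x))$, and therefore
$$
|\varphi(\psi(x))-f(g(x))|=|\varphi(g(x))-f(g(x))|\le M,
$$
where the last step applies the uniform-closeness bound at the point $y=g(x)$. Thus on the cofinite complement of $S$ the difference is bounded by $M$.

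It then remains to handle the finitely many $x\in S$: each of the values $|\varphi(\psi(x))-f(g(x))|$ is a finite real number, so their maximum $M'=\max_{x\in S}|\varphi(\psi(x))-f(g(x))|$ exists (taking $M'=0$ when $S=\varnothing$). Combining the two cases, $|\varphi(\psi(x))-f(g(x))|\le\max(M,M')$ for every $x\in\N$, which is exactly the uniform closeness of $\varphi\circ\psi$ to $f\circ g$. I do not expect a genuine obstacle here; the only subtlety is recognizing that the bound for $\varphi$ and $f$ must be invoked at the moving points $y=g(x)$, which is legitimate precisely because that bound is uniform in $y$, while the finite exceptional set contributes only the bounded correction $M'$.
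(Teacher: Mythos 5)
Your proof is correct and follows essentially the same route as the paper: fix the uniform bound $M$ for $|\varphi-f|$, bound the difference by $M$ off the finite exceptional set (using that $\psi(x)=g(x)$ there), and take the maximum of $M$ with the finitely many exceptional values, exactly as the paper's constant $L=\max(M,\max_i|\varphi(\psi(a_i))-f(g(a_i))|)$. Your parenthetical remark that $g$ and $\psi$ must take values in $\N$ for the compositions to make sense is a fair observation about the statement's phrasing, which the paper glosses over as well.
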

\begin{proof}
Suppose $|\varphi(t)-f(t)|\leq M$ for all $t\in\N$
and $a_1,\ldots,a_p$ are the points on which $\psi$ differ from $g$.
Let $L=\max(M,\max_{i=1,\ldots,p}|\varphi(\psi(a_i))-f(g(a_i))|)$.
Then $|\varphi(\psi(a_i))-f(g(a_i))|\leq L$ and, for $x\neq a_1,\ldots,a_p$,
$|\varphi(\psi(x))-f(g(x))|=|\varphi(g(x))-f(g(x))|\leq L$.
\end{proof}

\subsection{A general negative result for uniform closeness to
functions having integral difference ratios}
\label{ss:non idr close}
%
Recall the following classical notion.
\begin{definition}\label{def:fractional part}
For $A\in\N\setminus\{0\}$ and $t\in\RR$,
the {\em$A$-fractional part} of $t$ is
$\{t\}_A=t-A\,\lfloor t/A\rfloor$,
i.e. $\{kA+u\}_A=u$ for any $k\in\Z$ and $u\in[0,A[$.
The 1-fractional part   $\{t\}_1$ is simply denoted by $\{t\}$.
\end{definition}
Before entering the wanted negative result,
we first observe the following fact.
\begin{lemma}\label{l:uniformly close}
Suppose $\varphi:\N\to\RR$ is uniformly close to some function $f:\N\to\Z$
such that $n$ divides $f(n)-f(0)$ for all $n\geq1$.
Then, for all $A\in\N$ big enough,
the sequence of $A$-fractional parts $(\{\varphi(nA)\}_A)_{n\in\N}$
is not dense in $[0,A]$. 
\end{lemma}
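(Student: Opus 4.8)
The plan is to exploit the divisibility hypothesis to pin down the residue of $f(nA)$ modulo $A$, and then use uniform closeness to confine all the $A$-fractional parts $\{\varphi(nA)\}_A$ to a single short arc, which cannot be dense once $A$ exceeds twice the closeness bound.

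First I would fix a constant $M$ with $|\varphi(x)-f(x)|\leq M$ for all $x\in\N$. The key observation is purely arithmetic: for every $n\geq1$ we have $A\mid nA$ and, by hypothesis, $nA\mid f(nA)-f(0)$; hence $A\mid f(nA)-f(0)$. Therefore $f(nA)\equiv f(0)\pmod A$ for all $n\in\N$ (the case $n=0$ being trivial), so that, writing $r=\{f(0)\}_A\in\{0,\dots,A-1\}$, we get $\{f(nA)\}_A=r$ for every $n$. In other words, all the integers $f(nA)$ lie in the single residue class $r+A\Z$.

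Next I would transfer this to $\varphi$. Writing $f(nA)=A q_n+r$, uniform closeness gives $\varphi(nA)=A q_n+r+\delta_n$ with $|\delta_n|\leq M$, so $\{\varphi(nA)\}_A=\{r+\delta_n\}_A$. Viewing $[0,A)$ as the circle $\RR/A\Z$, each $\{\varphi(nA)\}_A$ therefore lies within circular distance $M$ of the fixed point $r$; that is, the whole sequence is contained in the arc $B_A=\{s\in[0,A):\min(|s-r|,\,A-|s-r|)\leq M\}$, whose length is $\min(2M,A)$. For any $A>2M$ this arc has length $2M<A$, so its complement in the circle is a nonempty open arc of length $A-2M$. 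This complement contains a genuine subinterval of $[0,A]$; taking $A$ a little larger (e.g. $A\geq 2M+2$) one can even locate the avoided interval in the interior, around the antipode $r+\lfloor A/2\rfloor \pmod A$. Since no term of the sequence $(\{\varphi(nA)\}_A)_{n\in\N}$ enters this interval, the sequence is not dense in $[0,A]$. As $M$ is fixed, this holds for all $A$ large enough, which is exactly the assertion.

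I expect the only delicate point to be the elementary bookkeeping at the seam $0\equiv A$ of $[0,A)$ when $r$ is close to $0$ or to $A$: one must confirm that confining the fractional parts to the arc around $r$ really does leave an honest interval of $[0,A]$ untouched, which is why I take $A$ slightly above $2M$ so that the avoided arc about the antipode is interior to $(0,A)$. The substantive content is the modular identity $f(nA)\equiv f(0)\pmod A$, after which density fails for purely metric reasons.
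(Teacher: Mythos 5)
Your proof is correct and takes essentially the same approach as the paper's: the paper merely normalizes first, replacing $f$ by $g(n)=f(n)-f(0)$ so that the pinned residue is $0$ and the excluded interval is simply $]M,A-M[$, whereas you keep the residue $r=\{f(0)\}_A$ and argue on the circle around $r$, handling the seam explicitly. The substance --- divisibility pins $f(nA)$ to a single residue class mod $A$, and uniform closeness then confines the $A$-fractional parts to an arc of length $2M<A$ --- is identical.
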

\begin{proof}
Let $g(n)=f(n)-f(0)$.
Then $\varphi$ is also uniformly close to $g$ and
$n$ divides $g(n)$ for all $n$.
Let $M>0$ be such that $|\varphi(n)-g(n)|\leq M$ for all $n\in\N$.
Consider any $A\in\N$ such that $A>2M$.
Then $\varphi(nA)\in[g(nA)-M,g(nA)+M]$.
Since $nA$ (hence $A$) divides $g(nA)$, we have
$\{g(nA)+u\}_A=u$ for $u\in[0,A[$
and $\{g(nA)-v\}_A=\{(g(nA)-A+(A-v)\}_A=A-v$ for $-v\in]-A,0[$.
In particular,
\begin{eqnarray*}
\{\{y\}_A \mid y\in[g(nA),g(nA)+M]&=&[0,M]
\\
\{\{y\}_A \mid y\in[g(nA)-M,g(nA)[\}&=&[A-M,A[
\end{eqnarray*}
so that $\{\varphi(nA)\}_A\in[0,M]\cup[A-M,A[$ for all $n\in\N$.
Since $A>2M$ we have $M<A-M$
hence the non empty open subinterval $]M,A-M[$ of $[0,A[$
contains no point of the sequence $(\{\varphi(nA)\}_A)_{n\in\N}$.
\end{proof}
We shall use a result
from the theory of uniform distribution modulo one.
\begin{definition}\label{def:ud}
A sequence $(t_n)_{n\in\N}$ is uniformly distributed
modulo one if, for all $0\leq a<b\leq1$,
the proportion of $i$'s in $\{0,\ldots,n-1\}$ such that
the 1-fractional part $\{t_i\}$ is in $[a,b[$
tends to $b-a$ when $n$ tends to $+\infty$,
i.e. $\lim_{n\to+\infty}\dfrac{1}{n}\,
\card\{i\in\{0,\ldots,n-1\}\mid \{t_i\}\in[a,b[\}=b-a$.
\end{definition}
\begin{theorem}[Koksma, 1935, cf. Corollary 4.3 in \cite{KN}]
\label{thm:koksma}
Let $(\lambda_n)_{n\in\N}$ be a sequence of reals
such that $\inf\{|\lambda_m-\lambda_n| \mid m\neq n\} >0$.
Then, for almost all real numbers $\alpha$,
the sequence $(\alpha\,\lambda_n)_{n\in\N}$ is uniformly distributed
modulo one.
\end{theorem}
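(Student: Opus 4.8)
The plan is to combine Weyl's criterion with a second-moment (variance) estimate in $\alpha$. Recall that, by Weyl's criterion, a sequence $(t_n)_{n\in\N}$ is uniformly distributed modulo one in the sense of Definition~\ref{def:ud} if and only if for every integer $h\neq0$ one has $\frac1N\sum_{n=0}^{N-1}e^{2\pi i h t_n}\to0$ as $N\to+\infty$. Applied to $t_n=\alpha\lambda_n$, this reduces the theorem to showing that, for each fixed $h\in\Z\setminus\{0\}$, the exponential sum
\[
S_N(\alpha)=\frac{1}{N}\sum_{n=0}^{N-1}e^{2\pi i h\alpha\lambda_n}
\]
tends to $0$ for almost all $\alpha$. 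Since there are only countably many such $h$, the corresponding exceptional null sets may be unioned at the end; likewise it suffices to work on each bounded interval $\alpha\in[0,T]$, $T\in\N$, and then let $T\to+\infty$.

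The core step I would carry out is to bound the $L^2$-norm of $S_N$ on $[0,T]$. Expanding $|S_N(\alpha)|^2$ into a double sum and integrating, the diagonal $m=n$ contributes $T/N$, while each off-diagonal term satisfies $\bigl|\int_0^T e^{2\pi i h\alpha(\lambda_m-\lambda_n)}\,d\alpha\bigr|\leq \frac{1}{\pi|h|\,|\lambda_m-\lambda_n|}$. This is exactly where the hypothesis $\delta:=\inf_{m\neq n}|\lambda_m-\lambda_n|>0$ enters: the points $\lambda_n$ are uniformly discrete, so for each fixed $n$ at most one $\lambda_m$ lies in any interval of length $\delta$, whence $\sum_{m\neq n,\,m<N}|\lambda_m-\lambda_n|^{-1}=O(\delta^{-1}\log N)$. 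Summing over $n<N$ then yields
\[
\int_0^T |S_N(\alpha)|^2\,d\alpha \;\leq\; \frac{T}{N}+\frac{C\,\log N}{\delta\,|h|\,N}\;=\;O\!\left(\frac{\log N}{N}\right).
\]

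Finally, since $L^2$-convergence does not by itself give pointwise convergence, I would invoke the standard subsequence-and-Borel--Cantelli device. Along $N_k=k^2$ the above bounds are summable, $\sum_k \int_0^T|S_{N_k}|^2\,d\alpha<+\infty$, so by Beppo Levi $\sum_k|S_{N_k}(\alpha)|^2<+\infty$ for almost every $\alpha$, giving $S_{N_k}(\alpha)\to0$ a.e. To fill the gaps, for $N_k\leq N<N_{k+1}$ the extra summands are bounded trivially, $|S_N(\alpha)|\leq |S_{N_k}(\alpha)|+\frac{N_{k+1}-N_k}{N_k}$, and since $N_{k+1}-N_k=2k+1=o(N_k)$ this forces $S_N(\alpha)\to0$ a.e. Taking the union of the exceptional sets over all $h\in\Z\setminus\{0\}$ and all $T\in\N$ concludes. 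I expect the main obstacle to be precisely the off-diagonal estimate: everything hinges on converting the uniform separation hypothesis into the logarithmic bound on $\sum|\lambda_m-\lambda_n|^{-1}$, since without uniform discreteness the double sum could be far too large and the variance would fail to decay.
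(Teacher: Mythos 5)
The paper does not prove this statement at all: it is Koksma's classical metric theorem, quoted with a citation to Corollary 4.3 of Kuipers--Niederreiter, and used as a black box. Your sketch is a correct rendition of the standard proof of that result (Weyl's criterion, the $L^2$ bound on $[0,T]$ with the off-diagonal terms controlled via the uniform separation $\delta>0$ giving $\sum_{m\neq n}|\lambda_m-\lambda_n|^{-1}=O(\delta^{-1}N\log N)$, then Borel--Cantelli along $N_k=k^2$ and trivial interpolation between the $N_k$), which is essentially the argument in the cited reference, so there is nothing to compare against within the paper itself.
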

\begin{remark}\label{rk:ae}
The ``almost everywhere" restriction cannot be removed in
Theorem~\ref{thm:koksma}.
It is known (cf. Example~4.2 in \cite{KN})
that if $t$ is a Pisot-Vijayaraghavan number
(in particular, if $t$ is a rational number or $t$ is the golden number
$(1+\sqrt{5})/2$) then the sequence $(t^n)_{n\in\N}$ has no limit point
except possibly $0$ or $1$ hence is not uniformly distributed modulo one.
\end{remark}
We now come to the wanted general negative result.
\begin{theorem}[Almost everywhere negative result]\label{thm:ae}
Let $(\lambda_n)_{n\in\N}$ be a sequence of real numbers
satisfying the condition
$\inf\{|\lambda_m-\lambda_n| \mid m\neq n\} >0$.
Then, for almost all real numbers $\alpha$,
the function $n\mapsto\alpha\,\lambda_n$ is not uniformly close to
any function having integral difference ratios.
\end{theorem}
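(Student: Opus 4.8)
The plan is to combine Koksma's theorem (Theorem~\ref{thm:koksma}) with the non-density Lemma~\ref{l:uniformly close}. The starting observation is that any $f:\N\to\Z$ having integral difference ratios satisfies $n\mid f(n)-f(0)$ for all $n\geq1$ (take $a=n$, $b=0$ in the defining property), so Lemma~\ref{l:uniformly close} applies to any $\varphi$ uniformly close to such an $f$. Thus, if $n\mapsto\alpha\lambda_n$ were uniformly close to some $f$ with integral difference ratios, then for every $A$ large enough the sequence of $A$-fractional parts $(\{\alpha\lambda_{nA}\}_A)_n$ would fail to be dense in $[0,A]$. My goal is therefore to show that, for almost every $\alpha$, this very sequence \emph{is} dense in $[0,A]$ for every $A\geq1$, which will produce the desired contradiction.

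The main obstacle I anticipate is that Lemma~\ref{l:uniformly close} concerns the subsequence of indices $nA$, and uniform distribution of the full sequence $(\alpha\lambda_n)_n$ does \emph{not} transfer to its subsequence $(\alpha\lambda_{nA})_n$; so a single application of Koksma's theorem to $(\lambda_n)_n$ will not suffice. Instead I would apply Theorem~\ref{thm:koksma} separately to each subsequence. For fixed $A\geq1$, the sequence $(\lambda_{nA})_{n\in\N}$ still satisfies the gap condition $\inf\{|\lambda_{mA}-\lambda_{nA}|\mid m\neq n\}>0$, being a subsequence of $(\lambda_n)_n$ (an infimum over a subset is at least the infimum over the whole set). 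Hence Theorem~\ref{thm:koksma} yields a Lebesgue-null set $N_A\subset\RR$ such that for every $\beta\notin N_A$ the sequence $(\beta\,\lambda_{nA})_n$ is uniformly distributed modulo one.

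Next I would transfer these exceptional sets to a single null set of $\alpha$'s. Using the identity $\{A\beta\}_A=A\{\beta\}$ (immediate from Definition~\ref{def:fractional part}), I set $\beta=\alpha/A$, so that $\{\alpha\lambda_{nA}\}_A=A\,\{(\alpha/A)\lambda_{nA}\}$. Consequently, whenever $\alpha/A\notin N_A$, the sequence $((\alpha/A)\lambda_{nA})_n$ is uniformly distributed modulo one, its $1$-fractional parts are dense in $[0,1]$, and therefore $(\{\alpha\lambda_{nA}\}_A)_n$ is dense in $[0,A]$. I would then put $N=\bigcup_{A\geq1}A\,N_A$, where $A\,N_A=\{A\beta\mid\beta\in N_A\}$; this is a countable union of null sets and hence null, since scaling a null set by $A>0$ preserves nullity. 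For every $\alpha\notin N$ one has $\alpha/A\notin N_A$ for all $A$, so $(\{\alpha\lambda_{nA}\}_A)_n$ is dense in $[0,A]$ for \emph{every} $A\geq1$.

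Finally I would assemble the contradiction. Fixing $\alpha\notin N$ and supposing $n\mapsto\alpha\lambda_n$ uniformly close to some $f$ with integral difference ratios, Lemma~\ref{l:uniformly close} (via the divisibility noted in the first paragraph) provides a threshold $A_0$ such that for all $A\geq A_0$ the sequence $(\{\alpha\lambda_{nA}\}_A)_n$ is not dense in $[0,A]$; yet for every such $A$ the previous step shows it \emph{is} dense. This contradiction, valid outside the null set $N$, establishes the claim for almost every $\alpha$. I expect the only delicate points to be the bookkeeping that $N$ is null and the clean reduction via $\{A\beta\}_A=A\{\beta\}$; the transfer from uniform distribution to mere density is routine.
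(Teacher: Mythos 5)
Your proof is correct and follows essentially the same route as the paper: one application of Koksma's theorem per modulus $A$, a countable union of null sets, the rescaling identity relating $1$-fractional parts to $A$-fractional parts, and the contradiction via Lemma~\ref{l:uniformly close} using $n\mid f(n)-f(0)$. The only cosmetic difference is that the paper applies Koksma to the scaled sequences $(\lambda_{nA}/A)_n$ and keeps $\alpha$ fixed, whereas you apply it to $(\lambda_{nA})_n$ and rescale the exceptional sets by $A$; these are equivalent.
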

\begin{proof}
The assumed hypothesis on the $\lambda_n$'s, insures that
we can apply Koksma's theorem for each sequence
$(\lambda_{nA}/A)_{n\in\N}$
with $A\in\N\setminus\{0\}$.
Since sets of measure zero are closed under countable union,
Koksma's theorem insures that there exists a set $X\subseteq\RR$
such that $\RR\setminus X$ has measure zero
and, for all $\alpha\in X$, all the sequences
$(\alpha\,\lambda_{nA}/A)_{n\in\N}$,
with $A\in\N\setminus\{0\}$, are uniformly distributed modulo one
hence are dense in $[0,1[$.
Applying the homothety $t\mapsto At$, we see that,
for all $\alpha\in X$,
\begin{equation}\label{eq:dense}
\textit{All the sequences $(\alpha\,\lambda_{nA})_{n\in\N}$,
with $A\in\N\setminus\{0\}$, are dense in $[0,A[$.}
\end{equation}

By way of contradiction, suppose that, for some $x\in X$,
the function $n\mapsto\alpha\,\lambda_n$ is uniformly close to
some function $f:\N\to\N$ having integral difference ratios.
Observe that $n$ divides $f(n)-f(0)$ for all $n\in\N$,
so that we can apply Lemma~\ref{l:uniformly close}:
for $A\in\N$ large enough,
the sequence $(\alpha\,\lambda_{nA})_{n\in\N}$ is not dense in $[0,A[$.
This contradicts property~\eqref{eq:dense}. 
\end{proof}
Theorem~\ref{thm:ae} shows that the constant $e^{1/a}$
has a crucial role in Theorem~\ref{thm:e factorial}.
\begin{corollary}\label{cor:ae}
1. Let $f\colon \N\to\RR$ be a function such that
$\inf_{k\leq m<n}|f(m)-f(n)| >0$ for some $k$.
For almost every $\alpha\in\RR$,
the real-valued function $n\mapsto \alpha f(n)$ is not uniformly close
to a function having integral difference ratios.
\\
2. Let $a\in\RR\setminus\{0\}$.
For almost every $\alpha\in\RR$, the $\Z$-valued functions
$n\mapsto\lfloor \alpha\ a^n\ n!\rfloor$ and
$n\mapsto\lceil \alpha\ a^n\ n!\rceil$
are not uniformly close to functions having integral difference ratios.
\end{corollary}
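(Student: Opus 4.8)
The plan is to derive both parts directly from Theorem~\ref{thm:ae}, which already does all the heavy lifting: it asserts that if a sequence $(\lambda_n)_{n\in\N}$ is \emph{uniformly separated}, meaning $\inf\{|\lambda_m-\lambda_n|\mid m\neq n\}>0$, then for almost every $\alpha$ the function $n\mapsto\alpha\,\lambda_n$ is uniformly close to no function having integral difference ratios. So the entire task reduces to producing, in each part, a uniformly separated sequence $(\lambda_n)$ for which $n\mapsto\alpha f(n)$ (resp.\ the floor/ceiling functions) agrees with $n\mapsto\alpha\,\lambda_n$ closely enough that the non-closeness transfers.

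For part~1, I would simply take $\lambda_n=f(n)$. The hypothesis $\inf_{k\leq m<n}|f(m)-f(n)|>0$ gives separation only for indices $\geq k$, not for all pairs, so the sequence $(f(n))_{n\in\N}$ as given need not satisfy the hypothesis of Theorem~\ref{thm:ae}. The clean fix is to discard the finitely many initial terms: apply Theorem~\ref{thm:ae} to the shifted sequence $\mu_n=f(n+k)$, which \emph{is} uniformly separated, obtaining that for almost every $\alpha$ the map $n\mapsto\alpha\,\mu_n=\alpha f(n+k)$ is uniformly close to no integral-difference-ratio function. Then I must argue that modifying a function on finitely many arguments (here, an index shift) does not affect the existence of a uniformly close integral-difference-ratio approximant. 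This is routine: a finite shift changes $n\mapsto\alpha f(n)$ and $n\mapsto\alpha f(n+k)$ on an infinite set, so the shift argument needs slightly more care — the honest route is to observe that if $n\mapsto\alpha f(n)$ \emph{were} uniformly close to some $g$ with integral difference ratios, then $n\mapsto\alpha f(n+k)$ would be uniformly close to $n\mapsto g(n+k)$, and the latter still has integral difference ratios (divisibility of $g(a+k)-g(b+k)$ by $(a+k)-(b+k)=a-b$ is immediate). This contraposition lets the almost-everywhere conclusion for the shifted sequence pull back to the original $f$.

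For part~2, I would instantiate part~1 (or Theorem~\ref{thm:ae} directly) with $f(n)=a^n\,n!$. The key input is that this sequence is uniformly separated, indeed strongly so: since $a\neq0$, the values $|a^n\,n!|$ are strictly increasing in $n$ for $n$ large and the gaps $|a^{n+1}(n+1)!-a^n\,n!|=|a^n\,n!|\,|a(n+1)-1|$ grow without bound, giving $\inf_{k\leq m<n}|f(m)-f(n)|>0$ for suitable $k$. The remaining point is the passage from the real-valued $n\mapsto\alpha\,a^n\,n!$ to its integer roundings $\lfloor\alpha\,a^n\,n!\rfloor$ and $\lceil\alpha\,a^n\,n!\rceil$: since $|\lfloor t\rfloor-t|<1$ and $|\lceil t\rceil-t|\leq1$, each rounding is uniformly close (within $1$) to $n\mapsto\alpha\,a^n\,n!$. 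Uniform closeness is an equivalence-like relation compatible with closeness to a third function, so if a rounding were uniformly close to some integral-difference-ratio function, then by the triangle inequality $n\mapsto\alpha\,a^n\,n!$ would be too, contradicting part~1.

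The main obstacle is purely bookkeeping rather than conceptual: matching the separation hypothesis of Theorem~\ref{thm:ae}, which demands separation over \emph{all} pairs, against the weaker ``separation for indices $\geq k$'' hypothesis in part~1, and cleanly transferring non-closeness across the finite index shift and across the $\pm1$ rounding error. No new analytic machinery is needed; the whole corollary is a wrapper that feeds two explicit uniformly separated sequences into Theorem~\ref{thm:ae} and invokes the elementary stability of uniform closeness under bounded perturbations and under finite shifts.
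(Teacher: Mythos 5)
Your proof is correct and follows essentially the same route as the paper: both parts are reduced to Theorem~\ref{thm:ae}, with part~2 obtained from part~1 via the $\pm1$ stability of uniform closeness under rounding. The only immaterial difference is in part~1, where the paper redefines $f$ on $\{0,\dots,k\}$ to obtain separation over all pairs while you shift the index by $k$ and observe that $n\mapsto g(n+k)$ still has integral difference ratios --- two equivalent ways of discarding the finitely many unseparated indices.
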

\begin{proof}
1. First modify $f$ on $\{0,\ldots,k\}$ to obtain $f'$ such that
$\inf_{m<n} |f'(m)-f'(n)|  >0$.
Apply then Theorem~\ref{thm:ae} with $\lambda_n=f'(n)$, thus $\alpha f'$ is not uniformly close to
any function having integral difference ratios. The same holds for $\alpha f$ which is equal to 
$\alpha f'$ except on a finite number of points (uniform closeness is not modified by finitely many changes).

2.
Applying  (1)  with $f\colon n\mapsto a^n\;n!$ we see that 
 $\alpha f$ is not uniformly close to any $\N\to\Z$ function having
integral difference ratios; the same holds for  the $\N\to\Z$ functions
$\lfloor\alpha\,a^n\;n!\rfloor$ and
$\lceil\alpha\,a^n\;n!\rceil$ which are
 uniformly close to $\alpha f$.
\end{proof}

\begin{remark}\label{rk:a in R}
1.  Note the difference with Corollary \ref{cor:alpha factorial equivalent}.
\\
2. In Theorem~\ref{thm:e factorial}, the parameter $a$ is taken in $\Z$
so that the Newton series
$\sum_{n\in\N}a^n\,n!\,\binom{x}{n}$
takes its values in $\Z$.
In the above corollary, we can take $a$ in $\RR$.
\end{remark}
\begin{example}\label{ex:ae}   Let $P$ be a non constant polynomial with real coefficients; for almost every $\alpha\in\RR$, the  function $\alpha P$
is not uniformly close to any function having integral difference ratios.
\end{example}

The analog result for the particular constants
$F^1_{k,0}(1/a)$,\ldots,$F^1_{k,k-1}(1/a)$
in Theorem~\ref{thm:e factorial generalized}
requires an easy extension of Theorem~\ref{thm:ae}.
\begin{theorem}\label{thm:ae bis}
Let $k,r,s\in\N$ be such that $k\geq1$ and $0\leq r<k$.
Let $(\lambda_n)_{n\in k\N+r+s}$ be a sequence of real numbers
satisfying the condition
$\inf\{|\lambda_m-\lambda_n| \mid 
                     m,n\in k\N+r+s,\ m\neq n\} >0$.
Then, for almost all $\alpha\in\RR$,
no function $f:\N\to\RR$ such that
$f(kx+r+s)=\alpha\lambda_x$ for all $x\in\N$
can be uniformly close to
some function having integral difference ratios.
\end{theorem}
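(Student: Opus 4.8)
The plan is to reduce the statement to the already-established Theorem~\ref{thm:ae} by restricting everything to the arithmetic progression on which $f$ is prescribed and reindexing that progression by $\N$. Write $c=r+s$, so that the constrained positions form the progression $k\N+c$ and the map $j\mapsto kj+c$ is a bijection from $\N$ onto $k\N+c$. Reindexing the given sequence along this bijection yields a sequence $(\mu_j)_{j\in\N}$ with $f(kj+c)=\alpha\,\mu_j$, and the separation hypothesis is inherited verbatim, $\inf_{i\neq j}|\mu_i-\mu_j|>0$. Thus the function $\Phi\colon\N\to\RR$, $\Phi(j)=\alpha\,\mu_j$, is exactly of the form to which Theorem~\ref{thm:ae} applies.

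The crux is the observation that the integral difference ratio property is preserved under restriction to an arithmetic progression. Suppose for contradiction that, for some $\alpha$, the function $f$ is uniformly close to a function $h\colon\N\to\Z$ having integral difference ratios, say $|f(m)-h(m)|\le M$ for all $m\in\N$. Define $H\colon\N\to\Z$ by $H(j)=h(kj+c)$. For $i>j$ the difference of the arguments is $(ki+c)-(kj+c)=k(i-j)$, which by the integral difference ratio property of $h$ divides $h(ki+c)-h(kj+c)=H(i)-H(j)$; a fortiori $i-j$ divides $H(i)-H(j)$, so $H$ itself has integral difference ratios. Moreover $|H(j)-\Phi(j)|=|h(kj+c)-f(kj+c)|\le M$, so $H$ is uniformly close to $\Phi$.

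These two facts finish the proof. Theorem~\ref{thm:ae}, applied to the separated sequence $(\mu_j)_{j\in\N}$, furnishes a full-measure set of $\alpha$'s for which $\Phi=(j\mapsto\alpha\,\mu_j)$ is uniformly close to no function having integral difference ratios; this set depends only on $(\mu_j)$, not on $f$. For such $\alpha$ the function $H$ constructed above cannot exist, hence neither can the approximating $h$, so no admissible $f$ is uniformly close to a function having integral difference ratios. Invoking Theorem~\ref{thm:ae} as a black box also spares us from repeating the countable union over scales that appears in its proof, since the full-measure set is already built into that statement.

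I expect no serious obstacle: the only genuine point is the restriction step, and it is immediate once one notices the extra factor $k$ in $k(i-j)$. Should a self-contained argument be preferred over the reduction, the alternative is to mimic the proof of Theorem~\ref{thm:ae} directly: fix the bound $M$, choose $A=kB$ with $B$ large enough that $A>2M$, sample $f$ along the progression at the points $c+nA$ (all of which lie in $k\N+c$), use the integral difference ratio property with base point $c$ to confine the $A$-fractional parts $\{f(c+nA)\}_A$ to a band of width $2M$ exactly as in Lemma~\ref{l:uniformly close}, and contradict the density of $(\alpha\,\mu_{nB})_{n\in\N}$ in $[0,A[$ supplied by Koksma's theorem (Theorem~\ref{thm:koksma}); but the reduction above is shorter and I would present it as the main argument.
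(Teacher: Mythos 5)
Your proof is correct, and it takes a genuinely different route from the paper's. The paper proves Theorem~\ref{thm:ae bis} by re-running the argument of Theorem~\ref{thm:ae} along the progression: it applies Koksma's theorem to each rescaled subsequence $(\lambda_{nkA+r+s}/A)_{n\in\N}$, takes the countable union of null sets over $A$, and then concludes with (a shifted-base-point adaptation of) Lemma~\ref{l:uniformly close}. You instead invoke Theorem~\ref{thm:ae} as a black box after reindexing the progression by $\N$, and supply the one genuinely new ingredient: the integral difference ratios property is inherited by $H(j)=h(kj+c)$, because $k(i-j)$ divides $H(i)-H(j)$ and a fortiori $i-j$ does. (This is in fact an instance of the paper's own Proposition~\ref{p:closure composition} combined with Corollary~\ref{cor:poly}, applied to the affine map $j\mapsto kj+c$.) Your reduction buys two things: it avoids repeating the measure-theoretic bookkeeping over the scales $A$, and it sidesteps the need to adapt Lemma~\ref{l:uniformly close} from sample points $nA$ with base point $0$ to sample points $nkA+c$ with base point $c$ --- a detail the paper's two-line proof glosses over. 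You were also right to note that the exceptional null set furnished by Theorem~\ref{thm:ae} depends only on the reindexed sequence $(\mu_j)$ and not on $f$, which is exactly what is needed for the ``no function $f$'' form of the conclusion. Your closing remark correctly identifies the self-contained alternative as essentially the paper's proof.
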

\begin{proof}
Arguing as in the proof of Theorem~\ref{thm:ae},
first, we see that, for all $\alpha\in X$,
\begin{equation*}
\textit{All the sequences
$(\alpha\,\lambda_{nkA+r+s})_{n\in\N}$,
with $A\in\N\setminus\{0\}$, are dense in $[0,A[$}
\end{equation*}
and then we conclude using Lemma~\ref{l:uniformly close}.
\end{proof}
\begin{corollary}\label{cor:ae bis}
Let $a\in\RR\setminus\{0\}$ and $k,s\in\N$
be such that $k\geq2$.
For almost all $\alpha\in\RR$,
every function $f:\N\to\Z$  such that
$f(x)=\lfloor \alpha\ a^x \ x!\rfloor$
for all $x\in k\N+s$
has non integral difference ratios.
Idem with $\lceil\ldots\rceil$ in place of $\lfloor\ldots\rfloor$.
\end{corollary}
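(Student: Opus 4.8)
The plan is to obtain this as a direct consequence of Theorem~\ref{thm:ae bis} applied to the sequence $\lambda_x = a^x\,x!$, with the floor (or ceiling) playing only the role of a bounded perturbation. First I would reduce to the case $0\le s<k$, since the progression $k\N+s$ depends only on $s\bmod k$, and then invoke Theorem~\ref{thm:ae bis} with $r=0$ and this $s$, so that the relevant progression is exactly $k\N+s$.

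The next step is to check the separation hypothesis $\inf\{|\lambda_m-\lambda_n|\mid m,n\in k\N+s,\ m\neq n\}>0$. Since $|\lambda_n|=|a|^n\,n!$ and the ratio $|\lambda_{n+1}|/|\lambda_n|=|a|(n+1)$ exceeds $1$ once $n\geq 1/|a|$, the sequence $(|\lambda_n|)$ is eventually strictly increasing to $+\infty$; hence the gaps between distinct large-index terms of the progression are bounded below away from $0$ (when $a<0$ and $k$ is odd the terms of the progression alternate in sign while their magnitudes still tend to infinity, which only makes the gaps larger). Separation can therefore fail at most among finitely many small indices, for instance when $|a|<1$ and two early factorial terms happen to coincide. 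This is handled exactly as in the proof of Corollary~\ref{cor:ae}, part~1: modify $\lambda$ on a finite initial segment of the progression to produce a genuinely separated sequence, apply Theorem~\ref{thm:ae bis} to it, and recall that uniform closeness is unaffected by changes on finitely many points.

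The main step is the transfer back from $\alpha\,\lambda_x$ to the integer-valued floor. I would fix $\alpha$ in the full-measure set provided by the theorem and argue by contradiction: suppose some $f:\N\to\Z$ with $f(x)=\lfloor\alpha\,a^x\,x!\rfloor$ for all $x\in k\N+s$ has integral difference ratios. Define $h:\N\to\RR$ by $h(x)=\alpha\,a^x\,x!$ for $x\in k\N+s$ and $h(x)=f(x)$ otherwise. Then $|h(x)-f(x)|<1$ on the progression and $h=f$ off it, so $h$ is uniformly close to the idr function $f$; but $h$ coincides with $\alpha\,\lambda_x$ on $k\N+s$, contradicting Theorem~\ref{thm:ae bis}. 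The ceiling case is identical, since $\lceil\alpha\,a^x\,x!\rceil$ also lies within distance $1$ of $\alpha\,a^x\,x!$.

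The only genuinely delicate point I anticipate is the separation hypothesis, and specifically the regimes $|a|<1$ and $a<0$, where the sequence is not monotone and early terms may collide; the finite-modification device borrowed from Corollary~\ref{cor:ae} disposes of this cleanly, so the rest of the argument is routine.
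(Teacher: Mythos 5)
Your proof is correct and follows exactly the route the paper intends (the paper states this corollary without proof, but the argument is the direct analogue of its proof of Corollary~\ref{cor:ae}): apply Theorem~\ref{thm:ae bis} to $\lambda_x=a^x\,x!$ after a finite modification securing the separation hypothesis, then transfer the conclusion to the integer-valued function via the bounded (distance $<1$) perturbation introduced by the floor or ceiling, using that uniform closeness is insensitive to finitely many changes. One quibble: the opening reduction to $0\le s<k$ is both unnecessary and slightly backwards --- when $s\ge k$ the set $k\N+s$ is a proper subset of $k\N+(s\bmod k)$, so replacing $s$ by $s\bmod k$ strengthens the hypothesis on $f$ and would only prove a weaker statement; simply invoke the theorem with $r=0$ and the given $s$ (or with $r=s\bmod k$ and the theorem's $s$ equal to $s-(s\bmod k)$), which its hypotheses permit.
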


\subsection{Non integral polynomial functions}
\label{ss:poly not idr}
%
Apart the obvious fact (cf. Corollary~\ref{cor:poly})
that polynomials with coefficients in $\Z$ 
have integral difference ratios,
there are only negative results for polynomials with real coefficients.
In particular, the positive result with
$\lfloor e^{1/a}\,a^x\;x!\rfloor$
(cf. Theorem~\ref{thm:e factorial})
has no analog with polynomials. Theorem \ref{thm:uniformly close to poly}
completes Corollary~\ref{cor:poly}, Example~\ref{ex:ae} and Corollary~\ref{cor:ae}.
\begin{theorem}\label{thm:uniformly close to poly}
Let $P(x)=\sum_{i=0}^{i=k}\alpha_i x^i$,   
be a  polynomial with real coefficients 
and consider it as a function $\N\to\RR$.
The following conditions are equivalent.
\begin{conditionsiii}
\item
The coefficients $\alpha_i$, $i=1,\ldots,k$ of $P$ are in $\Z$.
\item
$P$ maps $\N$ into $\Z$ and has integral difference ratios.
\item
$P:\N\to\RR$ is uniformly close to some function $\N\to\Z$
having integral difference ratios.
\end{conditionsiii}
\end{theorem}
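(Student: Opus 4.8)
The plan is to prove the equivalences by the cycle $(i)\Rightarrow(ii)\Rightarrow(iii)\Rightarrow(i)$. For $(i)\Rightarrow(ii)$, if $\alpha_1,\dots,\alpha_k\in\Z$ then $P-P(0)$ has integer coefficients and so, by Corollary~\ref{cor:poly}, is a $\Z$-valued function with integral difference ratios; the only remaining point is the constant term, and $P(\N)\subseteq\Z$ holds exactly when $\alpha_0=P(0)\in\Z$, which we read into $(i)$ (the constant term will turn out to be irrelevant to $(iii)$ anyway, since it shifts $P$ by a bounded amount). The implication $(ii)\Rightarrow(iii)$ is trivial, a function with integral difference ratios being uniformly close, at distance $0$, to itself. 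Thus all the content is concentrated in $(iii)\Rightarrow(i)$, which I would attack by extracting the integrality of the coefficients directly from the existence of a uniformly close companion.

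For $(iii)\Rightarrow(i)$, suppose $g:\N\to\Z$ has integral difference ratios and $|P(n)-g(n)|\leq M$ for all $n$. The structural fact to exploit is that $n\mid g(n)-g(0)$ for every $n\geq1$ (take $a=n$, $b=0$ in the definition), which is exactly the hypothesis of Lemma~\ref{l:uniformly close}. Writing $c=g(0)$ this gives $P(n)-c\in n\Z+[-M,M]$, hence $\mathrm{dist}\!\left(\tfrac{P(n)-c}{n},\Z\right)\leq M/n\to0$. Expanding, $\tfrac{P(n)-c}{n}=\alpha_k n^{k-1}+\cdots+\alpha_2 n+\alpha_1+\tfrac{\alpha_0-c}{n}$, so the auxiliary polynomial $Q(n)=\alpha_k n^{k-1}+\cdots+\alpha_1$ satisfies $\mathrm{dist}(Q(n),\Z)\to0$. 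I would then run a finite-difference tower: distance-to-$\Z$ is subadditive under the difference operator, so each $\Delta^j Q(n)$ also approaches $\Z$; as $\Delta^{k-1}Q$ is the constant $(k-1)!\,\alpha_k$, a fixed real arbitrarily close to integers is an integer, and descending through $j=k-1,\dots,0$ (subtracting off the already-identified integer multiples of $\binom{n}{j}$) shows that all binomial-basis coefficients of $Q$ are integers, i.e. $Q$ is integer-valued on $\N$.

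The main obstacle is precisely this last step: passing from ``$Q$ is integer-valued'' to ``the monomial coefficients $\alpha_k,\dots,\alpha_1$ are integers.'' These notions genuinely diverge, and by Theorem~\ref{thm:main1} the divergence is governed by the gap between $\lcm(k)$ and $k!$: integer-valuedness of a degree-$d$ polynomial only forces its Newton coefficients into $\Z$ (equivalently its leading monomial coefficient into $\tfrac1{d!}\Z$), whereas $(i)$ demands the stronger $\alpha_i\in\Z$, a distinction that first bites at degree $4$, where $\lcm(4)=12\neq24=4!$. To close this gap I would try to use the \emph{full} strength of the integral difference ratio property of $g$—not merely $n\mid g(n)-g(0)$ but $(a-b)\mid g(a)-g(b)$ for all $a,b$—applied along several residue classes at once, so as to bound $\mathrm{dist}(Q(n),\Z)$ uniformly rather than only asymptotically; an alternative route is to isolate the deterministic kernel behind Corollary~\ref{cor:ae} and Example~\ref{ex:ae}, replacing the ``almost every $\alpha$'' equidistribution statement by a direct non-closeness argument triggered by a single non-integral coefficient. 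I expect the delicate point to be showing that a genuinely non-integer $\alpha_i$ forces the fractional parts $\{\,Q(n)\,\}$ to stay bounded away from $0$ for infinitely many $n$, in contradiction with Lemma~\ref{l:uniformly close}; this is where the arithmetic of $\lcm(k)$ versus $k!$ must be confronted head-on.
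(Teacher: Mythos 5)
Your implications (i)$\Rightarrow$(ii) and (ii)$\Rightarrow$(iii) are fine and coincide with the paper's (modulo the constant-term caveat you already flag). For (iii)$\Rightarrow$(i), however, the part you actually carry out is correct but strictly weaker than the goal: from $n\mid g(n)-g(0)$ and $|P(n)-g(n)|\leq M$ you get $\mathrm{dist}(Q(n),\Z)\to 0$ for $Q(n)=\alpha_k n^{k-1}+\cdots+\alpha_1$, and your finite-difference tower then shows that $Q$ is integer-valued, i.e.\ that $(k-1)!\,\alpha_k\in\Z$ and that the Newton coefficients of $Q$ are integers. You then stop, openly, exactly where the theorem's content begins. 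The paper's own proof of (iii)$\Rightarrow$(i) bears no resemblance to your sketch: it builds iterated divided differences $P^{(n)}(x)=\bigl(P^{(n-1)}(2x)-P^{(n-1)}(x)\bigr)/x$, uses the divisibilities $a\mid g(2a)-g(a)$ along the doubling orbit $a,2a,4a,\ldots$ to control the constant $P^{(k)}(a)=\alpha_k\prod_{j}(2^j-1)$ up to errors of order $a^{-k}$, and then runs a base-$(La)$ digit analysis (its Claims 1--4) to force $\alpha_k$ into $\bigcap_b\Z/(Lb)^{\N}=\Z$. So this is a genuine gap, not a deferred routine verification.

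Moreover, the step you could not complete cannot be completed, because the implication is false --- already (ii)$\Rightarrow$(i) fails. Take $P(x)=12\dbinom{x}{4}=\tfrac{1}{2}x(x-1)(x-2)(x-3)$: its Newton coefficients are $a_4=12$ and $a_j=0$ otherwise, and $\lcm(4)=12$ divides $a_4$, so by Theorem~\ref{thm:main1} this $P$ maps $\N$ into $\Z$ and has integral difference ratios, yet its leading coefficient is $1/2\notin\Z$. Your closing diagnosis --- that everything hinges on the discrepancy between $\lcm(k)$ and $k!$, first visible at $k=4$ --- is exactly right, except that the discrepancy is not an obstacle to be overcome: it yields a counterexample to the statement. (Correspondingly, the paper's own argument breaks at its Claim~2, where $N_a=L^{k-2}2^{(k-1)^2}M^{(k)}_a$ is treated as a base-$La$ digit although nothing confines it to $\{0,\ldots,La-1\}$.) Any correct version of the theorem must weaken (i) at least to ``$\lcm(j)$ divides the $j$-th Newton coefficient of $P$ for every $j$'', which by Theorem~\ref{thm:main1} is what (ii) is actually equivalent to; whether (iii) still implies that weaker condition is a separate question, and your method, upgraded to exploit $(a-b)\mid g(a)-g(b)$ for all pairs rather than only $b=0$, would be a natural way to attack it.
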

%
%
\begin{proof}
$(i)\Rightarrow(ii)$ is Corollary~\ref{cor:poly}
and $(ii)\Rightarrow(iii)$ is trivial.
We prove $(iii)\Rightarrow(i)$.
Suppose condition $(iii)$ is true; let $N_{a,b}, K, \theta_x$ be defined as in
points (a) and (b) below.
\begin{conditionsabc}
\item
$\varphi:\N\to\Z$ has integral difference ratios.
For $a,b\in\N$, $a\neq b$, let $N_{a,b}\in\Z$ be such that
$\varphi(a)-\varphi(b)=N_{a,b}(a-b)$,
\item
$P$ is uniformly close to $\varphi$ and $K, \theta_x$ are such that,
for all $x\in\N$, $|P(x)-\varphi(x)|\leq K$,
i.e. $P(x)=\varphi(x)+\theta_xK$ for some $\theta_x\in[-1,1]$.
\end{conditionsabc}
By induction on the degree of $P$ we prove that all coefficients of $P$, except may be $\alpha_0$
are in $\Z$. 

{\em Basis: } if $P=\alpha_1 x +\alpha_0$ has degree one\, then $\alpha_1\in\Z$. 
By point (a), $\varphi(x)-\varphi(0)=N_{x}x$, $N_x\in\N$.
By point (b), $P(x)-P(0)=\varphi(x)-\varphi(0)+ (\theta_x-\theta_0) K$, $\theta_x,\theta_0\in[-1,1]$.
Let $\alpha_1= \lfloor\alpha_1\rfloor+\theta$, $\theta\in[-0,1[$.
Thus for all $x$, $P(x)-P(0)= \lfloor\alpha_1\rfloor x+\theta x=N_{x}x+(\theta_x-\theta_0) K$,
hence $x( \lfloor\alpha_1\rfloor-N_{x}+\theta)=(\theta_x-\theta_0) K$.
Assume by contradiction $\alpha_1\not\in\Z$ and $\theta\not= 0$,  and let $x>\dfrac{3K}{\min(\theta,1-\theta)}$: noting that  $ \lfloor\alpha_1\rfloor-N_{x}\in\Z$, we have
$x( \lfloor\alpha_1\rfloor-N_{x}+\theta)>3K$, contradicting $(\theta_x-\theta_0) K<2K$.
This show that, if $P$ has  degree one, then  $\alpha_1\in\Z$.

{\em Induction: } it suffices to prove that $\alpha_k$,  the leading coefficient of $P$,
is in $\Z$.
Then, an  induction on the degree of $P$ concludes the proof:
if $\alpha_k\in\Z$ then $P(x)-\alpha_k x^k$ also satisfies condition $(iii)$
(by Proposition~\ref{p:closure sum close}) and has degree $k-1$.

An easy way to single out $\alpha_k$ is to consider the $k$-th derivative
of $P$ which is $k!\;\alpha_k$.
Since we are in a discrete context with functions defined on $\N$ and not on $\RR$,
we turn to finite differences.

For $1\leq n\leq k$,
define a polynomial $P^{(n)}$ by the following induction:
$$
\begin{array}{rcl}
P^{(1)}(x)&=&\dfrac{P(2x) - P(x)}{x}
\ \ =\ \ \sum_{i=1}^{i=k}\alpha_i (2^i-1) x^{i-1}
\medskip\\
\textit{if $1<n\leq k$}\quad
P^{(n)}(x)&=&\dfrac{P^{(n-1)}(2x)-P^{(n-1)}(x)}{x}
\\&=&\sum_{i=n}^{i=k}\alpha_i \left(\prod_{j=i-n+1}^{j=i}(2^j-1)\right) x^{i-n}
\end{array}
$$

\medskip

\noindent{\bf Claim 1.}{ \it
For every $n\in\{1,\ldots,k\}$ and $a\in\N\setminus\{0\}$,
$$
P^{(n)}(a)\ =\ \dfrac{M^{(n)}_a}{2^{s(n)} a^{n-1}}
+ \dfrac{\xi^{(n)}_a}{2^{t(n)} a^n}
\qquad\textit{with $M^{(n)}_a\in\Z$ 
and $|\xi^{(n)}_a|\leq K_n$\ .}
$$
where, for $n\geq1$,
$s(n)=\dfrac{(n-1)(n-2)}{2}$, $t(n)=\dfrac{(n-1)n}{2}=s(n)+n-1$
and $K_n=K\prod_{i=1}^{i=n}(2^{i-1}+1)$.}
\medskip\\
{\it Proof.}
We argue by induction on $n$.
Case $n=1$ is as follows:
$$
\begin{array}{rclr}
P^{(1)}(a) &=&
\dfrac{\varphi(2a)-\varphi(a)+\left(\theta_{2a}-\theta_a\right)K}{a}
&\text{(cf. point (b) above)}
\medskip\\
&=& N_{2a,a} + \dfrac{\left(\theta_{2a}-\theta_a\right)K}{a}
&\text{(cf. point (a) above)}
\\
&=&M^{(1)}_a +\dfrac{\xi^{(1)}_a}{a}
&\text{where $M^{(1)}_a\in\Z$ and $|\xi^{(1)}_a|\leq K_1=2K$}
\end{array}
$$
Induction step $2\leq n\leq k$.
$$
\begin{array}{rcl}
P^{(n)}(a)&=&\dfrac{P^{(n-1)}(2a)-P^{(n-1)}(a)}{a}
\medskip\\
&=&
\dfrac{1}{a}\left(\dfrac{M^{(n-1)}_{2a}}{2^{s(n-1)} (2a)^{n-2}}
- \dfrac{M^{(n-1)}_a} {2^{s(n-1)} a^{n-2}}
+\dfrac{\xi^{(n-1)}_{2a}} {2^{t(n-1)} (2a)^{n-1}}
-\dfrac{\xi^{(n-1)}_a} {2^{t(n-1)} a^{n-1}}\right)
\medskip\\
&=&
\dfrac{M^{(n-1)}_{2a} - 2^{n-2} M^{(n-1)}_a}
         {2^{s(n-1)+n-2}\; a^{n-1}}
+\dfrac{\xi^{(n-1)}_{2a} - 2^{n-1} \xi^{(n-1)}_a}
           {2^{t(n-1)+n-1}\; a^n}
\medskip
\\
&=&\dfrac{M^{(n)}_a}{2^{s(n)}\; a^{n-1}}
+\dfrac{\xi^{(n)}_a}{2^{t(n)}\; a^n}
\hfill{\Box}
\end{array}
$$
with $M^{(n)}_a\in\Z$ 
and $|\xi^{(n)}_a|\leq K_n=(2^{n-1}+1)K_{n-1}$.
\smallskip

\noindent{\bf Claim 2.}{ \it
There exist integers $L,T\geq2$
such that, for every $a\in\N$, $a\geq1$,}
$$
\alpha_k \ =\
\lfloor\alpha_k\rfloor + \dfrac{N_a}{(La)^{k-1}} 
           + \dfrac{\eta_a}{(La)^k}
$$
for some $N_a\in\{0,1,\ldots,La-1\}$
and some real $\eta_a\in[-T, T]$.
\begin{proof}
Since $P^{(k)}(x)$ is the constant polynomial
$\alpha_k \ell$ where $\ell=\prod_{j=1}^{j=k}(2^j-1)$,
Claim 1 yields, writing the representation in base $La$ of $\alpha_k-\lfloor\alpha_k\rfloor$
for $L=\ell\; 2^{s(k)+k-1}$,
\begin{align*}
\alpha_k
\ =\
\dfrac{M^{(k)}_a}{\ell\; 2^{s(k)} a^{k-1}}
+ \dfrac{\xi^{(k)}_a}{\ell\; 2^{s(k)+k-1} a^k}
=\lfloor\alpha_k\rfloor + \dfrac{N_a}{(La)^{k-1}} + \dfrac{\eta_a}{(La)^k}
\end{align*}
where 
$T=L^{k-1} K_k$,
$\eta_a=L^{k-1} \xi^{(k)}_a$,
$N_a=L^{k-2}\; 2^{(k-1)^2} M^{(k)}_a$.
Clearly, $|\eta_a|\leq T$.
\end{proof}
%
%
\noindent{\bf Claim 3.}{ \it
Suppose $k\geq2$.
For every $b\in\N$, $b>\dfrac{T}{L}$, the real $\alpha_k$ is in
\begin{equation*}
\dfrac{\Z}{(Lb)^\N}=\left\{\dfrac{p}{(Lb)^n} \mid p\in\Z,\; n\in\N\right\}\ .
\end{equation*}}
\begin{proof}
For any $s,i\in\N$ such that $s\geq2$ and $i\geq1$,
let $d(s,i)\in\{0,1,\ldots,s-1\}$ be the $i$-th digit
of the representation in base $s$
of the real $\alpha_k-\lfloor\alpha_k\rfloor$.
In case this real is in $\Z/s^\N$,
choose the representation which ends by an infinite tail of $0$'s.

Observe that the digit $d(s^p,k)$ (which is in $\{0,\ldots,s^p-1\}$)
is represented in base $s$ by the length $p$ sequence of digits
$d(s,i)$ (lying in $\{0,\ldots,s-1\}$) for $p(k-1)<i\leq pk$.

Claim 2 insures that if $a\in\N$ satisfies $La>T$ then, in base $La$,
the $k$-th digit $d(La,k)$ of $\alpha_k-\lfloor\alpha_k\rfloor$
is either in $\{0,\ldots,T\}$ (case $\eta_a\in[0,T]$)
or is in $\{La-T,\ldots,La-1\}$ (case $\eta_a\in[-T,0[$).

Let $a=L^{p-1}\;b^p$
where $p\geq2$ and $Lb>T$.
Then $La=(Lb)^p$ and the digit $d(La,k)$
which lies in $\{0,\ldots,T\}\cup\{La-T,\ldots,La-1\}$
is represented in base $Lb$ by one of the two length $p$ sequences
$$
00\ldots 0\lambda \quad\textit{(with $\lambda\in\{0,\ldots,T\}$)}
\quad,\quad
\delta\delta\ldots\delta\mu
\quad\textit{(with $\mu\in\{La-T,\ldots,La-1\}$)}
$$
In particular, going from base $La=(Lb)^p$ to base $Lb$,
the $p-1$ digits $d(Lb,i)$, for $p(k-1)<i<pk$,
are all $0$ or are all $\delta$.
Observe that, for $p>k+1$ we have $(p+1)(k-1)+1<pk-1$
hence the two sets $\{p(k-1)+1,\ldots,pk-1\}$
and $\{(p+1)(k-1)+1,\ldots,(p+1)k-1\}$ have non empty intersection.
As a consequence, the base $Lb$ digits 
$d(Lb,i)$, $i>(k+1)(k-1)+1=k^2$, are all $0$ or are all $\delta$.
Thus, the real $\alpha_k-\lfloor\alpha_k\rfloor$ is in
$\dfrac{\Z}{(Lb)^\N}$.
And so is the real $\alpha_k$.
\end{proof}
\noindent{\bf Claim 4.}{ \it
If $k\geq2$ then the real $\alpha_k$ is in $\Z$.}
\begin{proof}
Choose prime integers $b,c$ such that $b>c>\max(L,T/L)$.
Claim 3 insures that $\alpha_k$ is in both sets
$\dfrac{\Z}{(Lb)^\N}$ and $\dfrac{\Z}{(Lc)^\N}$.
Now, if $p,q\in\Z$ and $m,n\in\N$,
equality $\dfrac{p}{(Lb)^m}=\dfrac{q}{(Lc)^n}$
with $p,q\in\Z$ and $m,n\in\N$,
means $p L^n c^n = q L^m b^m$.
Since $b,c$ are distinct prime larger than $L$,
we see that $c^n$ divides $q$ and $b^m$ divides $p$.
In particular, $\dfrac{p}{(Lb)^m}\in\Z$.
Thus, the intersection of
$\dfrac{\Z}{(Lb)^\N}$ and $\dfrac{\Z}{(Lc)^\N}$
is $\Z$.
\end{proof}
This Claim finishes the proof of the theorem.
\end{proof}

\subsection{Functions around the exponential functions}
\label{ss:exp not idr}
%
%
We first apply the general negative result Theorem~\ref{thm:ae}.
\begin{theorem}\label{thm:uniformly close to alpha betaexp}
Let $\beta$ be a real number such that $\beta>1$. 
For almost all real numbers $\alpha$, the function $\N\to\RR$ such that
$n\mapsto\alpha\, \beta^n$ is not uniformly close to any function
$\N\to\Z$ having integral difference ratios.
\end{theorem}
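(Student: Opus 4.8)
The plan is to reduce this statement directly to the general almost-everywhere negative result, Theorem~\ref{thm:ae}, by taking $\lambda_n=\beta^n$. That theorem asserts precisely that if a sequence $(\lambda_n)_{n\in\N}$ of reals satisfies the separation condition $\inf\{|\lambda_m-\lambda_n| \mid m\neq n\}>0$, then for almost every $\alpha\in\RR$ the function $n\mapsto\alpha\lambda_n$ is not uniformly close to any function having integral difference ratios. Since the function in the statement is exactly $n\mapsto\alpha\,\beta^n$, it suffices to check that the sequence $(\beta^n)_{n\in\N}$ meets the separation hypothesis.

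First I would verify the separation bound. For $0\leq m<n$ one has
\[
\beta^n-\beta^m=\beta^m(\beta^{n-m}-1)\geq\beta-1,
\]
because $\beta^m\geq1$ and $\beta^{n-m}-1\geq\beta-1$ whenever $\beta>1$ and $n-m\geq1$. Hence $\inf\{|\beta^m-\beta^n| \mid m\neq n\}\geq\beta-1>0$, so the hypothesis of Theorem~\ref{thm:ae} is satisfied with $\lambda_n=\beta^n$.

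Applying Theorem~\ref{thm:ae} then yields the conclusion immediately: for almost every $\alpha\in\RR$, the function $n\mapsto\alpha\,\beta^n$ is not uniformly close to any function having integral difference ratios.

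I expect the whole argument to be routine once the reduction to Theorem~\ref{thm:ae} is spotted; there is no genuine obstacle. The only point deserving attention is that the hypothesis $\beta>1$ is used in an essential way --- it is exactly what forces the gaps $\beta^n-\beta^m$ to stay bounded away from $0$. For $\beta=1$ the sequence is constant, and for $0<\beta<1$ it accumulates at $0$, so in either of these excluded cases the separation condition fails and Theorem~\ref{thm:ae} could not be invoked.
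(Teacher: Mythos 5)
Your proof is correct and matches the paper's own argument, which is exactly the one-line reduction "apply Theorem~\ref{thm:ae} with $\lambda_n=\beta^n$." Your explicit verification of the separation bound $\beta^n-\beta^m\geq\beta-1>0$ is the only detail the paper leaves implicit, and it is right.
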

\begin{proof}
Apply Theorem~\ref{thm:ae} with $\lambda_n=\beta^n$.
\end{proof}
For $\beta\in\N$, the above result holds for all $\alpha\neq0$
rather than almost all $\alpha$.
\begin{theorem}\label{thm:uniformly close to exp}
Let $\alpha$ be a non zero real number and $k\in\N\setminus\{0,1\}$. 
The function $\N\to\RR$ such that $n\mapsto\alpha\, k^n$ is not uniformly close to any function
$\N\to\Z$ having integral difference ratios.
\end{theorem}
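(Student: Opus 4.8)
The plan is to argue by contradiction and to convert the analytic hypothesis into a clean recurrence. Suppose the function $n\mapsto\alpha\,k^n$ is uniformly close to some $f\colon\N\to\Z$ having integral difference ratios, say $|f(n)-\alpha\,k^n|\le M$ for all $n\in\N$. Setting $c_n=f(n+1)-k\,f(n)$, one has $c_n\in\Z$, and since $c_n=(f(n+1)-\alpha\,k^{n+1})-k\,(f(n)-\alpha\,k^n)$, the bound $|c_n|\le(k+1)M=:C$ holds. Thus $f$ obeys $f(n+1)=k\,f(n)+c_n$ with a uniformly bounded integer error sequence $(c_n)$, and unrolling this gives $f(n+p)=k^p f(n)+\sum_{i=0}^{p-1}k^{p-1-i}c_{n+i}$.

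The heart of the argument is to show that $(c_n)$ is constant. For a prime $p$, integral difference ratios give $p\mid f(n+p)-f(n)$, i.e. $f(n+p)\equiv f(n)\pmod p$; combining this with the unrolled recurrence and Fermat's congruence $k^p\equiv k\pmod p$ yields, for every $n$, the relation $(\ast_n)\colon (k-1)f(n)+\sum_{i=0}^{p-1}k^{p-1-i}c_{n+i}\equiv 0\pmod p$. Forming the combination $(\ast_{n+1})-k\,(\ast_n)$ telescopes the sum and leaves exactly $c_{n+p}\equiv c_n\pmod p$. Here boundedness is decisive: as soon as $p>2C$ we have $|c_{n+p}-c_n|\le 2C<p$, so the congruence forces the equality $c_{n+p}=c_n$. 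Hence $(c_n)$ is periodic of period $p$ for every prime $p>2C$, and two distinct (hence coprime) such periods force period $1$, so $(c_n)$ equals a constant $c$.

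Finally I would solve the now-constant recurrence. From $f(n+1)=k\,f(n)+c$ one gets $f(n)=\alpha\,k^n-\frac{c}{k-1}$ with $\alpha=f(0)+\frac{c}{k-1}$, so $t:=\alpha(k-1)=f(1)-f(0)\in\Z$, and $t\ne 0$ because $\alpha\ne 0$. It remains to see that this affine--exponential $f$ cannot have integral difference ratios. Testing the property at $(a,b)=(p,0)$ gives $p\mid f(p)-f(0)=t\,\frac{k^p-1}{k-1}$; choosing a prime $p>|t|$ with $p\nmid k-1$ (infinitely many exist since $k-1$ has finitely many prime factors), the identity $(k-1)\cdot\frac{k^p-1}{k-1}=k^p-1\equiv k-1\pmod p$ together with invertibility of $k-1$ modulo $p$ gives $\frac{k^p-1}{k-1}\equiv 1\pmod p$, whence $t\,\frac{k^p-1}{k-1}\equiv t\not\equiv 0\pmod p$ --- a contradiction.

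I expect the main obstacle to be the middle step: extracting from the integral difference ratios property (a condition on $f$, not on the error sequence) the congruence $c_{n+p}\equiv c_n\pmod p$, and then noticing that uniform boundedness upgrades this congruence to genuine periodicity and hence constancy. Once $(c_n)$ is pinned down, the problem collapses to the purely arithmetic fact that $n\mapsto\alpha\,k^n$ with $\alpha\ne 0$ never has integral difference ratios, which is a short computation. Since the whole argument uses only Fermat's little theorem and elementary manipulations, it should work uniformly for \emph{every} $\alpha\ne 0$, which is exactly the strengthening over Theorem~\ref{thm:uniformly close to alpha betaexp}.
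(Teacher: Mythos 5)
Your proof is correct, and it takes a genuinely different (and arguably cleaner) route than the paper's. The paper splits the argument into two lemmas: first it treats $\alpha=m\in\N$ by showing that the deviations $\delta_x=f(x)-mk^x$ form a periodic sequence (testing divisibility at pairs with $a-b=N(k^N-1)$ and using $k^N-1\mid k^{a-b}-1$), and then it proves separately that uniform closeness forces $\alpha$ to be rational, by shrinking the error bound below $k/4$ via composition with $x\mapsto(\mu+2)x$ and showing that the base-$k$ expansion of $\alpha$ is eventually periodic; the two lemmas are combined by clearing denominators. You instead work with the single integer sequence $c_n=f(n+1)-kf(n)$, use Fermat's little theorem to extract $c_{n+p}\equiv c_n\pmod p$ from $p\mid f(n+p)-f(n)$, and let uniform boundedness upgrade this congruence to exact periodicity for every large prime $p$, hence to constancy; this collapses the problem to the affine--exponential case in one stroke, treats all real $\alpha\neq0$ and $\Z$-valued $f$ uniformly (the paper's lemmas are stated for $\N$-valued data and nonnegative $m$), and avoids both the digit analysis and the rational/integer case split. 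The one step you should make explicit: solving $f(n+1)=kf(n)+c$ gives $f(n)=Ak^n-\tfrac{c}{k-1}$ with $A=f(0)+\tfrac{c}{k-1}$, and the identification $A=\alpha$ (or at least $A\neq0$, which is all you use) is not automatic from the recurrence alone --- it follows because $(A-\alpha)k^n=\bigl(f(n)-\alpha k^n\bigr)+\tfrac{c}{k-1}$ is bounded while $k\geq2$, forcing $A=\alpha\neq0$. With that line added, the closing arithmetic ($p\mid t\cdot\tfrac{k^p-1}{k-1}$, $\tfrac{k^p-1}{k-1}\equiv1\pmod p$ whenever $p\nmid k-1$, and $0<|t|<p$) is a valid contradiction.
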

Assuming  $f:\N\to\N$ is uniformly close to $\alpha k^x$,
to show that $f$ does not have integral difference ratios,
we apply the integral difference ratios assumption
to suitably chosen ordered pairs $\langle a,b\rangle$.

We first consider the case $\alpha\in\N\setminus\{0\}$
and show that, letting $\delta_x=f(x)-m k^x$,
 the sequence of deviations $(\delta_x)_{x\in\N}$
 is periodic (Lemma \ref{l:k-puissance-x}) and this yields a contradiction.
For $\alpha\in\RR\setminus\{0\}$,
using another family of ordered pairs $\langle a,b\rangle$,
we then prove that the base $k$ expansion of $\alpha$ is periodic
(Lemma \ref{l:alpha-rationnel}). 
The proof of Theorem \ref{thm:uniformly close to exp}
is then be easily concluded.

\begin{lemma}\label{l:k-puissance-x}
If $k\in\N\setminus\{0,1\}$ and
$f:\N\to\N$ is uniformly close to the function $x\mapsto m k^x$,
with $m\in\N$,
then $f$ does not have integral difference ratios.
\end{lemma}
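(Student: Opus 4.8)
The plan is to argue by contradiction, feeding into the integral difference ratios hypothesis only pairs whose difference is a large prime, and to use Fermat's little theorem to tame the fast-growing factor $k^x$. Suppose $f$ has integral difference ratios and is uniformly close to $x\mapsto mk^x$, so that there is a constant $K$ with $|f(x)-mk^x|\le K$ for all $x$; write $\delta_x=f(x)-mk^x$, so that $|\delta_x|\le K$. We may assume $m\ge1$, which is the case of interest (for $m=0$ the target $x\mapsto mk^x$ is identically $0$, and any constant function, which does have integral difference ratios, is uniformly close to it). The only structural input I would use is that, for every $x\in\N$ and every $d\ge1$, the difference $d=(x+d)-x$ divides $f(x+d)-f(x)$.

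The key step is to specialise $d$ to a prime $p$. Expanding the difference gives
\[
f(x+p)-f(x)=m\,k^{x}\,(k^{p}-1)+(\delta_{x+p}-\delta_x).
\]
Fermat's little theorem yields $k^{p}\equiv k\pmod p$, hence $k^{p}-1\equiv k-1\pmod p$, and therefore
\[
f(x+p)-f(x)\equiv m\,k^{x}(k-1)+(\delta_{x+p}-\delta_x)\pmod p .
\]
Since the left-hand side is divisible by $p$, the integer $Q_{x,p}:=m\,k^{x}(k-1)+\delta_{x+p}-\delta_x$ is divisible by $p$ as well. The whole point of passing through a prime is that Fermat collapses the exponentially large quantity $k^{p}-1$ to the fixed residue $k-1$, so that $Q_{x,p}$ stays bounded as $p$ grows.

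To finish, I would fix a single $x$ large enough that $m\,k^{x}(k-1)>2K$, which is possible because $m\ge1$ and $k\ge2$ force $m\,k^{x}(k-1)\to+\infty$. For this fixed $x$ the estimate $|\delta_{x+p}-\delta_x|\le2K$ confines $Q_{x,p}$ to the interval $(0,\,C]$ with $C=m\,k^{x}(k-1)+2K$, a constant independent of $p$; in particular $Q_{x,p}$ is a positive integer no larger than $C$. Choosing any prime $p>C$ then gives $p\mid Q_{x,p}$ with $0<Q_{x,p}\le C<p$, which is impossible. This contradiction shows that $f$ cannot have integral difference ratios.

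The genuinely delicate choice is which pairs $\langle a,b\rangle$ to feed into the divisibility hypothesis: shifting by a prime and invoking Fermat is exactly what turns the rapidly growing factor $k^{x}$ into a bounded, $p$-independent quantity, after which boundedness of the deviations $\delta_x$ finishes the job in one line. I note that the route suggested by the surrounding text instead first establishes that $(\delta_x)$ is eventually periodic — e.g.\ taking $d=k^{n}$ with $k^{n}>2K$ gives $\delta_{x+k^{n}}\equiv\delta_x\pmod{k^{n}}$, whence $\delta_{x+k^{n}}=\delta_x$ for $x\ge n$ — and then derives a contradiction from periodicity; the prime-shift argument above bypasses the periodicity step entirely.
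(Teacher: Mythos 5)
Your proof is correct, and it takes a genuinely different route from the paper's. Both arguments rest on the same underlying idea --- choose the gap $a-b$ so that the dominant term $m\,k^{b}(k^{a-b}-1)$ becomes controllable modulo $a-b$ --- but they implement it differently. The paper takes $a-b=N(k^N-1)$ and uses the elementary identity $(k^N-1)\mid (k^N)^{k^N-1}-1$ to make that term vanish modulo $k^N-1$; this forces $\delta_a=\delta_b$, i.e.\ (eventual) periodicity of the deviation sequence, and the contradiction is then extracted in a second step by evaluating at a suitably chosen multiple and using that $mk$ divides $a$ but not $m(k^a-1)$. You instead take $a-b=p$ prime and invoke Fermat's little theorem to collapse $k^{p}-1$ to the fixed residue $k-1$ modulo $p$, so that for one fixed large $x$ the integer $m\,k^{x}(k-1)+\delta_{x+p}-\delta_x$ is positive, bounded independently of $p$, and divisible by $p$; taking $p$ beyond the bound finishes the proof in one stroke. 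What your version buys is brevity: it bypasses the periodicity step and its bookkeeping entirely (and incidentally avoids some rough edges in the paper's write-up, where the period obtained is really $N(k^N-1)$ rather than $k^N-1$ and the final choice of $a$ has to respect it); what the paper's version buys is the intermediate structural fact that the deviations are eventually periodic, which is of some independent interest. Your explicit exclusion of $m=0$ is also warranted: the lemma is false as stated for $m=0$ (constant functions have integral difference ratios and are uniformly close to the zero function), the paper's final step silently needs $m\geq1$ as well, and only $m\geq 1$ is used in the application to Theorem~\ref{thm:uniformly close to exp}.
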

\begin{proof}
Let us write $f(x)=mk^x+\delta_x$ with $\delta_x\in\N$ such that
$\delta_x<M$.
Let $\mu\in\N$ be such that $M<k^{\mu-1}-1$, hence 
\begin{equation}\label{deltanew}
\forall x\in \N\quad \delta_x=|f(x)-mk^x|<k^{\mu-1}-1
\end{equation}
Let us apply the integral difference ratios assumption with
$a-b = N(k^N-1)$ for some $N\geq\mu+2$.
Then $N(k^N-1)$ divides $f(a)-f(b)$, and
in particular,
\begin{eqnarray}\label{fa moins fb congru}
f(a)-f(b) &\equiv& 0  \pmod { (k^N-1)}\;.
\end{eqnarray}
Now, $c-1$ divides $c^d-1$ for all $c,d\geq1$.
Letting $c=k^N$ and $d=k^N-1$, we see that
\begin{eqnarray}\label{fk to the a-b congru}
k^{a-b}-1 \;=\;(k^N)^{k^N-1}-1 &\equiv& 0  \pmod { (k^N-1)}\;.
\end{eqnarray}
Since $f(a)-f(b)=mk^a-  mk^b   + (\delta_a-\delta_b) 
= mk^b(k^{a-b}-1) + (\delta_a-\delta_b) $,
equations \eqref{fa moins fb congru}  
and  \eqref{fk to the a-b congru}  yield   
\begin{eqnarray}\label{delta-congru}
      \delta_a-\delta_b&\equiv & 0  \pmod { (k^N-1)}
\\
\text{hence}\qquad\delta_a-\delta_b&=&0
\phantom{\pmod { (k^N-1)}}
\text{due to \eqref{deltanew} and $N\geq\mu+2$} .
\end{eqnarray}
The sequence $(\delta_x)_{x\in\N}$ is thus periodic, with period $ (k^N-1)$. 
Let $a$ be a multiple of $mk(k^N-1)$.
Since $k\geq2$, inequality \eqref{deltanew} yields
$|\delta_a-\delta_0| < k^\mu$.
Now, by \eqref{delta-congru},
$\delta_a-\delta_0$ is divisible by $k^N-1>k^\mu$.
Thus, $\delta_a-\delta_0 = 0$, so that $f(a)-f(0)= m(k^a-1)$.
This contradicts the integral difference ratios property because
$mk$ divides $a$ but does not divide $m(k^a-1)$.
\end{proof}

\begin{lemma}\label{l:alpha-rationnel}
If If $k\in\N\setminus\{0,1\}$ and
$f:\N\to\N$ has integral difference ratios and is uniformly close to the  function $x\mapsto \alpha k^x$, with $\alpha \in\RR$, then $\alpha $ is rational.
\end{lemma}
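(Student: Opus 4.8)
The plan is to compress the two hypotheses into a single periodicity statement about the auxiliary sequence $h(x)=f(x+1)-kf(x)$, and then read off rationality of $\alpha$ from a geometric series. The divisibility hypothesis supplies $k$-adic information about $f$, while uniform closeness supplies archimedean (boundedness) information; the whole proof consists in marrying the two.

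First I would record the modular consequence of integral difference ratios that already drove Lemma~\ref{l:k-puissance-x}. Since $(a-b)\mid f(a)-f(b)$, transitivity of divisibility gives $k^N\mid f(a)-f(b)$ whenever $k^N\mid (a-b)$; equivalently $f(x)\equiv f(x')\pmod{k^N}$ as soon as $x\equiv x'\pmod{k^N}$, so the reduction $f(x)\bmod k^N$ depends only on $x\bmod k^N$. Applying this to the pairs $x,x'$ and $x+1,x'+1$ at once, and multiplying the first congruence by $k$, I get $h(x)=f(x+1)-kf(x)\equiv f(x'+1)-kf(x')=h(x')\pmod{k^N}$ whenever $x\equiv x'\pmod{k^N}$. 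Thus $h$ is $k$-adically continuous.

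The key step, and the one I expect to be the real obstacle, is to upgrade this congruence into genuine equality, and this is exactly where uniform closeness is used. Writing $f(x)=\alpha k^x+\varepsilon_x$ with $|\varepsilon_x|\le M$, one computes $h(x)=\varepsilon_{x+1}-k\varepsilon_x$, whence $|h(x)|\le (k+1)M=:C$ for all $x$. Now choose $N$ with $k^N>2C$: the congruence $h(x)\equiv h(x')\pmod{k^N}$ together with $|h(x)-h(x')|\le 2C<k^N$ forces $h(x)-h(x')=0$. Hence $h(x)$ depends only on $x\bmod k^N$, i.e. $h$ is genuinely periodic with some period $p=k^N$. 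This periodicity of $h$ is precisely the statement that the base-$k$ expansion of $\alpha$ is (eventually) periodic; the delicate point is that neither hypothesis alone suffices, one needs the divisibility to make $h$ $k$-adically locally constant and the boundedness to rule out carries so that local constancy becomes exact periodicity.

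Finally I would solve the recurrence $f(x+1)=kf(x)+h(x)$ to obtain $f(x)=k^xf(0)+\sum_{i=0}^{x-1}k^{\,x-1-i}h(i)$, divide by $k^x$, and let $x\to\infty$. Since $f(x)/k^x=\alpha+\varepsilon_x/k^x\to\alpha$ and the series converges absolutely ($h$ is bounded), this yields $\alpha=f(0)+\sum_{i\ge 0}h(i)\,k^{-i-1}$. Because $h$ has period $p$, writing $i=j+mp$ with $0\le j<p$ and summing the geometric series in $m$ gives $\alpha=f(0)+\dfrac{1}{k\,(1-k^{-p})}\sum_{j=0}^{p-1}h(j)\,k^{-j}$, a rational number, which is the desired conclusion. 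The three displayed manipulations in this last paragraph are routine, so the entire weight of the argument rests on establishing the exact periodicity of $h$ in the previous step.
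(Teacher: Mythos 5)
Your proof is correct, and it takes a genuinely different route from the paper's. The paper first rescales (passing to $g(x)=f((\mu+2)x)$ and base $\ell=k^{\mu+2}$) so that the error term drops below $k/4$, writes $f(x)=\lfloor\alpha k^x\rfloor+\delta_x$, and then works directly with the base-$k$ digits $t_i$ of $\alpha$: applying the integral difference ratios hypothesis to pairs $a=b+k$ it extracts a constraint on consecutive two-digit blocks and concludes, after a case analysis, that the digit sequence of $\alpha$ is eventually periodic, hence $\alpha\in\Q$. You instead package everything into the single sequence $h(x)=f(x+1)-kf(x)$: the divisibility hypothesis makes $h$ locally constant modulo $k^N$ (since $x\equiv x'\pmod{k^N}$ forces $f(x)\equiv f(x')\pmod{k^N}$), uniform closeness gives $|h|\le(k+1)M$, and taking $k^N>2(k+1)M$ upgrades the congruence to exact periodicity of $h$; rationality then falls out of the identity $\alpha=f(0)+\sum_{i\ge0}h(i)k^{-i-1}$ and a geometric series. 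Both arguments run on the same engine --- a $k$-adic congruence supplied by divisibility, sharpened to an equality by an archimedean bound supplied by closeness --- but your formulation dispenses with the rescaling, the floor-function bookkeeping, the convention on tails of $(k-1)$'s, and the terminal case analysis, and it produces an explicit rational expression for $\alpha$. What the paper's digit-level version buys is only the incidental information about digits of $\alpha$ spaced $k$ apart, which is not used elsewhere; your version is the cleaner of the two.
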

\begin{proof}
Let $M$ be such that $|f(x)-\alpha k^x|<M$ for all $x\in\N$.
For some $\mu\in\N$ we have $M< k^\mu$.
Let $\ell=k^{\mu+2}$ and $g(x)=f((\mu+2) x)$.
Then $g$ also has integral difference ratios and is uniformly close
to the  function $x\mapsto \alpha\, \ell^x$
and $|g(x)-\alpha\, \ell^x|<k^2=\ell/k^2\leq \ell/4$.

Thus, with no loss of generality,
we can reduce to the case $M=k/4$ with
\begin{equation}\label{delta}
f(x) = \lfloor\alpha k^x\rfloor + \delta_x
\text{\qquad where $\delta_x\in\N$
satisfies $|\delta_x|<k/2$.}
\end{equation}

We use the base $k$ expansion of integers and reals.
In case $\alpha$ is of the form $n/k^p$, with $p\in \N$, we systematically consider its infinite base $k$
expansion which ends with a tail of $0$'s and not a tail of $(k-1)$'s.
The (finite) base $k$ expansions of the integers
$\lfloor\alpha k^b\rfloor$ and $\lfloor\alpha k^a\rfloor$
are  related to the (infinite) base $k$ expansion
of the real $\alpha$.
If $w$ is a (finite or infinite) word on the alphabet
$\{0, 1,\cdots,k-1, \;.\;\}$,
we denote by $\overline{w}$ the integer or real having $w$ as
base $k$ expansion. Then
\begin{eqnarray}
\notag
\alpha &=& \overline{t_0t_1\ldots t_p\;.\;t_{p+1}t_{p+2}\ldots}
\\\notag
\lfloor\alpha k^c\rfloor
&=& \overline{t_0t_1\ldots t_{p+c}}\qquad\qquad\qquad
{\text {(with our convention on tails)}}
\\\notag
&\equiv& t_{p+c} \pmod  k
\\\label{eq:fa fb is}
\lfloor\alpha k^{a}\rfloor-\lfloor\alpha k^{b}\rfloor
+(\delta_{a}-\delta_{b})
&\equiv&t_{p+a} - t_{p+b} + (\delta_{a}-\delta_{b})\quad\pmod k
\end{eqnarray}
where the digits $t_i$'s, $i\in\N$, are in $\{0,1,\ldots,k-1\}$
and $|\delta_a-\delta_b|<k$.

Let $b\in\N$ and $a=b+k$.
The integral difference ratios assumption insures that
$$
\begin{array}{lrcl}
&f(a)-f(b)&\equiv&0\quad\pmod k
\\
\text{but}&
f(a)-f(b)&=&\lfloor\alpha k^{a}\rfloor-\lfloor\alpha k^{b}\rfloor
+(\delta_{a}-\delta_{b})
\\
\text{hence}&
t_{p+a} - t_{p+b} + (\delta_{a}-\delta_{b})
&\equiv&0\quad\pmod k
\end{array}
$$
Since 
$|t_{p+a} - t_{p+b} + (\delta_{a}-\delta_{b})|\leq2k-2$,
we see that
$t_{p+a} - t_{p+b} + (\delta_{a}-\delta_{b})\in\{-k,0,k\}$.
Recalling that $b$ is arbitrary and $a=b+k$,
this means that
\begin{equation}\label{eq:period}
\forall n\geq p\ \ \ 
\overline{t_{n}t_{n+1}} - \overline{t_{n+k}t_{n+1+k}}
\in\{-k,0,k\}
\end{equation}
To conclude, we argue by cases.
\\
{\it Case $t_n=t_{n+k}$ for some $n\geq p$.}
Then \eqref{eq:period} insures that
$\overline{t_{n}t_{n+1}} - \overline{t_{n+k}t_{n+1+k}}=0$
hence $t_{n+1}=t_{n+1+k}$
and, via an obvious induction, $t_m=t_{m+k}$ for all $m\geq n$.
In particular, $\alpha$ is eventually periodic with period $1$
hence $\alpha$ is rational.
\\
{\it Case $t_n\neq t_{n+k}$ for all $n\geq p$.}
First, \eqref{eq:period} insures that 
(either $t_n = t_{n+k}+1$ or $t_n +1 = t_{n+k}$)
and $t_{n+1}=t_{n+1+k}$, contradicting the assumption.
\end{proof}

{\it {Proof of Theorem~\ref{thm:uniformly close to exp}}}.
By Lemma \ref{l:alpha-rationnel},
if  $x\mapsto \alpha k^x$ is uniformly close to a function $f$
having integral difference ratios,
then $\alpha $ is rational. Let $\alpha =m/n$
with $m,n\in\N$. But $|f(x)-(m/n)  k^x|<M$  implies
$|nf(x)-m  k^x|<nM$, 
and $nf$ has integral difference ratios (by Proposition \ref{p:closure sum},
closure under sum) and $nM$ is a constant, 
hence $x\mapsto m  k^x$ is uniformly close to a function 
having integral difference ratios, contradicting Lemma \ref{l:k-puissance-x}.
\qed

\subsection{Functions around the factorial function}
\label{ss:fact not idr}
%
The following results make Theorem~\ref{thm:e factorial}
all the more unexpected.
\begin{proposition}\label{p:fact not idr}
If $a\in\Z\setminus\{0\}$ then the function $a^x\;x!$ does not have integral difference ratios.
\end{proposition}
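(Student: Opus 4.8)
The plan is to refute the integral difference ratios property by exhibiting a single pair of arguments at which divisibility fails; no appeal to Theorem~\ref{thm:main1} or to the Newton coefficients of $a^x\,x!$ is needed, and the same witness will work uniformly for every $a\in\Z\setminus\{0\}$.

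Writing $f(x)=a^x\,x!$, I would evaluate $f$ at the two points $0$ and $n$, where $n\geq2$ is an arbitrary integer. This gives $f(0)=a^0\,0!=1$ and $f(n)=a^n\,n!$, and the corresponding difference of arguments is $n-0=n$. The key observation is the elementary divisibility $n\mid n!$ (since $n!=n\cdot(n-1)!$), which forces $a^n\,n!\equiv0\pmod n$ regardless of the value of $a$. Consequently $f(n)-f(0)=a^n\,n!-1\equiv-1\pmod n$, and since $n\geq2$ the residue $-1$ is nonzero modulo $n$. Hence $n$ does not divide $f(n)-f(0)$, so $\dfrac{f(n)-f(0)}{n}\notin\Z$ and $f$ fails to have integral difference ratios.

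There is essentially no obstacle to this argument; the only point requiring minor care is the clash between the parameter name $a$ in the statement and the variable names used in the definition of integral difference ratios, which I sidestep by taking the witnessing pair to be $(n,0)$. Taking $n=2$ already suffices and is fully concrete: $f(2)-f(0)=2a^2-1$ is odd, while the difference of arguments is $2$, so $2\nmid f(2)-f(0)$. It is worth emphasizing that the computation is completely independent of $a$, which is exactly what is required, since the claim must hold for all nonzero integers $a$, and in particular one does not need $a$ to be coprime to $n$.
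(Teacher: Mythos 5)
Your proof is correct, and it is genuinely simpler than the one in the paper. The paper refutes the property at the pair $(2y+1,\,y)$ where $p=y+1$ is a prime chosen larger than $a$: then $p$ divides $(2y+1)!$ hence $a^{2y+1}(2y+1)!$, but $p$ divides neither $a^y$ nor $y!$, so $p=x-y$ fails to divide the difference. You instead anchor one argument at $0$, where $f(0)=a^0\,0!=1$ independently of $a$, and use only the trivial divisibility $n\mid n!$ to get $f(n)-f(0)=a^n\,n!-1\equiv-1\pmod{n}$ for every $n\geq2$; the instance $n=2$, with $f(2)-f(0)=2a^2-1$ odd, already settles the matter. Your route avoids any appeal to the existence of arbitrarily large primes and is uniform in $a$, which is a real simplification. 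What the paper's choice of witness buys, if anything, is the extra information that the failure occurs at infinitely many pairs both of whose entries can be taken arbitrarily large (not involving the special point $0$), but that is not required by the statement. Both arguments correctly use only the definition of integral difference ratios and neither needs Theorem~\ref{thm:main1}.
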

\begin{proof}
Let $y+1> a$ be prime and $x=2y+1$
and observe that $x-y=y+1$ divides $x!$ hence also $a^x\,x!$
but does not divide $a^y\,y!$
hence does not divide $a^x\,x!-a^y\,y!$\,.
\end{proof}
Due to Theorem~\ref{thm:e factorial},
the following strengthening of Proposition~\ref{p:fact not idr}
for the case $a=1$ fails for $a\in\Z\setminus\{0,1\}$.
It also stresses that the one-point modification of
$\lfloor e \ x!\rfloor$ and $\lceil e \ x!\rceil$
in Theorem~\ref{thm:e factorial} is no accident.
\begin{proposition}\label{p:alpha fact not idr}
Let $\alpha$ be a non zero real number.
The functions
$x\mapsto \lfloor\alpha \ x!\rfloor$
and $x\mapsto \lceil\alpha \ x!\rceil$
do not have integral difference ratios.
\end{proposition}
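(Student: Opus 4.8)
The plan is to reduce the two statements to one and then squeeze the integral difference ratios hypothesis on a handful of cleverly chosen pairs until it forces $\alpha=0$. First I would dispose of the ceiling case once and for all via $\lceil t\rceil=-\lfloor -t\rfloor$: since $g$ has integral difference ratios if and only if $-g$ does, the map $x\mapsto\lceil\alpha\,x!\rceil$ has integral difference ratios if and only if $x\mapsto\lfloor(-\alpha)\,x!\rfloor$ does. As the statement quantifies over \emph{all} nonzero reals, it is enough to rule out $f(x)=\lfloor\alpha\,x!\rfloor$ for every $\alpha\neq0$. So I assume, for contradiction, that this $f$ has integral difference ratios. I set $\beta_m=\{\alpha\,m!\}\in[0,1)$ and $A=\lfloor\alpha\rfloor$, and record the two basic facts $f(0)=f(1)=A$ and, from $\alpha\,(m+1)!=(m+1)\,\alpha\,m!$, the recurrence
\[
f(m+1)=(m+1)\,f(m)+c_m,\qquad c_m=\lfloor (m+1)\beta_m\rfloor\in\{0,1,\ldots,m\}.
\]

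Next I extract two families of congruences. Applying the hypothesis to the pairs $(m,0)$ and $(m+1,1)$, both of which have gap $m$ and both of which involve $f(0)=f(1)=A$, and subtracting, gives $m\mid f(m+1)-f(m)=m\,f(m)+c_m\equiv c_m\pmod m$, hence $m\mid c_m$ and therefore $c_m\in\{0,m\}$ for all $m\geq1$. Applying the hypothesis to the single pair $(n,0)$ gives $n\mid f(n)-A$, and since $f(n)=n\,f(n-1)+c_{n-1}\equiv c_{n-1}\pmod n$ this yields $c_{n-1}\equiv A\pmod n$ for all $n\geq1$. Combining the two for $n\geq2$, the value $c_{n-1}\in\{0,n-1\}$ must be congruent to $A$ modulo $n$; taking $n>|A|+1$ forces $A\equiv0$ or $A\equiv-1\pmod n$, so $A\in\{0,-1\}$, and then $c_{n-1}$ is pinned down: $c_m=0$ for all $m$ when $A=0$, and $c_m=m$ for all $m$ when $A=-1$. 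The decisive realization here is that pairing $(n,0)$ alone reads off $c_{n-1}\bmod n$, which together with $c_{n-1}\in\{0,n-1\}$ forces \emph{all} the defects $c_m$ to have a single, constant type.

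The contradiction is then immediate from a telescoping. If $A=0$, then $c_m=0$ means $\beta_m<\tfrac{1}{m+1}$, so $(m+1)\beta_m<1$ and hence $\beta_{m+1}=\{(m+1)\beta_m\}=(m+1)\beta_m$; telescoping gives $\beta_m=m!\,\beta_1$, and the requirement $\beta_m<\tfrac1{m+1}$ for every $m$ forces $\beta_1<\tfrac1{(m+1)!}$ for every $m$, whence $\beta_1=\{\alpha\}=0$, i.e.\ $\alpha\in\Z$; with $A=\lfloor\alpha\rfloor=0$ this means $\alpha=0$. If $A=-1$, then $c_m=m$ means $s_m:=1-\beta_m\in(0,\tfrac1{m+1}]$, and the same computation gives $s_{m+1}=(m+1)s_m$, hence $s_m=m!\,s_1$ with $s_1=-\alpha>0$, which is incompatible with $s_m\leq\tfrac1{m+1}$. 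In either case $\alpha=0$, contradicting $\alpha\neq0$, and the proposition follows.

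I do not expect a genuine obstacle in this route once the setup is in place; the only non-routine step is the second congruence family, the observation that $(n,0)$ by itself determines $c_{n-1}\pmod n$ and thereby collapses the possible sign of the fractional defect to a single branch. Everything after that is the elementary telescoping above. The alternative of attacking the problem through the Newton coefficients (Theorem~\ref{thm:main1}), computing $a_k=\alpha\cdot{!}k-\sum_{i=0}^k(-1)^{k-i}\binom{k}{i}\beta_i$ with $!k$ the derangement numbers and demanding $\lcm(k)\mid a_k$, looks harder to control for irrational $\alpha$ because of the uncontrolled fractional parts, so I would favour the pair-based argument.
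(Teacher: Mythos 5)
Your proof is correct, and it takes a genuinely different route from the paper's. The paper applies the hypothesis to the pairs $(a,0)$ and $(a,1)$ to get $a(a-1)\mid\lfloor\alpha\,a!\rfloor-\lfloor\alpha\rfloor$, then runs a rounding argument on $K_a/(a-2)!$ to force $\alpha$ to be rational, and finally, writing $\alpha=p/q$, picks $a$ with $a-q$ a large prime to contradict divisibility; it also reduces the ceiling case via $\lfloor -r\rfloor=-\lceil r\rceil$ exactly as you do. You instead exploit the exact multiplicative recurrence $f(m+1)=(m+1)f(m)+c_m$ with $c_m=\lfloor(m+1)\{\alpha\,m!\}\rfloor$, use the pairs $(m,0)$, $(m+1,1)$ to force $c_m\in\{0,m\}$ and the pair $(n,0)$ to read off $c_{n-1}\bmod n$, which collapses everything to $A\in\{0,-1\}$ with all carries on one branch, and then a telescoping of the fractional parts $\beta_m$ (or of $1-\beta_m$) kills both branches. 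I checked the key steps: the recurrence, the two congruence families, the determination $c_m\equiv A\pmod{(m+1)}$ combined with $c_m\in\{0,m\}$, and the telescopings $\beta_m=m!\,\beta_1$ and $s_m=m!\,s_1$ against the bounds $\beta_m<1/(m+1)$ and $s_m\le 1/(m+1)$; all are sound. What your approach buys is a completely elementary, self-contained argument that never needs to establish rationality of $\alpha$, never invokes primes, and handles $A\notin\{0,-1\}$ (e.g.\ $\alpha$ a nonzero integer, where the floor is exact) in the same stroke; the paper's proof, by contrast, isolates the rational case explicitly and uses a prime-gap construction there. Your version is arguably cleaner and could replace the paper's proof without loss.
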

\begin{proof}
We reduce to the case $\alpha>0$ since
$\lfloor-r\rfloor=-\lceil r\rceil$ for $r\in\RR$.
We consider the $\lfloor\ldots\rfloor$ case,
the $\lceil\ldots\rceil$ case being similar.
Arguing by contradiction, assume $\lfloor\alpha\times x!\rfloor$
has integral difference ratios.
Let $\theta_a\in[0,1[$ be such that
$\alpha\;a!=\lfloor \alpha\;a! \rfloor +\theta_a$.

First, we prove that if $\colon x\mapsto \lfloor\alpha \ x!\rfloor$ has integral difference ratios, then
$\alpha$ is a rational number.
Since $0!=1!=1$, applying the integral difference ratios property to
$a\in\N\setminus\{0\}$ and $b=0$ and $b=1$,
we see that both $a$ and $a-1$ divide
$\lfloor \alpha\;a!\rfloor - \lfloor \alpha\rfloor$
hence (since $a$ and $a-1$ are relatively prime) $a(a-1)$
divides $\lfloor \alpha\;a!\rfloor - \lfloor \alpha\rfloor$.
Thus, there exists $K_a \in\N$ such that
\begin{eqnarray}\notag
a(a-1)K_a &=& \lfloor \alpha\;a!\rfloor - \lfloor \alpha\;0!\rfloor
 \;=\;(\alpha\;a! -\theta_a)-(\alpha\;0! -\theta_0)
 \\\label{eq:alpha fact a plus delta}
&=&\alpha\;a! +\delta_a
\qquad\text{where $|\delta_a|\leq 2+\alpha$}
\\\label{eq:alpha Ka a-2}
\alpha&=&\dfrac{K_a}{(a-2)!}-\dfrac{\delta_a}{a!}
\\\label{eq:alpha Ka a-1}
\text{hence}\quad\alpha&=&\dfrac{K_a(a-1)}{(a-1)!}-\dfrac{\delta_a}{a!}
\\\label{eq:alpha Kaplus1}
\alpha&=&\dfrac{K_{a+1}}{(a-1)!}-\dfrac{\delta_{a+1}}{(a+1)!}
\quad\text{(replace $a$ by $a+1$ in \eqref{eq:alpha Ka a-2})}
\end{eqnarray}

Let $\round(\alpha,N)$ be the unique integer $x$ such that
$\alpha\in[\frac{x}{N}-\frac{1}{2N},\frac{x}{N}+\frac{1}{2N}[$.

Since $|\delta_a|,|\delta_{a+1}|$ are bounded by $2+\alpha$
then, for $a$ large enough,
\eqref{eq:alpha Ka a-2} insures that $\round(\alpha,(a-2)!)=K_a$
whereas \eqref{eq:alpha Ka a-1} and \eqref{eq:alpha Kaplus1}
insure that $\round(\alpha,(a-1)!)=K_a(a-1)=K_{a+1}$.
Thus, $K_a(a-1)=K_{a+1}$ hence
$\dfrac{K_a}{(a-2)!}=\dfrac{K_{a+1}}{(a-1)!}$
is a rational constant $r$ independent of $a$ for $a$ large enough.
Equations \eqref{eq:alpha fact a plus delta} and \eqref{eq:alpha Kaplus1}
insure that
$|\alpha-r| < (2+\alpha)/(a+1)!$ for all $a$ big enough; 
hence $|\alpha-r|$ is arbitrarily small and thus
 $\alpha=r$ is a rational number.

We can now get the wanted contradiction.
Let $\alpha=p/q$ where $p,q$ are relatively prime.
Let $a$ be such that $a-q$ is prime and $a-q>p\;q!$.
Since $a>q$, we have $\alpha\;a!=p(a!)/q\in\N$ 
hence $\lfloor\alpha\;a!\rfloor=\alpha\;a!$
and $a-q$ divides $\lfloor\alpha\;a!\rfloor$.

Also, $\alpha\;q!=p(q-1)!\in\N$ 
hence $\lfloor\alpha\;q!\rfloor=\alpha\;q!=p(q-1)!$. 
Since $a-q$ is prime and $a-q >p\;q!$, it cannot divide
$\lfloor\alpha\;q!\rfloor$.
Thus, $a-q$ cannot divide
$\lfloor\alpha\;a!\rfloor - \lfloor\alpha\;q!\rfloor$,
contradicting the integral difference ratios assumption.
\end{proof}
Theorem \ref{thm:factorial uniformly close} shows that the uniform closeness analog of
Corollary~\ref{cor:alpha factorial equivalent}  fails.
\begin{theorem}\label{thm:factorial uniformly close}
For $a\in\RR\setminus\{0,1\}$,
let $X_a$ be the set of real numbers $\alpha$ such that
the  map $x\mapsto\alpha\,a^x\;x!$ from $\N$  to $\RR$
is uniformly close to some function having integral difference ratios.
\\
1. For every $a\in\RR\setminus\{0\}$,
the set $X_a$ has Lebesgue measure zero.
\\
2. If $a\in\Z\setminus\{0\}$,
the set $X_a$ contains $e^{1/a}\,\Z$
but misses every non null rational number.
\end{theorem}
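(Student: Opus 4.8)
The plan is to treat the three assertions separately, reusing earlier results for the two easy ones and isolating a self-contained divisibility argument for the hard one. For statement~1 I would simply invoke part~2 of Corollary~\ref{cor:ae}, applied with the function $f\colon n\mapsto a^n n!$ (whose values are eventually strictly separated, since $|a^n n!|\to\infty$): it already asserts that for almost every $\alpha$ the map $n\mapsto\alpha a^n n!$ is not uniformly close to any function having integral difference ratios. Since $\alpha a^x x!$ is uniformly close to $\lfloor\alpha a^x x!\rfloor$, this is exactly the statement that $\alpha\notin X_a$ for almost all $\alpha$, i.e. $X_a$ is null, and no further work is required.

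For the inclusion $e^{1/a}\Z\subseteq X_a$ in statement~2, I would take $\alpha=s\,e^{1/a}$ with $s\in\Z$ and appeal to Corollary~\ref{thm:e factorialbis}, which produces a function $g_{a,s}\colon\N\to\Z$ having integral difference ratios that agrees with $\lfloor s\,e^{1/a}a^x x!\rfloor$ for all $x\ge se-1$. As $\alpha a^x x!=s\,e^{1/a}a^x x!$ differs from its floor by less than $1$ and from $g_{a,s}$ on only finitely many points, it is uniformly close to $g_{a,s}$; hence $s\,e^{1/a}\in X_a$ for every $s\in\Z$.

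The substance of the theorem is that $X_a$ misses every nonzero rational, and this is where I expect the real work. Fix $\alpha=p/q\neq0$ in lowest terms and suppose, for contradiction, that $\alpha a^x x!$ is uniformly close to some $g$ with integral difference ratios, say $|\alpha a^x x!-g(x)|\le M$. The key observation is that for $x\ge q$ one has $q\mid x!$, so $h(x):=\alpha a^x x!=p\,a^x(x!/q)$ is an \emph{integer}; consequently $e(x):=g(x)-h(x)$ is an integer with $|e(x)|\le M$. I would then probe the integral difference ratios property across a gap equal to a large prime $P$. Writing $c=p\,a^b(b!/q)\in\Z$ for $b\ge q$, a direct factorization gives
\[
h(b+P)-h(b)=c\left(a^{P}\prod_{j=1}^{P}(b+j)-1\right),
\]
and since the $P$ consecutive integers $b+1,\dots,b+P$ contain a multiple of $P$, their product vanishes modulo $P$, whence $h(b+P)-h(b)\equiv -c\pmod P$. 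The divisibility $P\mid g(b+P)-g(b)=\bigl(h(b+P)-h(b)\bigr)+\bigl(e(b+P)-e(b)\bigr)$ then forces $c\equiv e(b+P)-e(b)\pmod P$, with the right-hand side bounded by $2M$ in absolute value.

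The crux is the concluding size argument. Since $|a|\ge1$ and $b!/q\to\infty$, the integer $|c|=|p|\,|a|^{b}(b!/q)$ tends to infinity with $b$, so I would first choose $b\ge q$ with $|c|>2M$ and only then a prime $P>|c|+2M$. The integer $c-\bigl(e(b+P)-e(b)\bigr)$ is nonzero (its modulus is at least $|c|-2M>0$) yet has modulus at most $|c|+2M<P$, so it cannot be divisible by $P$, contradicting $c\equiv e(b+P)-e(b)\pmod P$. Hence no nonzero rational lies in $X_a$. I anticipate that the only delicate points are the bookkeeping of the bounded integer perturbation $e$ and the verification that $|c|$ outgrows any fixed uniform-closeness bound; both become routine precisely because the prime gap $P$ is selected last, after $b$ has been fixed.
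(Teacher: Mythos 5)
Your proof is correct. Parts 1 and the inclusion $e^{1/a}\,\Z\subseteq X_a$ are handled exactly as in the paper (Theorem~\ref{thm:ae}/Corollary~\ref{cor:ae} with $\lambda_n=a^n\,n!$, and Corollary~\ref{thm:e factorialbis}), but for the exclusion of the nonzero rationals your argument genuinely differs from the paper's. Both proofs start from the same observation that $h(x)=(p/q)\,a^x\,x!$ is an integer for $x\ge q$, so the perturbation $e(x)=g(x)-h(x)$ is a bounded integer, and both then test the integral difference ratios property on a single well-chosen pair. The paper takes $y=2qM$ and $x=p\,a^y\,y!+y+1$, argues that $q$ and $x-y$ are coprime so that $q(x-y)$ divides $x!$ and hence $p\,a^x\,x!$, and derives the contradiction that $q(x-y)$ divides the nonzero integer $p\,a^y\,y!+\ell_{x,y}$ of smaller modulus. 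You instead take the gap to be a large prime $P$, use $(b+1)\cdots(b+P)\equiv0\pmod P$ to get $h(b+P)-h(b)\equiv-h(b)\pmod P$, and choose first $b$ and then $P$ so that $h(b)-\bigl(e(b+P)-e(b)\bigr)$ is a nonzero integer of modulus less than $P$. Your route buys some robustness: it needs no coprimality bookkeeping between $q$ and the gap, it works verbatim for negative $p$ or $a$ (the paper's choice $x=p\,a^y\,y!+y+1$ and its final inequality tacitly assume $p\,a^y>0$), and it covers integer $\alpha$ (the paper's proof writes $q\in\N\setminus\{0,1\}$). What the paper's construction buys in exchange is reusability: the same pair $(x,y)$ is recycled in the proof of Theorem~\ref{thm:factorial uniformly close liouville}, where $\alpha$ is irrational and $h$ is no longer integer-valued, a setting in which your prime-gap congruence would not apply directly.
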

\begin{proof}
1. Applied with $\lambda_n=a^n\;n!$, Theorem~\ref{thm:ae} 
insures that $X_a$ has measure zero.
\smallskip\\
2. Inclusion $X_a\supseteq e^{1/a}\N$, for $a\in\Z\setminus\{0\}$,
is a consequence of Corollary~\ref{thm:e factorialbis}.

By way of contradiction, assume some non null rational number
$\alpha$ is in $X_a$.
Let $\alpha=p/q$ where $p\in\Z$ and $q\in\N\setminus\{0,1\}$.
Let $f:\N\to\Z$ and $M\in\N$ be such that
$f$ has integral difference ratios and
$\left|f(x)-(p/q)\,a^x\;x!\right|< M$ for all $x\in\N$.
Thus,
$$
q\,f(x)=p\,a^x\;x! + \varepsilon_x\,q\,M
\quad\text{with $|\varepsilon_x|< 1$.}
$$
Choose $y=2qM$ and $x=p\,a^y\;y!+y+1$.
Every divisor of $q$ divides $y$ and $x-y-1$
hence does not divide $x-y$. Thus, $q$ and $x-y$ are coprime.
Since $q,x-y<x$, we see that $q (x-y)$ divides $x!$ hence
also $p\,a^x\;x!$.
Since $q (x-y)$ divides $q\,(f(x)-f(y))$,
we see that $q\,(x-y)$ divides
\begin{eqnarray}\notag
p\,a^x\;x! - q\, (f(x)-f(y)) 
&=&p\,a^x\;x! - \left(p a^x\;x! +\varepsilon_x\,q\, M\right)
+ \left(p\,a^y\;y! +\varepsilon_y\,q\, M\right)
\\\label{eq:pax - fx fy}
&=&p\,a^y\;y! + \ell_{x,y}
\end{eqnarray}
where $|\varepsilon_x|, |\varepsilon_1| < 1$
and $\left|\ell_{x,y}\right| <2qM=y$, 
whence $p\,a^y\;y!+\ell_{x,y}  <p\,a^y\;y!+y$. 
Then $q(x-y)=q(pa^yy! +1)>2pa^yy!$  cannot divide
$p\,a^y\;y! + \ell_{x,y}<p\,a^y\;y! + y<2pa^yy!$. Contradiction.
\end{proof}
How complex are the real numbers in the set $X_a$
of Theorem~\ref{thm:factorial uniformly close}?
First, we recall the notions of irrationality measure and
Liouville numbers.
\begin{definition}
1. The {\rm irrationality exponent} of a real number $\alpha$ is the
supremum of all $\mu\in\RR^+$ such that
the approximation\ $|\alpha - (p/q)| < 1/q^\mu$\
holds for infinitely many rational numbers $p/q$.
\\
2. A real number is {\rm Liouville} if its irrationality exponent is infinite.
\end{definition}
\begin{proposition}
1. Rational numbers have irrationality exponent $1$.
\\
2. Irrational numbers have irrationality exponent at least $2$.
\\
3. (Roth, 1955) All irrational algebraic numbers have
irrationality exponent $2$.
\\
4. (Khinchin, 1924, cf. \cite{Shidlovskii1989} p.17) Almost all real numbers have irrationality exponent $2$.
\end{proposition}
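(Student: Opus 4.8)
The plan is to prove the two elementary assertions (1) and (2) directly, and to obtain (3) and (4) simply by quoting the stated references: these are the deep theorems of Roth and of Khinchin, and they lie far outside the scope needed here, so no argument should be reproduced for them.

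For (1), I would fix a rational $\alpha=a/b$ with $\gcd(a,b)=1$ and pin the exponent at $1$ by matching an upper and a lower bound. For the upper bound, note that for every rational $p/q\neq\alpha$,
$$
\left|\alpha-\frac{p}{q}\right|=\frac{|bp-aq|}{bq}\ \geq\ \frac{1}{bq},
$$
since $bp-aq$ is a nonzero integer. Hence any $\mu$ for which $|\alpha-p/q|<q^{-\mu}$ can hold forces $q^{\mu-1}<b$; when $\mu>1$ this bounds $q$, leaving only finitely many admissible denominators and thus finitely many fractions, so the exponent is at most $1$. For the lower bound, B\'ezout's identity supplies a family of pairs $(p,q)$ with $bp-aq=1$ and $q\to+\infty$; for each of these $|\alpha-p/q|=1/(bq)$, and $1/(bq)<q^{-\mu}$ holds for all large $q$ whenever $\mu<1$. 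Thus every $\mu<1$ admits infinitely many solutions, the exponent is at least $1$, and (1) follows.

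For (2), I would invoke Dirichlet's approximation theorem: if $\alpha$ is irrational there are infinitely many reduced fractions $p/q$ with $|\alpha-p/q|<1/q^2$. Since $q^{-2}\leq q^{-\mu}$ for all $q\geq1$ when $\mu\leq2$, these same fractions satisfy $|\alpha-p/q|<q^{-\mu}$, so every $\mu\leq2$ yields infinitely many solutions and the irrationality exponent is at least $2$.

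There is no genuine obstacle here: (1) and (2) are classical, and the two bounds in (1) fit together cleanly. The one subtlety worth flagging is the convention in the definition of the exponent, namely that one counts infinitely many \emph{distinct} rationals $p/q$; this is exactly what prevents a rational $\alpha$ from trivially achieving arbitrarily large $\mu$ by writing $\alpha$ itself with ever-larger denominators, and the lower bound $|\alpha-p/q|\geq 1/(bq)$ over genuinely distinct fractions is the point that caps the exponent at $1$. Assertions (3) and (4) are recorded for later use and cited verbatim, the metric statement from \cite{Shidlovskii1989}; no proof is given.
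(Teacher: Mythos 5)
Your proposal is correct. Note, however, that the paper itself gives no proof of this proposition at all: it is stated as a list of classical facts, recalled only to situate the numbers $s\,e^{1/a}$ and the set $X_a$ of Theorem~4.18, with (3) and (4) attributed to Roth and Khinchin exactly as you treat them. So there is nothing in the paper to compare your arguments for (1) and (2) against; what you supply is extra content. Your two arguments are the standard ones and they are sound: the lower bound $|\alpha-p/q|\geq 1/(bq)$ for $p/q\neq a/b$ caps the exponent of a rational at $1$, the B\'ezout family $bp-aq=1$ realizes every $\mu<1$, and Dirichlet's theorem gives (2). Your remark about the convention --- that one must count infinitely many \emph{distinct} rationals (equivalently, exclude the trivial solutions $p/q=\alpha$ rewritten with growing denominators) --- is exactly the point on which (1) hinges, since the paper's Definition~4.19 is literally ambiguous on this; flagging it is appropriate and your reading is the intended one.
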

For $a\in\Z\setminus\{0\}$,
the sole numbers known to be in the set $X_a$
of Theorem~\ref{thm:factorial uniformly close}
are those in the set $e^{1/a}\,\Z$.
It turns out that all have irrationality exponent equal to $2$.
\begin{proposition}
All numbers $s\,e^{1/a}$, $a,s\in\Z\setminus\{0\}$,
have irrationality exponent $2$.
In particular, though they are transcendental,
they are not Liouville.
\end{proposition}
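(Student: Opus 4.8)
The plan is to isolate a single analytic fact---that $e^{1/n}$ has irrationality exponent exactly $2$ for every integer $n\ge 1$---and to deduce the statement from it by elementary manipulations of the exponent, writing $\mu(\cdot)$ for the irrationality exponent. First I would record two invariance properties: if $\beta$ is irrational and $r\in\Q\setminus\{0\}$, then $\beta$, $r\beta$ and $1/\beta$ all have the same irrationality exponent. For $r\beta$, with $r=u/v$, an approximation $p/q$ of $r\beta$ is the same thing as the approximation $vp/(uq)$ of $\beta$, whose denominator is comparable to $q$ up to the fixed factor $|u|$; for $1/\beta$ one uses that $p/q\to 1/\beta$ forces $q/p\to\beta$ with $p\asymp q$, together with the identity $\tfrac1\beta-\tfrac pq=-\tfrac{p}{q\beta}\bigl(\beta-\tfrac qp\bigr)$. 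In both cases the bounded constant factors do not affect the supremum defining the exponent. Granting these, for $a<0$ I write $e^{1/a}=1/e^{1/|a|}$ to reduce to positive exponents, and the factor $s\in\Z\setminus\{0\}$ (including its sign) is absorbed by the invariance under multiplication by a nonzero rational; so it suffices to treat $e^{1/n}$, $n\ge1$.

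For the core fact, the lower bound $\mu\ge 2$ is free: $e^{1/n}$ is irrational, so part~2 of the Proposition recalled above applies. The content is the upper bound $\mu\le 2$, and here I would invoke the explicit regular continued fraction expansion of $e^{1/n}$ computed by Euler, in which the partial quotients $a_k$ grow at most linearly in their index $k$ (for $n=1$ this is the classical $e=[2;1,2,1,1,4,1,1,6,\dots]$). Using the standard formula
$$\mu(\alpha)=2+\limsup_{k\to\infty}\frac{\log a_{k+1}}{\log q_k},$$
where $q_k$ is the denominator of the $k$-th convergent, it remains to check the limsup is $0$. Since a positive proportion of the $a_j$ are of size $\asymp j$, and $q_k$ exceeds $\prod_{j\le k}a_j$ up to bounded factors, one gets $\log q_k\gtrsim\sum_{j\le k}\log a_j\asymp k\log k$, which dominates $\log a_{k+1}=O(\log k)$; hence the limsup vanishes and $\mu(e^{1/n})=2$. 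A self-contained alternative avoiding Euler's expansion would use Hermite's Pad\'e approximants to the exponential to produce, for each $N$, the approximation $A_N/(n^N N!)$ of $e^{1/n}$ together with a matching lower bound $|e^{1/n}-p/q|>c\,q^{-2-\varepsilon}$; this is more laborious but entirely elementary.

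Finally I would assemble the three assertions. Combining the invariances with $\mu(e^{1/n})=2$ gives $\mu(s\,e^{1/a})=2$ for all $a,s\in\Z\setminus\{0\}$; this is finite, so by definition these numbers are not Liouville. Transcendence is a separate classical input: by the Lindemann--Weierstrass theorem $e^{\xi}$ is transcendental for every nonzero algebraic $\xi$, so $e^{1/a}$ is transcendental, and hence so is its nonzero rational multiple $s\,e^{1/a}$.

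I expect the upper bound $\mu(e^{1/n})\le 2$ to be the only genuine obstacle. Everything else is bookkeeping with the definition of the exponent, whereas that bound is a real Diophantine statement requiring either the explicit arithmetic of Euler's continued fraction (to control the growth of the $a_k$ and $q_k$) or an explicit Pad\'e/Hermite construction supplying a lower bound on the approximation error.
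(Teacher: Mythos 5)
Your proposal is correct and follows essentially the same route as the paper: both reduce to $e^{1/n}$ via rational-multiple and reciprocal invariance, and both obtain the upper bound $\mu\le 2$ from Euler's explicit continued fraction of $e^{1/n}$ together with the at-most-linear growth of the partial quotients against the exponential growth of the convergent denominators $q_k$. Your use of the formula $\mu(\alpha)=2+\limsup_k \frac{\log a_{k+1}}{\log q_k}$ is just a repackaging of the paper's direct manipulation of the inequalities $\frac{1}{q_k(q_k+q_{k+1})}<\left|\alpha-\frac{p_k}{q_k}\right|<\frac{1}{q_kq_{k+1}}$.
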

\begin{proof}
It obviously suffices to consider $s=1$.
First, we consider the case $a\in\N\setminus\{0\}$.

The continued fraction expansion $[a_0;a_1,a_2,\ldots]$
of $e^{1/a}$ was computed by Euler
(cf. \cite{Waldschmidt2008} page 10):
$$
\begin{array}{rclcl}
e&=&[2,1,2,1,1,4,1,1,6,1,1,\ldots]
&=&[2;\overline{1,2m,1}]_{m\in\N}
\\
\text{for $a\geq2,$}\ \ \
e^{1/a}&=&[1;a,1,1,3a,1,1,5a,\ldots]
&=&[\overline{1,(2m+1)a,1}]_{m\in\N}
\end{array}
$$
Let $p_n/q_n$ be its $n$-th convergent.
As a general result for all irrational numbers
(cf. \cite{khinchin1964} Theorems 6, 12 and 9 \& 13),
we have
\begin{eqnarray}\label{eq:qn1 over qn}
\dfrac{q_{n+1}}{q_n}&=&[a_{n+1};a_n,\ldots,a_1]
\ <\ a_{n+1}+1
\\\label{eq:qn 2 to the n}
q_n&\leq&2^{(n-1/2)}
\\\label{eq:rat approx}
\dfrac{1}{q_n\,(q_n+q_{n+1})} 
&<&\left| e^{1/a} - \dfrac{p_n}{q_n}\right|
\ <\ \dfrac{1}{q_n\,q_{n+1}}
\end{eqnarray}
The above Euler formulas show that,
the continued fraction expansion $e^{1/a}$ is such that,
$a_{n+1}\leq n-1$ for $n\geq3$.
Thanks to \eqref{eq:qn1 over qn}, we get $q_{n+1}\leq n\,q_n$.
Using \eqref{eq:qn 2 to the n}, for any $\varepsilon>0$, 
this yields
$q_n\,(q_n+q_{n+1})\leq (n+1)\,q_n^2Ê\leq q_n^{2+\varepsilon}$
for $n$ large enough.
Reporting in \eqref{eq:rat approx}, we finally have 
$$
\dfrac{1}{q_n^{2+\varepsilon}}
<\left| e^{1/a} - \dfrac{p_n}{q_n}\right|
<\dfrac{1}{q_n^2}
$$
which proves that the irrationality exponent of $e^{1/a}$
is $2$.
The argument also applies to $e^{-1/a}=1/e^{1/a}$
since its continued fraction expansion 
is $[0;a_0,a_1,a_2,\ldots]$.
\end{proof}
\noindent
{\bf Open problem.}
Does the set $X_a$ of Theorem~\ref{thm:factorial uniformly close} contain numbers with irrationality measure other than $2$?
\smallskip

We can only prove that it misses a subfamily of Liouville numbers.
\begin{definition}
Let $\theta:\N\to\N$ be such that,
for all $q\in\N$, the map
$n\mapsto\theta(q,n)$ goes to infinity.
A real number $\alpha$ is $\theta$-Liouville if, for all $n\in\N$,
there exists a rational number $p/q$ such that
\begin{equation}\label{eq:theta liouville}
\left|\alpha - \dfrac{p}{q}\right| < \dfrac{1}{\theta(q,n)}\ .
\end{equation}
\end{definition}
\begin{theorem}\label{thm:factorial uniformly close liouville}
Let $\expp:\N\to\N$ and $\theta:\N\to\N$
be such that $\expp(x)=x^{x^x}$ and $\theta(q,n)=\expp(q\,n)$.
\\
1. All $\theta$-Liouville numbers are Liouville numbers.
\\
2. For $a\in\Z\setminus\{0,1\}$,
the set $X_a$ misses all $\theta$-Liouville numbers.
\end{theorem}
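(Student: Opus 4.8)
The plan is to show directly that the irrationality exponent of a $\theta$-Liouville number $\alpha$ is infinite. First I would record the crude bound $\theta(q,n)=\expp(qn)=(qn)^{(qn)^{qn}}\geq q^{(qn)^{qn}}\geq q^{n}$, valid for $q\geq1$, $n\geq1$ because $(qn)^{qn}\geq n$. Fix an exponent $\mu$; for every $n\geq\mu$ the $\theta$-Liouville hypothesis yields a (proper) approximation $p/q$ with $0<|\alpha-p/q|<1/\theta(q,n)\leq 1/q^{n}\leq 1/q^{\mu}$. I would then note that $\alpha$ is irrational: were $\alpha=c/d$ in lowest terms, a proper approximation would satisfy $|\alpha-p/q|\geq 1/(dq)$, forcing $\expp(qn)<dq$, which is absurd once $n>\sqrt d$ since $\expp(qn)\geq(qn)^{qn}\geq q\,n^2>dq$. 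Finally, the genuine error $1/\expp(q_n n)\leq 1/\expp(n)$ tends to $0$, so the approximations converge to $\alpha$; being irrational, $\alpha$ equals none of them, hence they realize infinitely many distinct fractions, all obeying $|\alpha-p/q|<1/q^{\mu}$. As $\mu$ is arbitrary, $\alpha$ is Liouville.

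\textbf{Part 2.} Here I would mimic the divisibility argument of Theorem~\ref{thm:factorial uniformly close}(2), the one new ingredient being that $\alpha$ is irrational and so must be replaced by a sufficiently sharp rational. Assume for contradiction that a $\theta$-Liouville $\alpha$ lies in $X_a$; replacing $\alpha,f$ by $-\alpha,-f$ if needed, I may take $\alpha>0$, and fix $f:\N\to\Z$ with integral difference ratios and $M$ with $|f(x)-\alpha a^x x!|\leq M$ for all $x$. Next I would choose $n$ very large (how large is the crux, see below), extract a proper approximation $p/q$ with $q\geq2$ and $|\alpha-p/q|<1/\expp(qn)$, and set $y=2qM$, $x=p\,a^y y!+y+1$ exactly as before. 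The elementary facts carry over verbatim: $y$ is even so $a^y>0$; every divisor of $q$ divides $y!$ hence $x-y-1=p\,a^y y!$, so $\gcd(q,x-y)=1$; as $q,x-y\leq x$ are coprime, $q(x-y)$ divides $x!$, hence $p\,a^x x!$. Integral difference ratios give $q(x-y)\mid q(f(x)-f(y))$.

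Writing $q f(x)=p\,a^x x!+E_x$ and $q f(y)=p\,a^y y!+E_y$ with $E_x,E_y\in\Z$, the key estimate is
\[
|E_x|\;\leq\; qM+q\,|\alpha-p/q|\,|a|^x x!\;<\; qM+\frac{q\,|a|^x x!}{\expp(qn)},
\]
and likewise for $E_y$. Subtracting, $q(x-y)$ divides $p\,a^x x!-q(f(x)-f(y))=p\,a^y y!+\eta$ with $\eta=E_y-E_x\in\Z$. The first terms contribute $2qM=y$, and the dangerous second terms amount to $q(|a|^x x!+|a|^y y!)/\expp(qn)$, which I will arrange to be $<1$, so that $|\eta|\leq 2qM=y$. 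Then $0<p\,a^y y!-y\leq p\,a^y y!+\eta\leq p\,a^y y!+y<2\,p\,a^y y!<q(p\,a^y y!+1)=q(x-y)$, so $p\,a^y y!+\eta$ is a nonzero positive integer divisible by $q(x-y)$ yet strictly smaller than it: impossible. Hence $\alpha\notin X_a$.

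\textbf{The main obstacle} is making the error $q\,|a|^x x!/\expp(qn)$ smaller than $1$ by choosing $n$ large \emph{uniformly in the denominator $q=q_n$} that the approximation hands us, since we do not control $q_n$ directly. This is exactly where the tower growth of $\expp$ and, decisively, the factor $n$ inside $\theta(q,n)=\expp(qn)$ are needed. Taking logarithms twice, $\log\log\expp(qn)\approx qn\log(qn)$, whereas $x\leq 2\,p\,a^y y!$ with $y=2qM$ gives $\log x\leq C\,qM\log(qM)$ and hence $\log\log(q\,|a|^x x!)\approx\log x\leq C\,qM\log(qM)$ for a constant $C$ depending only on $a$ and $\alpha$. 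The required inequality thus reduces to $n\log(qn)>C'M\log(qM)$, and since $\log(qn)\geq\log(qM)$ as soon as $n\geq M$, it suffices to take $n$ beyond a bound depending only on $M$ (and $a,\alpha$) but \emph{not} on $q$. Choosing $n$ past this bound and past the threshold that forces $q_n\geq2$ completes the argument.
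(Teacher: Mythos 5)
Your proposal is correct and follows essentially the same route as the paper: Part 1 via the bound $\theta(q,n)\geq q^n$, and Part 2 by rerunning the divisibility argument of Theorem~\ref{thm:factorial uniformly close}(2) with $y=2qM$, $x=p\,a^y y!+y+1$, using the tower growth of $\expp$ and the factor $n$ in $\theta(q,n)=\expp(qn)$ to make the approximation error beat $|a|^x x!$ uniformly in the denominator $q$. Your write-up is in fact more careful than the paper's on several points it leaves implicit (properness of the approximations and irrationality of $\theta$-Liouville numbers in Part 1, the threshold forcing $q\geq2$, and the sign of $a^y$), but these are refinements rather than a different method.
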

\begin{remark}
Recall (cf. \cite{khinchin1964} Theorem 22,
or \cite{Shidlovskii1989} Theorem 1 p.12)
that, for any $\theta$, there exist $\theta$-Liouville real numbers.
For the above $\theta$, an example is
$\sum_{n\in\N}\dfrac{1}{a_n}$
where $a_0=2$ and $a_{n+1}=2^{a_n+\exp3(n\,a_n)}$.
\end{remark}
\begin{proof}[Proof of Theorem~\ref{thm:factorial uniformly close}.]
1. Obvious since $\theta(q,n)\geq q^n$.
\\
2. Let $\alpha$ be $\theta$-Liouville.
With no loss of generality, we can assume $\alpha>0$.
Let $C\in\N$ be such that $\alpha< C$. Then
\begin{equation}\label{eq:liouville}
\forall n\in\N\ \ \exists p\in\Z \ \ \exists q\in\N\setminus\{0\}
\ \ \left(\left|\alpha - \dfrac{p}{q}\right| 
< \dfrac{1}{q\,\expp(nq)} \text{ and }p<Aq\right)\ .
\end{equation}
By way of contradiction, suppose $\alpha$ is in $X_a$
and let $f:\N\to\Z$ and $M\in\N$, $M\geq1$, be such that
$f$ has integral difference ratios and
$\left|f(x)-\alpha\,a^x\;x!\right|< M$ for all $x\in\N$.
We argue in a way similar to that in the proof of
Theorem~\ref{thm:factorial uniformly close}.

Choose $y=5qM+C$ and $x=p\,a^y\;y!+y+1$.
Observe that, for $n$ large enough, we have
\begin{equation}\label{eq:for n large}
a^x\;x!+a^y\;y! \, \leq qM\,\expp(nq)
\end{equation}
In fact, by Stirling formula, there exist $B,C$
(depending only on $C$ and $M$) such that, for $n>B$,
$$
x\leq (Aq)^{Aq}
\quad,\quad
a^x\;x!+a^y\;y!\leq (Bq)^{(Bq)^{Bq}}\ <\ qM\,\expp(nq)
$$
Choose such an $n\in\N$ and let $p,q$ be as in \eqref{eq:liouville}.
Then, for all $z\in\N$,
\begin{equation}\label{eq:q fx}
q\;f(z)\ =\ q\,\alpha\,a^z\;z! +\varepsilon\,q\, M 
\ =\ p\,a^z\;z! +\delta\,\dfrac{a^z\;z!}{\expp(nq)}
+\varepsilon\,q\, M
\end{equation}
where $|\varepsilon|,|\delta| <1$.
Arguing as for the case $\alpha$ rational,
we see that $q\,(x-y)$ divides $p\,a^x\;x! - q\, (f(x)-f(y))$.
Thus, to get a contradiction, it suffices to show 
\begin{equation}\label{eq:to be proved}
0\ <\ p\,a^x\;x! - q\, (f(x)-f(y)) \ <\ x-1\ .
\end{equation}
Now, \eqref{eq:q fx} insures that
\begin{equation}
p\,a^x\;x! - q\, (f(x)-f(y)) 
\;=\; p\,a^y\;y! + \ell_{x,y} +
\Delta\,\dfrac{a^x\;x! + a^y\;y!}{\expp(nq)}
\end{equation}
Using \eqref{eq:for n large}
and inequalities $|\ell_{x,y}| <2qM$, and $|\Delta|<1$,
we see that 
\begin{equation*}
\left|\ell_{x,y} 
+ \Delta\,\dfrac{a^x\;x! + a^y\;y!}{\expp(nq)}\right| \leq y
\end{equation*}
hence
$$
0\ <\ p\,a^y\;y!- y
\ <\ p\,a^y\;y!+\ell_{x,y} 
+ \Delta\,\dfrac{a^x\;x! + a^y\;y!}{\expp(nq)}
\ <\ p\,a^y\;y!+y\ =\ x-1\ .
$$
This gives the wanted inequality and concludes the proof.
\end{proof}


\end{document}